\numberwithin{equation}{section}
\let\a=\alpha   \let\g=\gamma  \let\d=\delta \let\e=\varepsilon
     \let\th=\theta \let\k=\kappa \let\l=\lambda
\let\m=\mu    \let\n=\nu         \let\p=\pi    \let\r=\rho
\let\s=\sigma \let\t=\tau    \let\c=\chi
   \let\o=\omega
\let\G=\Gamma \let\D=\Delta  \let\L=\Lambda
\newcommand{\EE}{{\mathcal E}}
\newcommand{\VV}{{\mathcal V}}
\newcommand{\II}{{\mathcal I}}
\newcommand{\NN}{{\mathcal N}}
\newcommand{\WW}{{\mathcal W}}
\newtheorem{Theorem}{Theorem}
\newtheorem{Remark}{Remark}
\newtheorem{Lemma}{Lemma}
\newtheorem{Proposition}{Proposition}
\def\nn{\nonumber}
\def\\{\hfill\break}
\def\={:=}
\let\io=\infty
\def\media#1{{\langle#1\rangle}}
\let\dpr=\partial
\def\tende#1{\,\vtop{\ialign{##\crcr\rightarrowfill\crcr\noalign{\kern-1pt
    \nointerlineskip} \hskip3.pt${\scriptstyle #1}$\hskip3.pt\crcr}}\,}
\def\otto{\,{\kern-1.truept\leftarrow\kern-5.truept\to\kern-1.truept}\,}
\def\to{\rightarrow}
\def\qed{\hfill\raise1pt\hbox{\vrule height5pt width5pt depth0pt}}
\def\lis{\overline}
\def\be{\begin{equation}}
\def\ee{\end{equation}}
\def\bea{\begin{eqnarray}}
\def\eea{\end{eqnarray}}
\def\nn{\nonumber}
\def\pref#1{(\ref{#1})}
\def\lb{\label}
\def\openone{{\mathbbm{1}}}
\def\bml{\begin{multline}}
\def\eml{\end{multline}}
\tikzset{
point/.style={circle,fill=black,inner sep=1pt},
vertex/.style={circle,fill=black,inner sep=1.5pt},   
bvertex/.style={circle,fill=black,inner sep=2.8pt},
Bvertex/.style={circle,fill=black,inner sep=4pt}, 
specialEP/.style={rectangle,fill=white,draw,inner sep=3pt},  
whitevex/.style={circle,fill=white,draw, inner sep=2pt},
linelabel/.style={sloped,above,very near start, inner sep=1pt,execute at begin node=$\scriptstyle,execute at end node=$},
baseline=(current  bounding  box.center),doubled/.style={double distance= 1pt,line width=1.5pt},
th/.style={line width=0.5 pt, gray},  
med/.style={line width=1 pt}  
}
\if \tikzset{vertex/.style={circle,fill=black,inner sep=2pt},
bigvertex/.style={circle,fill=black,inner sep=4pt},
specialEP/.style={rectangle,fill=white,draw,inner sep=3pt},
nuEP/.style={circle,fill=white,draw, inner sep=2pt},
linelabel/.style={sloped,above,very near start, inner sep=1pt,execute at begin node=$\scriptstyle,execute at end node=$},
baseline=(current  bounding  box.center),doubled/.style={double distance= 1pt,line width=1.5pt}
med/.style={line width=1 pt}  
}\fi
\begin{document}
\title{Haldane relation for interacting dimers}
\author{Alessandro Giuliani}
\address{Dipartimento di Matematica e Fisica Universit\`a di Roma Tre \\ \small{
L.go S. L. Murialdo 1, 00146 Roma, Italy}}
\email{giuliani@mat.uniroma3.it}
\author{Vieri Mastropietro}
\address{Dipartimento di Matematica, Universit\`a di Milano \\
  \small{Via Saldini, 50, I-20133 Milano, ITALY }}
\email{vieri.mastropietro@unimi.it}
\author{Fabio Lucio Toninelli}
\address{Universit\'e de Lyon, CNRS, Institut Camille Jordan, Universit\'e Claude Bernard Lyon 1\\
\small{43 bd du 11 novembre 1918,
69622 Villeurbanne cedex, France}}
\email{toninelli@math.univ-lyon1.fr}

\begin{abstract} 
We consider a model of weakly interacting, close-packed, dimers on the two-dimensional square lattice. In a previous paper, we 
computed both the multipoint dimer correlations, which display non-trivial critical exponents, continuously varying with the interaction strength; 
and the height fluctuations, which, after proper coarse graining and rescaling, converge to the massless Gaussian field with a suitable 
interaction-dependent pre-factor (`amplitude'). 
In this paper, we prove the identity between the critical exponent of the two-point dimer correlation and the amplitude of this massless Gaussian field.
This identity is the restatement, in the context of interacting dimers, of one of the 
Haldane universality relations, part of his Luttinger liquid conjecture, originally formulated in the context of one-dimensional interacting Fermi systems. Its validity is a strong confirmation of the effective
massless Gaussian field description of the interacting dimer model, which was guessed on the basis of formal bosonization arguments. 
We also conjecture that a certain discrete curve defined at the lattice level
via the Temperley bijection converges in the scaling limit to an SLE$_\k$
process, with $\k$ depending non-trivially on the interaction and related
in a simple way to the amplitude of the limiting Gaussian field.
\end{abstract}

\maketitle

\section{Introduction}

In recent years, there has been an increasing interest of the mathematical community on the conformal invariance properties 
of two-dimensional (2D) statistical systems at the critical point, and on their connections with the massless Gaussian field. The introduction of 
novel techniques, ranging from discrete holomorphicity \cite{Ca,Ke1,Ke2,Sm10} to Schramm-Loewner Evolution (SLE) \cite{Law} and percolation techniques \cite{We}, finally allowed, after 
more than 50 years of intense research, to fully characterize the structure and the conformal invariance of dimer \cite{Du_dimers,Ke} and Ising models \cite{CHI,HS}, as well as to rigorously 
confirm some predictions, based on Conformal Field Theory (CFT) arguments, concerning crossing probabilities in critical percolation \cite{Sm_p}. 
This exciting advances are still ongoing and, as they develop, 
they are revealing a closer and closer connection between the lattice integrability properties of dimer and Ising models, 
and certain CFT structures and objects, such as the Virasoro algebra, the Operator Product Expansion, and the stress-energy tensor \cite{HVK}.

A big limitation of these methods is that they are mostly limited to models at the free fermion point, and it is a major challenge to develop techniques for rigorously controlling 
the scaling limit of general interacting, non-integrable, theories, and for proving their conformal covariance properties.
A standard method used in the Quantum Field Theory (QFT) and condensed matter communities for characterizing {\it quantitatively} the scaling limit 
of 2D interacting theories at the critical point is the so-called `Coulomb gas formalism', which is based on an effective description of several 
2D critical theories in terms of a massless Gaussian field (`bosonization method') \cite{LP,Nei}. In this approach, the conformal invariance of the interacting theory translates into the well-known 
conformal invariance of the Gaussian model. Most of the applications of this method are quite heuristic, and we are still missing a full comprehension of the emergence of the 
massless Gaussian field in the scaling limit. 

In this paper, we prove a rigorous instance of this emergent correspondence in the context of interacting dimer models, 
giving a strong justification of the use of the Coulomb gas description outside the free fermion point. Our new result complements our previous findings in \cite{GMTlungo},
where we showed that the critical exponent of the dimer-dimer correlations is an analytic function of the interaction strength, 
and proved the convergence, at large distances, of the height function to the massless Gaussian field,
with a suitable interaction-dependent pre-factor. In short, our new result is a rigorous proof of the identity between such pre-factor (the `stiffness' or `amplitude' of the massless Gaussian field)
and the dimer-dimer critical exponent. This identity, very surprising at first sight, can be guessed on the basis of the aforementioned representation of the interacting dimer model 
in terms of a massless Gaussian field. It is an instance of the so-called Haldane relations, originally formulated in the context of one-dimensional interacting Fermi systems, 
as parts of the famous Haldane's Luttinger liquid picture \cite{H,Ha2}. 

Before formulating the model and the result precisely (see Section \ref{sec:model}), we first 
make a historical digression on the concept of Luttinger liquid universality class and on the bosonization method, 
which may be useful for clarifying the motivations behind the emergence of an effective `massless Gaussian  Field'  behavior and the Haldane relation. 
We also explain the connection between these concepts and 2D dimer models.

\subsection{Luttinger liquids.} 
Kadanoff \cite{K77} and Haldane \cite{H,Ha2} (see also \cite{dN81,KB79,KW71,LP,PB81}) {\it conjectured} the existence of a {\it universality class}, called {\it 8-vertex universality class} or {\it Luttinger liquids}, of models whose low energy behavior is described by a 2D massless
Gaussian field (i.e., a free massless boson field) $\phi$ with covariance 
\begin{equation}
\label{cov:GFF}
\mathbb E(\phi(x)\phi(y))=-\frac{A}{2\pi^2}\log|x-y|,
\end{equation}
for a suitable constant $A$, which is model dependent. In particular,
the correlation functions of models in this class are the same, asymptotically at large distances, as those of suitable functions
of $\phi$. Models in this class include two-dimensional classical systems, like:  
the 6 and 8-vertex models, the Ashkin-Teller model, and interacting dimer models at close-packing; 
and one-dimensional quantum systems, like: 
Heisenberg spin chains, the Luttinger model and the spinless Hubbard models (see  \cite{Ba_book,H,K77,LP,PB81}). This conjecture implies, 
in particular, that the critical exponents are connected by {\it scaling relations} such that, once a single exponent
is known, all the others are determined, and that there are simple
relations between the critical exponents and the {\it amplitudes} (i.e., the pre-factors in front of the power-law decay)
of suitable correlations. These predictions have been checked mostly in exactly solvable models, but they 
are expected to hold also for non solvable ones.

\subsection{Bosonization.}
The models in the Luttinger liquid universality class that we mentioned describe interacting spins or
particles: therefore, the mapping of all these systems into such a simple model as the massless Gaussian field
is at first sight surprising. The simplest way to understand this correspondence is via 
the concept of {\it bosonization}, which is a
crucial notion in 2D QFT and in condensed matter physics. The starting point is the
observation that all the lattice models in the Luttinger liquid universality class (vertex models, spin chains, dimers, Ashkin-Teller)
admit an exact description in terms of lattice fermions, i.e., a family of Grassmann
variables $\psi_{x,\o}^\sigma$, indexed by lattice vertices $x=(x_1,x_2)$, as well as by two 
indices $\sigma,\omega=\pm$. For instance, the $8$-vertex and the Ashkin-Teller models can be represented as a pair of 2D Ising 
models coupled via a quartic interaction \cite{Ba_book}, and the fermionic 
representation is inherited from the Pfaffian description of the 2D Ising
model. For special values of the model parameters (free-fermion
point) such fermions are non-interacting and then the system is exactly solvable. However, for
generic values of the parameters the fermions are interacting, i.e., their
action contains terms at least quartic in the Grassmann variables, so that the partition function and the correlations are 
given by non-Gaussian Grassmann integrals.
If one performs a formal continuum limit, such fermions becomes
Dirac fermions in $d=1+1$ dimensions, which are massless at
criticality.  

There is a well known correspondence in $d=1+1$ Quantum Field Theory
between bosons and fermions \cite{C}. Take non-interacting
massless Dirac fermions $\psi^\sigma_{x,\o}$, $\sigma,\o=\pm$, $x\in\mathbb R^2$, with
propagator that is diagonal in the index $\omega$, anti-diagonal in $\sigma$, translation-invariant in $x$ and such that 
\be
\label{propintro}
\langle \psi^{-}_{x,\omega}\psi^{+}_{0,\omega}\rangle= \frac{C_\o}{x_1+ i\omega x_2},\ee 
with $C_\omega$ constants such
that\footnote{A standard choice is $C_\omega=1/(2\pi)$. However, for the interacting dimer model 
we consider below a natural choice of coordinates leads to 
   $C_\omega$ depending on $\o$. We could reduce to the standard
  case via a suitable rotation of space coordinates, but then many formulas would look more
  cumbersome.} $C_-=C_+^*$. Then bosonization consists in the following two
identities (see e.g. \cite{D}):
\begin{itemize}
\item the ``fermionic
density'' $\psi^+_{x,\o} \psi^-_{x,\o}$ has the same multi-point correlations as the derivative of a boson field:
\be \psi^+_{x,\o} \psi^-_{x,\o}\quad \otto \quad 2i \pi C_\o \partial_\o\phi\,,\label{dphi}\ee where $\partial_\o:=\frac12(\partial_{x_1}- i\o \partial_{x_2})$, and 
$\phi$ is the massless Gaussian field with covariance \eqref{cov:GFF} with
$A=1$. In particular, correlations of $\psi^+_{x,\o} \psi^-_{x,\o}$ of odd order  and truncated correlations of order larger than $2$ vanish.
\item the ``fermionic mass'' $\psi^+_{x,\o}
\psi^-_{x,-\o}$ has the same correlations as a normal-order exponential of the boson field: 
\be \psi^+_{x,\o}
  \psi^-_{x,-\o} \quad \otto \quad C_\o:e^{2\pi i\omega \phi(x)}:\,,\label{eiphi}\ee where $:\dots:$ denotes Wick ordering, see
e.g. \cite{D}.
\end{itemize}

Remarkably, a similar correspondence remains valid \cite{C} also for {\it interacting}
massless Dirac fermions. In particular, in the presence of a local, 
delta, interaction (Thirring model) the multi-point correlations of the fermionic density and fermionic mass operators 
are known explicitly \cite{J,Kl,Th} and the bosonization identities are still true, provided  
that the left side of \eqref{eiphi} is divided by a suitable 
renormalization constant, diverging in the ultraviolet limit. Moreover, the pre-factor $A$ 
in \eqref{cov:GFF} is changed into an interaction-dependent constant, $A=A(\l)\neq 1$, where $\l$ is the strength of the 
delta interaction. Note that the density and mass operators in the Thirring model are naturally defined up to 
multiplicative renormalization constants, whose specific choices  
are part of the definition of the model. In this sense, the `amplitudes' of the 
fermionic operators (which play a role in the Haldane relations we are interested in)
are somewhat ambiguous: therefore, it would be desirable to have at disposal a 
solvable model, similar to Thirring, but free of  ultraviolet divergences, to be used to test unambiguously the 
desired universality relations between amplitudes and critical exponents. 

The simplest such model is a model of interacting massless Dirac fermions, 
whose interaction is local, delta-like, only in one of the two directions, say horizontal (Luttinger model): 
also in this case the correlations can be computed exactly \cite{ML_lut} and, 
for a proper choice of the bare Fermi velocity (chosen in such a way that the interacting Fermi velocity is 1),
their asymptotic expression  at large distances reads: 
\be\sum_\o \langle \psi^{+}_{x,\omega}\psi^{-}_{x,\omega};
\psi^{+}_{0,\omega}\psi^{-}_{0,\omega}\rangle\simeq \frac{A}{2\pi^2} \frac{x_1^2-x_2^2}{|x|^4}\,,\label{eq:1.5}\ee
and 
\be \sum_\o \langle \psi^{+}_{x,\omega}\psi^{-}_{x,-\omega};
\psi^{+}_{0,-\omega}\psi^{-}_{0,\omega}\rangle\simeq \frac{B}{2\pi^2} \frac{1}{|x|^{2A}}\,\label{eq:1.6}\ee
where the semicolon indicates truncated expectation, and $A$ and $B$ are suitable interaction-dependent constants. 
Eq.\eqref{eq:1.5}-\eqref{eq:1.6} are the same that one would obtain by using \eqref{dphi}-\eqref{eiphi}, with $\phi$ normalized as 
in \eqref{cov:GFF} and $C_\o=1/(2\pi)$, provided the right side of \eqref{eiphi} is multiplied by $\sqrt{B}$. 
Note that the constant $A$ in \eqref{eq:1.5} is the same as the one appearing in the critical exponent in \eqref{eq:1.6}. The same identity
between the amplitude of the density-density correlations and the critical exponent of the mass-mass correlations
has been verified for other exactly solvable models in the same universality class, in particular for the XXZ spin chain \cite{H}. The conjecture is that 
it should remain valid also for non-exactly solvable models in the same universality class, including lattice models, for which 
an exact mapping into a massless Gaussian field is not possible. 

\subsection{Previous results.} In the absence of exact solutions or of bosonization identities, the computation of the asymptotic behavior of correlations 
at criticality, not to mention the verification of the Kadanoff-Haldane relations, is a major mathematical challenge. 
Constructive QFT and Renormalization Group (RG) techniques provide powerful mathematical tools 
to attack these problems. These methods allowed one of us \cite{M} to prove that the critical exponents of the 8-vertex and the Ashkin-Teller 
model, close to the free fermion point, are analytic functions of the interaction strength $\l$. Benfatto, Falco and Mastropietro \cite{BFM} 
later extended the analysis of \cite{M}, proving the validity of some of the Kadanoff relations between the critical exponents 
of several non-integrable models in the 8-vertex universality class, as well as two Haldane relations 
between exponents and amplitudes for a quantum spin chain \cite{BMun,BMdr}.
These results provide relations only between correlations of quantities that are
\emph{local} in the fermionic variables. An important open problem is to analyze observables that are, instead,
\emph{non-local}, such as the spin in coupled Ising layers and the height function in dimer models.  The analysis of these 
observables is already very non-trivial at the free-fermion point \cite{CHI,Du_dimers,Du_bo,Ke2,Ke}.

\subsection{Interacting dimers.} 
Motivated by these issues, in \cite{GMTlungo,GMTcorto} we considered a non-integrable dimer model at close packing on $\mathbb Z^2$, 
where dimers interact via a short-range potential of strength $\lambda$. We studied two natural observables: 
the dimer occupation variable $\openone_e$, indexed by edges $e$ of the square lattice, 
and the height function $h(\eta)$, indexed by faces $\eta$.
The model can be rewritten (see Section \ref{sec:intmod}
below) as a system of two-dimensional interacting lattice Dirac fermions with
propagator behaving at large distances as in \eqref{propintro} with
$C_\o=1/(2\pi(1-i \o))$ (see \eqref{eq:16bis}). When the coupling
parameter $\lambda$ is set to zero the fermion-fermion interaction
vanishes, corresponding to the Pfaffian nature of Kasteleyn's solution
of the non-interacting dimer model. The observable $\openone_e$
has a local expression $\mathcal I_e$ in the fermionic
representation, containing both a ``density term'' $\psi^+_{x,\o}\psi^-_{x,\o} $ and
a ``mass term'' $\psi^+_{x,\o}\psi^-_{x,-\o}$, the latter multiplied
by a prefactor that oscillates with distance. If, e.g., $e$ is horizontal, then  
(if $x=(x_1,x_2)$ is the coordinate of the edge $e$ and $(-1)^x=(-1)^{x_1+x_2}$)
\begin{eqnarray}
\label{peciona1}
\mathcal I_e=\pm\sum_\o[\psi^+_{x,\o}\psi^-_{x,\o}+(-1)^x\psi^+_{x,\o}\psi^-_{x,-\o}]+h.o.,
\end{eqnarray}
where the sign in front depends on the parity of the edge,
and $h.o.$ indicates subdominant terms\footnote{By `subdominant' here we mean contributions that produce faster decaying 
terms at large distances in the computation of $\media{\mathcal I_e; A_{e'}}$, where $A_{e'}$ is an observable localized around the bond $e'$, and $|e-e'|\gg 1$.}; 
see below for the full expression (see in particular Remark \ref{rmk:4}).
The height function $h(\eta)$ is defined as the sum over a lattice path 
$\eta_0,\eta_1,\dots,\eta_k=\eta$, from a reference face
$\eta_0$ to $\eta$, of $\s_e(\openone_e-1/4)$, with $\s_e$ a suitable sign (see \eqref{altezza} and following lines), and, therefore, by \eqref{peciona1}, 
it is non-local in the fermionic variables.

%


Using \eqref{peciona1}, one finds that the dimer-dimer correlation $\langle
\openone_e;\openone_{e'}\rangle_\lambda$ is the sum of two terms, 
the first corresponding to a density-density correlation, 
\be
\sum_\o \langle  \psi^+_{x,\o}\psi^-_{x,\o}; \psi^+_{x',\o}\psi^-_{x',\o}\rangle\label{eq:1.8}
\ee
and the second to a mass-mass correlation with oscillating prefactor:
\be
\label{eq:oscill1}
(-1)^{x+x'}\sum_\o\langle  \psi^+_{x,\o}\psi^-_{x,-\o};
\psi^+_{x',-\o}\psi^-_{x',\o}\rangle.\ee
By heuristically using the bosonization identities \eqref{dphi}-\eqref{eiphi}, 
one guesses that \eqref{eq:1.8} should behave asymptotically as
\be
\label{eq:boso1}
\frac{A}{2\p^2}{\rm Re}\left(\frac{1}{(1-i)^2(z-z')^2}\right),
\ee
with $z=x_1+i x_2,z'=x_1'+i x_2'$,
while \eqref{eq:oscill1} should be asymptotically proportional to 
\begin{eqnarray}
\label{eq:boso2}
(-1)^{x+x'}\frac1{|z-z'|^{2\n}},
\end{eqnarray}
with $\n=A$. Similarly, by using the fermionic representation of the height function and the first bosonization identity 
\eqref{dphi}, one predicts that $h(\eta)$ tends to a massless Gaussian field with covariance \eqref{cov:GFF}, and the same $A$ as in 
\eqref{eq:boso1}.

In \cite{GMTlungo,GMTcorto}, we developed a Renormalization Group
analysis for the interacting dimer model, which allowed us to prove
most of these predictions, in particular the convergence of the height
field to the massless Gaussian field, the validity of \eqref{eq:boso1}
and \eqref{eq:boso2} and their multipoint generalization.  However,
the important question of the identity $A=\nu$ was not addressed, and
we prove it here, by combining the ideas of \cite{GMTlungo,GMTcorto},
together with a comparison of exact lattice Ward identities with those
of a relativistic reference model, in the spirit of \cite{BMun,BMdr}. The
remarkable identity $A=\nu$ is a strong confirmation of the validity of the
massless Gaussian field description of the interacting dimer model. 
It is a restatement of the Haldane relation between the `compressibility' and the `density-density critical index', 
in a context different from the one originally proposed by Haldane, who 
considered interacting fermions in one dimension and quantum spin chains: in this sense, it is 
the first example of such a universality relation in a classical statistical mechanics model. 

\subsection{Summary} 

The rest of the paper is organized as follows. 
In Section \ref{sec:model} we define the model precisely and state our main results. We also comment about possible generalizations to related models, and 
about the possible emergence, in the scaling limit, of an SLE$_\k$ process with $\l$-dependent diffusion constant $\k$, underlying the interacting dimer model. The following sections contain the technical 
aspects of the proof of our main theorem, in particular: \\
-- in Section \ref{section3} we discuss the Grassmannian formulation of the model; \\
-- in Section \ref{sec1} we derive a lattice Ward Identity for the dimer model, which plays an important role in the 
proof of the Haldane relation; \\
-- in Section \ref{sec:relativistico} we introduce a relativistic continuum model (the `reference model'), which plays the role of 
the infrared fixed point of the dimer model, and review some of the properties of its correlations; \\
-- in Section \ref{sec8} we put together the ingredients of the previous sections and complete the proof of the Haldane relation;
in Section \ref{sec9} we compare the notations and conventions of the present paper with those of \cite{GMTlungo}.

In the appendices we collect a few more technical issues: in Appendix \ref{app:residui} we discuss the structure of the singularities 
of the Fourier transform of the dimer-dimer correlations; in Appendix \ref{app:primordine} we verify the Haldane relation at first order in perturbation theory;
in Appendix \ref{app3} we review the derivation of the Ward Identities for the relativistic reference model, as well as the exact computation of its two- and four-point functions. 

\section{Model and result}
\label{sec:model}

Let us now define our model more precisely. 
We consider the bipartite graph $\mathbb Z^2$, and we decompose it into two sublattices (black/white sublattices $V_B/V_W$) such that 
all neighbors of a vertex $v\in V_B$ belong to $V_W$ and viceversa.
Each vertex is assigned a coordinate $x=(x_1,x_2)$ and 
a white vertex has the same coordinates as the black
vertex just at its left. Both $V_B$ and $V_W$ will be thought of as Bravais lattices with basis vectors $\vec e_1$ and $\vec e_2$, 
where $\vec e_1$ is the vector of length
$\sqrt 2$ and angle $-\pi/4$ w.r.t. the horizontal axis, while $\vec e_2$ is the one of length
$\sqrt 2$ and angle $+\pi/4$. See Figure \ref{fig:sub1}.

\begin{figure}
\centering
  \includegraphics[width=.5\linewidth]{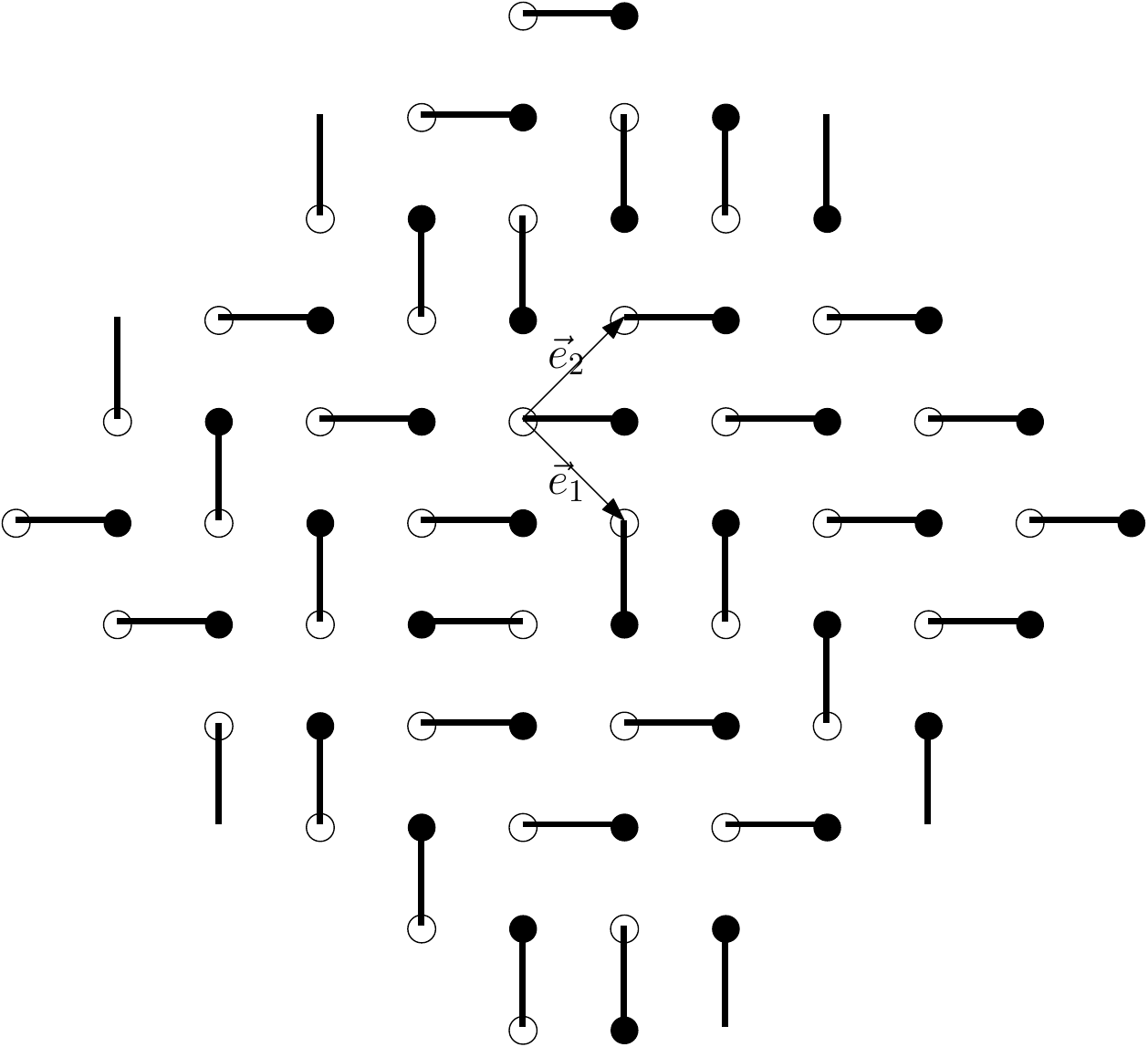}
  \caption{The sublattices $V_B$ and $V_W$ of black and white sites, and the basis vectors $\vec e_1, \vec e_2$, 
  oriented at an angle $\mp \p/4$ with respect to the horizontal axis. The figure also shows an admissible close-packed dimer configuration, 
 periodic of period $L=6$ in the directions $\vec e_1,\vec e_2$.}
  \label{fig:sub1}
\end{figure}

Given an edge $e$, we let $b(e),w(e)$ denote the black/white vertex of
$e$.  Edges are of four different types $r=1,2,3,4$: we say that $e$
is of type $r$ if the vector from $b(e)$ to $w(e)$ forms an
anti-clockwise angle $(r-1) \pi/2$ with respect to the horizontal
axis. An edge $e$ is unambiguously identified by its type $r(e)$ and by the coordinates $x(e)=(x_1(e),x_2(e))$ of its black site.

We consider the dimer model in a periodic box: we let $\mathbb T_L$ be
the graph $\mathbb Z^2$ periodized (with period $L$) in both directions $\vec
e_1,\vec e_2$. See Figure \ref{fig:sub1}. With abuse of notation, we still denote by $V_W,V_B$ the
set of white/black sites of $\mathbb T_L$, without making the $L$
dependence explicit. Black/white sites are therefore indexed by
coordinates $x\in \L=\L_L=\{(x_1,x_2), 1\le x_i\le L\}$.

The partition function  of the interacting dimer model we study is
\begin{eqnarray}
  \label{eq:Z}
  Z_{L}(\lambda,m)=\sum_{M\in\mathcal M_L} (\prod_{e\in M} t_e^{(m)})e^{\lambda H_L(M)}=:\sum_{M\in\mathcal M_L}p_{\Lambda;\lambda,m}(M)
\end{eqnarray}
where:
\begin{itemize}
\item $\mathcal M_L$ is the set of perfect matchings of $\mathbb T_L$;
\item $H_L(M)$ is the number of square plaquettes of $\mathbb T_L$ containing two
parallel dimers;
\item $\lambda$ is a real parameter (coupling constant);
\item $m>0$ and
\begin{eqnarray}
  \label{eq:tem}
  t_e^{(m)}=1+m(-1)^{x_1(e)+x_2(e)}\left(\delta_{r(e)=1}-\delta_{r(e)=3}\right).
\end{eqnarray}

\end{itemize}
As discussed in \cite{GMTlungo}, the parameter $m\ge0$ (the mass) plays the role of an infrared cut-off, to be eventually sent to zero: it has the effect that
correlations decay exponentially with distance (uniformly in $L$) as
long as $m>0$, and it is sent to zero after the thermodynamic limit
$L\to\infty$.  This model, in the limit $m\to 0$, describes polar crystals \cite{HP} and it was recently
reconsidered in \cite{Alet,PLF} in connection with
quantum dimer models.

The Boltzmann-Gibbs measure associated with the model is denoted by $\langle\cdot\rangle_{\Lambda;\lambda,m}$: if $O(M)$ is a function of the dimer configuration, 
\begin{eqnarray}
  \label{eq:Om}
 \langle O\rangle_{\Lambda;\lambda,m}
  :=\frac1{Z_\Lambda(\lambda,m)}\sum_{M\in\mathcal M_L} p_{\Lambda;\lambda,m}(M)O(M).
\end{eqnarray}

\begin{Remark}
\label{rem:mc1}
This model is the same studied in \cite{GMTlungo,GMTcorto}. However, here we use a different
convention for the coordinates on the lattice, which respects the bipartite structure and turns out to be convenient for the derivation of the Ward Identities. 
In order to restate the results derived here in the notations of \cite{GMTlungo}, one needs to properly redefine the coordinates: with the conventions of 
\cite{GMTlungo,GMTcorto}, the black (resp. white) site of coordinate $x=(x_1,x_2)$ have coordinates $\tilde x(x)$ (resp. $\tilde x(x)+(1,0)$), where
\begin{eqnarray}
  \label{eq:tildex}
\tilde x(x)=(x_1+x_2,x_2-x_1).
\end{eqnarray}
\end{Remark}

As discussed in \cite{GMTlungo}, whenever $O$ is a bounded local function, the following limit exists:
\begin{eqnarray}
  \label{eq:limite}
  \langle O\rangle_{\lambda}:=\lim_{m\to0}\lim_{L\to\infty} \langle O\rangle_{\Lambda;\lambda,m} ,
\end{eqnarray}
and defines a translationally invariant infinite volume Gibbs state $\media{\cdot}_\l$. 

When $\lambda=0$ (non-interacting model) the model is well-known to be
exactly solvable via Kasteley's theory \cite{K1}, in the sense that $n$-point
correlations can be computed explicitly as determinants.
When both $\lambda$ and $m$ are zero
(non-interacting and massless model) the partition function reduces to the
cardinality of $\mathcal M_L$.


Before stating our main result, let us review briefly what was proven in \cite{GMTlungo}. 
Recall that the height function is defined by fixing it to zero at some reference face $\eta_0$, and by establishing that 
\begin{eqnarray}
\label{altezza}
h(\eta)-h(\xi)=\sum_{e\in C_{\xi\to\eta}}\sigma_e (\openone_e-1/4),  
\end{eqnarray}
with $C_{\xi\to\eta}$ any nearest-neighbor path from $\xi$ to $\eta$, the sum running over the edges crossed by the path,  $\openone_e$  the indicator function
that an edge of $\mathbb Z^2$ is occupied by a dimer and $\sigma_e$ being $+1$ or $-1$ according to whether the edge is traversed with the 
white vertex on the right or left.
The content of \cite[Th. 1 and 3]{GMTlungo} is that there exists $\lambda_0>0$ such that, if $|\lambda|<\lambda_0$ then the height field associated to the 
dimer configuration converges in distribution (in the limit $\lim_{m\to0}\lim_{L\to\infty}$) to a massless Gaussian field $\phi(\cdot)$ on the plane, with covariance 
\begin{eqnarray}
  \label{eq:varia}
\mathbb E(\phi(x) \phi(y))=-\frac{A(\lambda)}{2\pi^2}\log|x-y|,
\end{eqnarray}
where\footnote{The constant $A(\lambda)$ was called $K(\lambda)$ in \cite{GMTlungo}. Here we change notation, in order to avoid
confusion with the elements of the Kasteleyn matrix, which is traditionally denoted $K$.}  $A(\lambda)$ is an analytic function of $\lambda$ satisfying $A(0)=1$. Moreover the $n$-th cumulant of $h(\eta)-h(\xi)$, $n\ge 3$, is bounded uniformly in $\eta,\xi$. 

A crucial ingredient in the proof was a sharp asymptotic expression for
dimer-dimer correlations.   The rewriting 
of \cite[Th. 2]{GMTlungo} with the present convention is the following: There exist real analytic functions
  $B(\cdot),\nu(\cdot)$, defined in a neighborhood
  $|\lambda|\le \lambda_0$ of the origin, satisfying
  $B(0)=\nu(0)=1$, such that the following holds.  Let
  $e,e'$ be edges of type $r,r'$, with $b(e)=x=(x_1,x_2)\ne0$ and
  $b(e')=0$. Then, for $|\lambda|\le \lambda_0$,
  \begin{gather}
    \label{eq:28}
    \langle
    \openone_e;\openone_{e'}\rangle_\lambda=-\frac{A(\lambda)}{2\pi^2}{\rm Re}\left[
    \frac{e^{i\frac{\pi}2(r+r')}
    }{(1-i)^2(x_1+ix_2)^2}\right]
\\
+t_{r,r'}\frac{B(\lambda)}{4\pi^2}(-1)^{x_1+x_2}\frac1{|x|^{2\nu(\lambda)}}+R_{r,r'}(x)\nonumber
  \end{gather}
where 
\begin{eqnarray}
  \label{eq:29}
  |R_{r,r'}(x)|\le C_\theta(1+|x|)^{-2-\theta}
\end{eqnarray}
for some $1/2\le \theta<1$, $C_\theta>0$, 
and $t_{r,r'}$ is $1$ if $r=r'$, $0$ if the two edges are not parallel, and $-1$ if they are parallel but not of the same type. 
The function  $A(\cdot)$ is the same as in \eqref{eq:varia}. The first line of \eqref{eq:28} coincides with the right side of \eqref{eq:boso1} for $r=r'=1$ and $z'=0$.

Note that the large-scale behavior of the height field depends only on
$A(\cdot)$ and not on $B(\cdot),\nu(\cdot)$. Observe also that the decay at large distances
of the dimer-dimer correlation is controlled by the critical exponent
$\min(2,2\nu(\lambda))$.

The main result of the present work is an identity between the limit
variance $A(\l)$ of the height field and the dimer-dimer critical exponent $\nu(\l)$. 
\begin{Theorem} 
\label{th:k=k} There exists $\l_0$ such that, if $|\l|\le \l_0$, then \eqref{eq:28} holds with \be \boxed{A(\lambda)=\nu(\lambda)}\label{Hr}\ee
\end{Theorem}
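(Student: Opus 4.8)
The plan is to express both $A(\lambda)$ and $\nu(\lambda)$ in terms of the parameters of the relativistic reference model that governs the infrared behaviour of the dimer model, and then to show that, inside that model, the two coincide. By the fermionic representation \eqref{peciona1}, the amplitude $A$ is read off from the density--density part \eqref{eq:1.8} of the dimer--dimer correlation (equivalently, from the height variance \eqref{eq:varia}), while the exponent $\nu$ is read off from the mass--mass part \eqref{eq:oscill1}. The first step is to set up the Grassmann integral for the generating function and to identify, through the multiscale Renormalization Group analysis of \cite{GMTlungo}, the effective infrared theory: a continuum model of Dirac fermions $\psi^{\pm}_{x,\omega}$ with a local quartic (current--current) interaction, characterised by a renormalised coupling $\lambda_\infty=\lambda_\infty(\lambda)$, an interacting velocity normalised to $1$, a field renormalisation $Z$, and a vertex renormalisation for the density operator. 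The exponent $\nu$ is universal in the RG sense: it depends on $\lambda$ only through $\lambda_\infty$, so $\nu(\lambda)=\nu_{\rm ref}(\lambda_\infty)$, with $\nu_{\rm ref}$ the explicit function determined by the scaling dimension of the mass operator $\psi^+_{x,\omega}\psi^-_{x,-\omega}$ in the reference model.

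The amplitude $A$, by contrast, is not manifestly universal: the density--density correlation inherits, besides the universal reference-model value, the finite vertex renormalisation $\bar Z$ accumulated along the full RG flow, so a priori $A(\lambda)=\bar Z^2\,A_{\rm ref}(\lambda_\infty)$. The crucial tool for pinning $A$ down is the exact lattice Ward Identity derived in Section \ref{sec1} from the global $U(1)$ symmetry $\psi^{\pm}_{x,\omega}\to e^{\pm i\alpha}\psi^{\pm}_{x,\omega}$ of the Grassmann action, which reflects at the lattice level the hard-core constraint that each site is covered by exactly one dimer. This identity is \emph{non-anomalous}: it relates the lattice density vertex to a discrete derivative of the two-point function with coefficient exactly $1$, with no interaction-dependent correction. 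The reference model instead obeys an \emph{anomalous} Ward Identity, in which the analogous relation carries an interaction-dependent factor $(1-\tau)^{-1}$, with $\tau=\tau(\lambda_\infty)$ the chiral anomaly. Comparing the two in the scaling limit---coefficient $1$ on the lattice side, $(1-\tau)^{-1}$ on the continuum side---fixes the finite constant $\bar Z$ in terms of $\tau$, and hence expresses $A(\lambda)$ entirely through $\lambda_\infty$.

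Once both $A$ and $\nu$ are written as explicit functions of $\lambda_\infty$ alone, the final step (Section \ref{sec8}) is to verify, inside the reference model, the identity $A_{\rm ref}(\lambda_\infty)=\nu_{\rm ref}(\lambda_\infty)$. This is where the exact solvability of the reference model enters: its two- and four-point functions, together with its own Ward Identities (reviewed in Appendix \ref{app3}), give closed expressions for the density--density amplitude and the mass--mass exponent in terms of $\tau$, and the chiral Ward structure forces them to agree---the rigorous counterpart of the free-boson computation in which $\rho\leftrightarrow\partial\phi$ fixes the density amplitude to $A$ while the mass operator $\leftrightarrow e^{2\pi i\omega\phi}$ decays with exponent $2\nu=2A$. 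Since the map $\lambda\mapsto\lambda_\infty(\lambda)$ is the same for both quantities, the equality $A(\lambda)=\nu(\lambda)$ follows for all $|\lambda|\le\lambda_0$.

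The main obstacle is the control of the amplitude $A$: unlike the exponent $\nu$, it is sensitive to the full (ultraviolet as well as infrared) RG trajectory, so establishing that it reduces to a function of $\lambda_\infty$ only is the heart of the matter, and it is precisely the comparison between the exact lattice Ward Identity and the anomalous reference-model Ward Identity that converts the non-universal vertex renormalisation $\bar Z$ into the universal anomaly coefficient $\tau$. A secondary difficulty is that the lattice Ward Identity must be derived keeping track of the Schwinger-type correction terms generated by the interaction and by the lattice regularisation, which then have to be shown to be irrelevant in the scaling limit so that the clean comparison with the continuum identity is legitimate. A first-order perturbative check of $A=\nu$ (Appendix \ref{app:primordine}) provides a useful consistency test of the whole scheme.
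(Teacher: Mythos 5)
Your proposal follows essentially the same route as the paper's proof: fix the reference-model parameters so that its correlations asymptotically match those of the dimer model (Lemma \ref{lm1}), compare the exact, non-anomalous lattice Ward Identity \eqref{eq:WIF} with the anomalous chiral Ward Identity \eqref{h11} of the reference model to obtain $Z^{(1)}=(1-\tau)Z$, and then read off both $A(\lambda)=(1-\tau)/(1+\tau)$ and $\nu(\lambda)=(1-\tau)/(1+\tau)$ from the closed-form density-density and mass-mass correlations of the reference model. Two small corrections to your wording: the lattice identity is derived from \emph{local} gauge covariance ($\psi^\pm_x\to e^{i\alpha^\pm_x}\psi^\pm_x$, with independent, site-dependent phases on black and white vertices, reflecting the per-vertex dimer constraint), not from a single global $U(1)$ phase, which would only give charge conservation rather than the vertex identity; and that lattice identity is exact, with no Schwinger-type corrections to control --- the error terms that must be shown harmless live instead in the lattice-to-continuum matching \eqref{h10a}--\eqref{h10ab}.
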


This result confirms the predictions of the universality conjecture of 
Kadanoff and Haldane for this model, as discussed in the Introduction. 
It would be interesting to apply the methods of its proof to the computation of
other universality relations, such as the relation between the sub-leading corrections to the 
free energy of the interacting dimer model and the central charge \cite{A,BCN}, in the spirit of \cite{GMcc}.

\begin{Remark}[First order computation]
While the non-perturbative proof of Theorem \ref{th:k=k} (Sections
\ref{sec1} to \ref{sec9}) requires the use of lattice Ward
identities and a comparison with a continuum model, one can check directly 
\eqref{Hr} at low orders in perturbation theory. 
In Appendix \ref{app:primordine} we check the equality at first order in $\lambda$, via an
explicit computation of the lowest-order Feynman diagrams. Remarkably, even at lowest order,
this equality requires non-trivial cancellations between Feynman diagrams.
As a byproduct of Appendix \ref{app:primordine}, we find that 
\begin{eqnarray}
  \label{eq:Kkl}
  A(\lambda)=\nu(\lambda)=1-\frac4\pi\lambda +O(\lambda^2).
\end{eqnarray}
In view of \eqref{cov:GFF}, this shows that when the dimer-dimer
interaction $\lambda H_L$ is attractive ($\lambda>0$), the variance
height fluctuations decreases (the interface is more rigid) and
dimer-dimer correlations decay slower. This is compatible with the
fact that at large enough $\lambda$ the model is known to have a
rigid, crystalline, phase, characterized by long-range order of
`columnar' type and $O(1)$ height fluctuations \cite{HP}.
\end{Remark}

\subsection{Outlook and conjectures} \\

\vskip-.3truecm

{\it Extensions to other models.}
As discussed in \cite{GMTlungo}, the specific form of the interaction in \eqref{eq:Z} is unimportant for the validity of 
the massless Gaussian behavior of the height function, see Remark 3 after Theorem 3 in \cite{GMTlungo}. 
Similarly, it is unimportant for the validity of the Haldane relation \eqref{Hr}: the same identity holds true 
for a wide class of interacting dimer models, whose interaction is weak, finite range, and symmetric under the natural lattice 
symmetries of $\mathbb Z^2$ (translations, reflections, discrete rotations). Both the results of \cite{GMTlungo} and those of the present paper 
are presumably valid also for other closely related models, in particular for the 6 vertex (6V) model, which is known to be equivalent to an interacting dimer model
on $\mathbb Z^2$ with a plaquette interaction proportional to the
number of \emph{even} faces of $\mathbb Z^2$ with two parallel dimers, see \cite{Ba_8V,Fa6V}. Note that the mapping of the 6V model into such an interacting dimer model preserves 
the height function, up to a factor $1/2$: the height function of the 6V model (defined as in \cite{vB})
equals half the height function of the dimer model, restricted to faces of odd parity, see the comment after Eq.(7) of \cite{Fa6V}. 
The plaquette interaction of the effective dimer representation of the 6V model, acting only on even faces, is not invariant under the full group of translations; 
therefore, it has a slightly different symmetry than the one of the model considered in the present paper and in \cite{GMTlungo}. 
It is likely that such a change is unimportant for the proofs in \cite{GMTlungo}. The explicit verification 
that such a modified symmetry neither changes the structure of the effective 
infrared theory, nor the structure of the Ward Identities, will be discussed 
elsewhere.

\medskip

{\it Emergent SLE.} The emergent description of the interacting dimer 
model in terms of a massless Gaussian field calls for the emergence of
an SLE process, dual to the Gaussian field: in fact, at the continuum level, 
there are several known connections, or `couplings', between these two types of stochastic processes \cite{CaSLE,Du,MS1,MS2,MS3,MS4,ScSh}.
Inspired by these results, we conjecture that a microscopic
geometric curve, associated with the Temperley spanning  forest, defined below,
converges in the scaling limit to a variant\footnote{Given the speculative nature of the discussion, we are on purpose a bit sloppy on the precise nature of the limiting SLE, as well 
as on the role played by the boundary conditions. The `variant' of SLE we refer to is described in detail in \cite{MS1,MS2,MS3,MS4}.} of the space-filling
SLE$_{\kappa'}$ process, with $\kappa'>4$ the largest root of 
\begin{eqnarray}
\label{eq:Akappa}
A(\l)=\frac{2\k'}{(4-\k')^2}.
\end{eqnarray}
Note that, if $\l=0$ (in which case $A=A(0)=1$), then $\k'=8$.

Here we give some support to this conjecture. The starting point is the Temperley bijection, see for instance \cite{DuG,KPW}. Given a dimer
configuration on $\mathbb T_L$, for every white vertex  $w$ of even
parity draw an oriented edge of length $2$ from $w$, that goes along the unique
dimer with endpoint at $w$ (such edge of course ends at another white
vertex $w'$ of the same parity). The collection of edges thus drawn
forms a cycle-rooted spanning forest (CRSF): every connected component
contains an oriented cycle and the other edges in the same component form
trees oriented toward the cycle, to which they are rooted. This CRSF spans the sub-graph of $\mathbb T_L$ induced by white vertices of even parity. See Fig.\ref{fig:sub2},  where the CRSF, that in this example has a single connected component, is drawn in red. Repeating the same
construction with white edges of odd parity one obtains a ``dual''
CRSF (drawn in blue in Fig.\ref{fig:sub2}). Finally, one can draw an oriented wiggly
curve $\Gamma$ that runs between the primary and dual CRSF (in Fig.\ref{fig:sub2} this curve has two connected components, drawn in  different
colors).  
Note that the curve $\Gamma$ passes once through every face of $\mathbb T_L$. We use the convention that, at the center of each face, $\Gamma$ is tangent to 
one of the two main diagonals of the square lattice. 

\begin{figure}
  \centering
  \includegraphics[width=.5\linewidth]{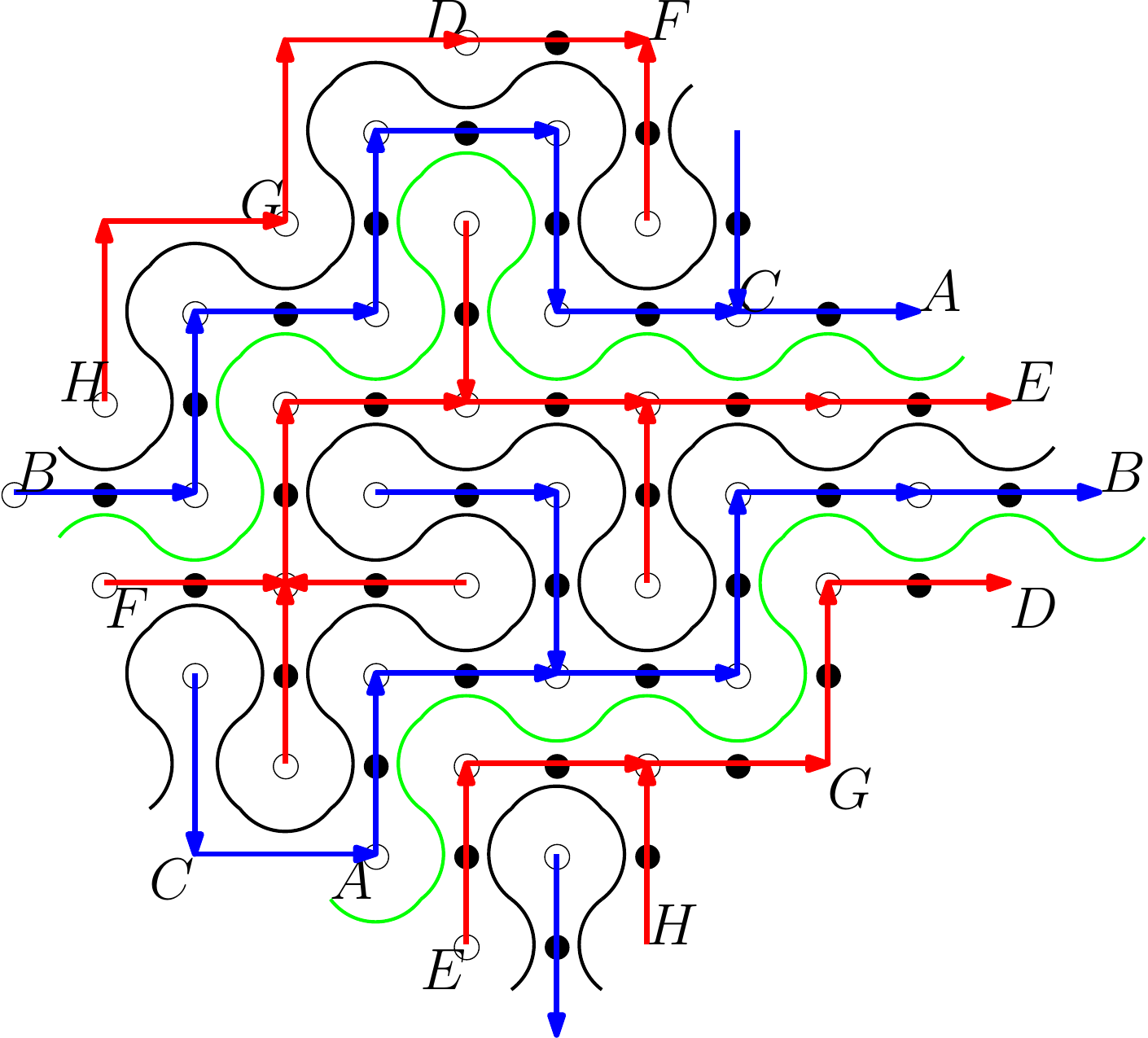}
  \caption{The primary
  and dual CRSF (in red and blue) and the curve $\Gamma$, associated with the dimer configuration of Fig.\ref{fig:sub1}. To help the
  reader follow the curve, we have indicated with the same letter
  points that are identified on the torus.}
  \label{fig:sub2}
\end{figure}

There is a simple correspondence between the height function and the
curve $\Gamma$: as the reader may verify by comparing Figure \ref{fig:sub2} and 
\ref{fig:fig3}, if $\eta_1$ and $\eta_2$ are two faces, which the same connected component of $\G$ passes through, 
the combination $2\p [h(\eta_2)-h(\eta_1)]$ is the net amount of winding of $\G$ between $\eta_1$ and $\eta_2$.
Moreover, such condition on the winding determines the curve $\Gamma$ uniquely.
\begin{figure}
\centering
\includegraphics[width=.5\linewidth]{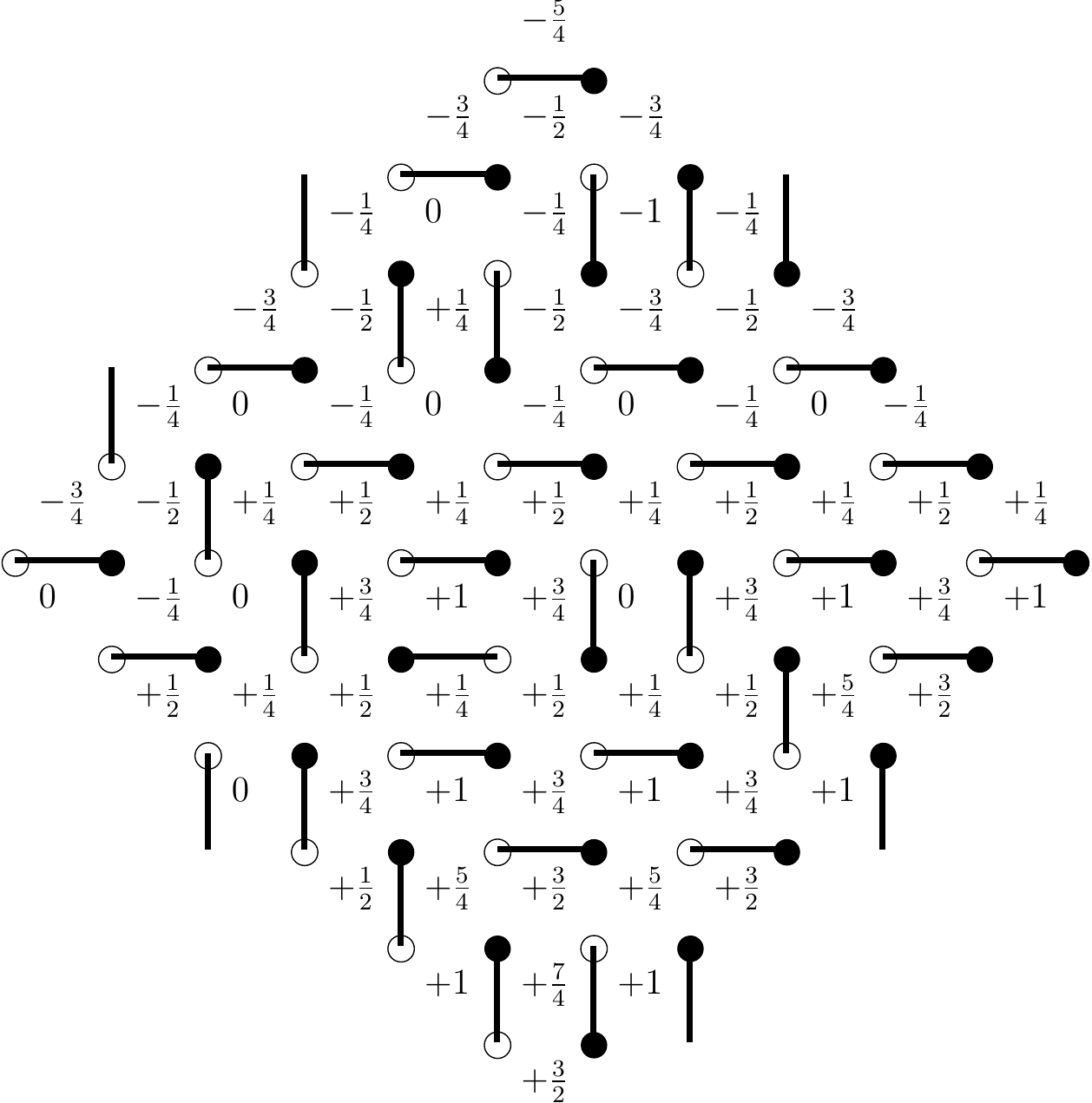}
\caption{The height function, $h(\eta)\in \mathbb Z/4$, corresponding to the dimer configuration of Fig.\ref{fig:sub1}.
On the torus, the height function is additively multi-valued: it can change by an integer along a cycle with non-trivial winding.
In general, if a cycle wraps over the torus $n_1$ times in direction $\vec e_1$ and $n_2$ times in direction $\vec e_2$, the height function picks up an additive term $n_1
T_1+n_2 T_2$, for suitable constants, called {\it periods}. In the figure, $T_1=+2$ and $T_2=-1$.}
\label{fig:fig3}
\end{figure}

In the non-interacting case ($\lambda=0$), it has been proven \cite{LSW} that the curve 
$\Gamma$ tends in the scaling limit to a variant of ${\rm SLE}_8$. 
In the interacting case, we conjecture that $\Gamma$ converges to (a variant of) ${\rm SLE}_{\kappa'}$, with
$\kappa'$ the largest root of \eqref{eq:Akappa}.
The conjecture is based on the series of works \cite{MS1, MS2, MS3, MS4} (see notably \cite{MS4}), where
the authors, directly at the continuum level, give a meaning to the  notion of
solution of the differential equation 
\be \dot \gamma(t)= e^{i(2\p \phi(\gamma(t))+\theta)},\label{flowMS.bis}\ee
where $\phi$ is a massless Gaussian field with covariance \eqref{eq:varia}, and $\theta$ a real constant.
Note that the flow line $\gamma(t)$, solution of \eqref{flowMS.bis},
is such that $2\p \big[\phi(\gamma(t))-\phi(\gamma(0))\big]$ describes the net amount of winding of $\dot\gamma(t)$ around the circle between times $0$ and $t$.
In \cite{MS1, MS2, MS3, MS4} the authors define:  a continuous tree whose branches, roughly speaking, are the solutions of
\eqref{flowMS.bis} starting from any  point on the plane; and a space-filling curve  that traces the tree in the natural order.
They prove\footnote{In \cite{MS4} the authors use a different convention, namely,
they study the solution of $\dot\g(t)=e^{i(\varphi(\g(t))/\chi+\theta)}$, with $\varphi$ the massless Gaussian field with covariance $-\log|x-y|$, and they prove that it defines a space-filling SLE$_{\k'}$ process, 
with $\k'$ and $\chi$ related by $\chi=\sqrt{\k'}/2-2/\sqrt{\k'}$. Using the fact that $\varphi=\sqrt{(2\p^2/A(\l))}\phi$, we find that $\chi=1/\sqrt{2A(\l)}$ and we see that the relation between $\chi$ and $\k'$ implies \eqref{eq:Akappa}.} that the latter  defines a space-filling SLE$_{\k'}$ process, with $\k'$ the largest root of \eqref{eq:Akappa}: this 
space-filling curve is the continuous analogue of our space-filling discrete curve $\Gamma$, which leads to our conjecture.
In \cite{MS4}, it is also proved that the flow lines $\gamma$ define a branched version of an SLE$_{\k}$ process, with $\kappa$ the diffusion constant `dual' to $\kappa'$, i.e., $\kappa=16/{\kappa'}<4$: 
this branched process is the continuous analogue of the Temperley's CRSF; therefore, we also conjecture that the Temperley's CRSF tends in the scaling limit to the variant of SLE$_{\k}$ described in \cite{MS4}.

Let us mention that recently a similar conjecture has been formulated
(and successfully tested numerically) in \cite{KMSW} for the 6V
model that, as recalled above, can be equivalently described in the form of an interacting dimer model.
However, the discrete `space-filling' curve considered in \cite{KMSW}
is different from $\Gamma$. In particular, at the free-fermion point 
(that corresponds to $\lambda=0$ for dimers) the curve of \cite{KMSW}
converges to SLE$_{8+4\sqrt 3}$ rather than SLE$_8$.

\section{Generating function and fermionic representation}\label{section3}

In this section, we discuss the fermionic representation of the dimer model, which is the basis of the multiscale expansion 
used in the proof of Theorem \ref{th:k=k}.

\bigskip

The generating function of dimer correlations is defined as
\begin{eqnarray}
  \label{eq:genera}
  e^{\mathcal W_\L({A})}
  =\langle\prod_{e\in\Lambda}e^{A_e\openone_e}\rangle_{\L;\l,m},
\end{eqnarray}
where for each edge $e$, $A_e$ is a real number. 
In particular, the dimer-dimer correlation function in the thermodynamic and zero mass limits is 
 \begin{eqnarray}
   \label{eq:13}
G^{(0,2)}_{r,r'}(x,y):=\lim_{m\to0}\lim_{L\to\infty}\partial^2_{A_e,A_{e'}}\mathcal W_\L({A})|_{{A}=0}
 \end{eqnarray}
where $e,e'$ are the bonds of type $r,r'$ with $b(e)=x,b(e')=y$. Note that $G^{(0,2)}_{r,r'}(x,y)$ depends only on $x-y$ and its Fourier transform is defined via 
\be \label{eq:G02F}
   G^{(0,2)}_{r,r'}(x,0)=\int\limits_{[-\p,\p]^2}\frac{dp}{(2\p)^2}e^{i p x}\hat G^{(0,2)}_{r,r'}(p).\ee

As discussed in \cite[Sec. 2]{GMTlungo}, since we are working on the torus, the generating
function at finite volume can be rewritten as the sum of four Grassmann integrals (the
same way as the partition function of the non-interacting model on the
torus is written in Kasteleyn's theory as the sum of four determinants
or Pfaffians \cite{K1}). However (cf. \cite[Sec. 2.4]{GMTlungo}) since the mass $m$ is
removed \emph{after} the thermodynamic limit is taken, one can reduce to a single
Grassmann integral, the structure of which is discussed in the following subsections.

\subsection{Kasteleyn matrix and non-interacting model}

Let us start by defining the Kasteleyn matrix $K$: this is a square
matrix with rows/columns indexed by elements of $\L$. If the black
site $b_x$ of coordinate $x$ is not a neighbor of the white site $w_y$
of coordinate $y$, then we set $K(x,y)=0$. Otherwise, if $b_x,w_y$ are
the endpoints of an edge of type $r$, we set
 \begin{align}
   \label{eq:Kst}
K(x,y):=K_r \;t^{(m)}_{(b_x,w_y)},\quad
K_r=e^{i\frac\pi2(r-1)}   
 \end{align}
where $t^{(m)}_e$ was defined in \eqref{eq:tem}.
Given an edge $e$ such that $b(e)=x,w(e)=y$, we will write
\[
K(e):=K(x,y).
\]

The matrix $K$ is invertible as long as $m>0$ and can be explicitly
inverted in the Fourier basis. One has the following properties (these
properties were discussed in detail in \cite[Sec. 2 and
App. A]{GMTlungo}, except that there we used a different coordinate
system on $\mathbb Z^2$ (cf. Remark \ref{rem:mc1}), which explains why
e.g. \eqref{eq:72} below looks at first sight different from the
$m\to0$ limit of \cite[Eq. (2.18)]{GMTlungo}):
\begin{itemize}
\item if $m>0$ is fixed, then $K^{-1}(x,y)$ decays
  exponentially in $|x-y|$ on a characteristic length of order $1/m$,
  uniformly in $L$.
\item for $x,y$ fixed one has
  \begin{eqnarray}
    \label{eq:72}
    \lim_{m\to0}\lim_{L\to\infty}K^{-1}(x,y)=g(x-y):=\int_{[-\p,\pi]^2}
    \frac{dk}{(2\pi)^2}\frac{e^{-i k  (x-y)}}{\mu(k)}
  \end{eqnarray}
where
\begin{eqnarray}
  \label{eq:73}
  \mu(k)=1-e^{i(k_1+k_2)}+i e^{i k_1}-i e^{i k_2}.
\end{eqnarray}
\item $g(x-y)$ has the following long-distance behavior:
\begin{eqnarray}
  \label{eq:12}
g(x-y) \stackrel{|x-y|\to\infty}= \frac 1{2\pi}\left[\frac{1}{a(x-y)}+\frac{(-1)^{x-y}}{a^*(x-y)}\right]+O(|x-y|^{-2})
\end{eqnarray}
where
\begin{eqnarray}
  \label{eq:40}
a(z)=(1-i)z_1+(1+i)z_2.
\end{eqnarray}

\end{itemize}
The only zeros of $\mu(\cdot)$ on $[-\p,\pi]^2$ are $(k_1,k_2)=(0,0)=:p^+$ and
$(k_1,k_2)=(\pi,\pi)=:p^-$ and these are simple zeros:
\bea 
  \label{eq:10}
&&  \mu(k)=D_{\omega}(k-p^\omega)+O(|k-p^\omega|^2), \quad \omega=\pm,
\\
&& D_\omega(k)=(-i-\omega)k_1+(-i+\omega)k_2.\label{eq:10bis}
\eea 

In order to represent the generating function $\mathcal W_\L({A},0)$ as a Grassmann integral, 
we associate a
Grassmann variable $\psi^+_x$ with the black vertex indexed $x\in \L$, and a Grassmann variable 
$\psi^-_x$ with the white vertex indexed $x$.
Let for brevity $\psi$ denote the collection of Grassmann
variables $(\psi^+_x,\psi^-_x)_{x\in \L }$ with anti-periodic boundary conditions on $\mathbb T_L$, i.e., 
$\psi^\pm_{x+L\vec e_j}=-\psi^\pm_x$, for all $x\in\L$ and $j=1,2$. Let also 
$\EE_{0,\L}(\cdot)$ be the Grassmann Gaussian `measure' with propagator
\bea   \label{eq:61}
&& 
 \EE_{0,\L}(\psi^-_x\psi^+_y):= 
\frac{\int[\prod_{x\in\L}d\psi^+_xd\psi^-_x]e^{-(\psi^+, K_0\psi^-)}\psi^-_x\psi^+_y}{\det(K_0)}=g_\L(x-y),\qquad\phantom{\cdot}
\\
&& \EE_{0,\L}(\psi^-_x\psi^-_y)=\EE_{0,\L}(\psi^+_x\psi^+_y)=0.
\eea 
where $K_0$ is the Kasteleyn matrix with $m=0$ and anti-periodic boundary conditions,
\be g_\L(x)=K_0^{-1}(x,0)=\frac1{L^2}\sum_{k\in\mathcal D_\L}\frac{e^{-i k  x}}{\mu(k)},\ee
and 
\be \mathcal D_\L=\{k=(k_1,k_2): k_i=\frac{2\p}{L}(n_i+1/2),\ 0\le n_i\le L-1\}.\ee

We refer to \cite[Section 4]{GeM} for a few basic facts on Grassmann integration. Here let us just recall the fermionic Wick rule:
\begin{eqnarray}
  \label{eq:Wick}
  \EE_{0,\L}(\psi^{-}_{x_1}\psi^+_{y_1}\cdots \psi^{-}_{x_n}\psi^+_{y_n})=\det G_n(\underline x,\underline y), 
  \end{eqnarray}
where $G_n(\underline x,\underline y)$ is the $n\times n$ matrix with elements $[G_n(\underline x,\underline y)]_{ij}=g_\L(x_i-y_j)$,
$i,j=1,\ldots,n$. In the following, we shall denote by $\EE_0$ the weak limit of $\EE_{0,\L}$ as $L\to\infty$. 

Given a bond $e=(b,w)$, we let $E_e=K_0(e)\psi^+_{b}\psi^-_{w}$, so that 
\be
  \label{eq:S}
(\psi^+,
K_0\psi^-)=\sum_{x,y\in\L}\psi^+_xK_0(x,y)\psi^-_y=\sum_e E_e.
\ee
The non-interacting generating function, in the thermodynamic and zero mass limits, can 
be rewritten as 
\be \lim_{m\to 0}\lim_{L\to\infty}\mathcal W_\L({A})\big|_{\l=0} =\lim_{L\to\infty} W_\L({A})\big|_{\l=0},\label{eq:5.17}\ee
where
\be W_\L({A})\big|_{\l=0}=\log \frac{\int[\prod_{x\in\L}d\psi^+_xd\psi^-_x]e^{-\sum_e E_ee^{A_e}}}{\int[\prod_{x\in\L}d\psi^+_xd\psi^-_x]e^{-\sum_e E_e}}
\ee
is the Grassmann generating function. 
By differentiating with respect to $A_{e_1},\ldots, A_{e_k}$ and then setting ${A}={0}$, and by using 
\eqref{eq:Wick}, it is apparent that the multipoint dimer correlations can be all computed explicitly in terms of a suitable fermionic Wick rule.

For later reference, let us also set our conventions for the Fourier transform of Grassmann fields:  we let
\begin{eqnarray}
  \label{eq:F1}
  \hat \psi^\pm_k=\sum_{x\in\Lambda}e^{\mp i k x}\psi^\pm_x, \quad k\in \mathcal D_\L,
\end{eqnarray}
so that
\begin{eqnarray}
  \label{eq:F2}
  \psi^\pm_x=\frac1{L^2}\sum_{k\in\mathcal D_\L}e^{\pm i k x}\hat\psi^\pm_k.
\end{eqnarray}

\subsection{Decomposition of Dirac fields}

It is convenient,  for the comparison with the
relativistic model of Section \ref{sec:relativistico}, to decompose
the Grassmann fields $\psi^\pm_x$ into sums of so-called Dirac fields.
First of all, we let $\chi^+(\cdot)$ be a non-negative `cut-off
function', i.e. a smooth function on $[-\p,\pi]^2$ satisfying the
following: 
\begin{itemize}
\item letting $\chi^-(k)=\chi^+(k-(\pi,\pi))$, one
has $\chi^+(k)+\chi^-(k)=1$;
\item  $\chi^+(\cdot)$ is centered at the
origin, is even in $k$ and its support does not include
$p^{-}=(\pi,\pi)$.
\end{itemize}
 As discussed in \cite[App. C]{GMTlungo}, it is
technically convenient to assume that $\chi^+$ is in the Gevrey class
of order $2$ (which in particular implies that it is $C^\infty$).  For
definiteness, one should think of $\chi^+(\cdot)$ as of a suitably
smoothed version of the indicator ${\bf 1}_{\{|k_1|+|k_2|\le
  \pi\}}$. 

By using the {\it addition principle} for Grassmann integrals, see \cite[Prop.1]{GMTlungo}, we 
rewrite the field $\psi$ as the combination of two independent Grassmann fields $\psi_\o$, $\o\in\{\pm\}$, via the following
\begin{gather}
  \label{eq:14}
  \psi^\pm_x=\sum_{\omega=\pm}e^{i p^\omega x}\psi^\pm_{x,\omega}
=\psi^\pm_{x,+}+(-1)^x \psi^\pm_{x,-}
\end{gather}
(here and in the following, whenever $x=(x_1,x_2)\in \mathbb Z^2$, we
let $(-1)^x:=(-1)^{x_1+x_2}$).  The rewriting \eqref{eq:14} should be
meant as the statement that, for every function $f$ of the Grassmann
field $\psi$,
\be \EE_{0,\L}(f(\psi))=\tilde\EE_{0,\L}(\tilde f(\{\psi^\pm_{x,+},\psi^\pm_{x,-}\}_{x\in\L})),\ee
where $\tilde f(\{\psi^\pm_{x,+},\psi^\pm_{x,-}\}_{x\in\L})=
f(\{\psi^\pm_{x,+}+(-1)^x\psi^\pm_{x,-}\}_{x\in\L})$, and $\tilde\EE_{0,\L}$ is the Grassmann Gaussian integration on the $\psi_\o$ fields with propagator
\be
  \label{eq:16}
\tilde\EE_{0,\L}( \psi^-_{x,\o}\psi^+_{y,\o})=\frac1{L^2}\sum_{k\in\mathcal D_\L} \frac{e^{-i (k-p^\o)(x-y)}}{\mu(k)}\chi^\o(k)\ee
and $\tilde\EE_{0,\L}(\psi^-_{x,\o}\psi^+_{x,-\o})=0$.

\begin{Remark}
  Note that the effect of the cut-off function $\chi^\o$ is to restrict the integration close to the singularity $p^\o$; at large distances,  one has therefore
  \begin{multline}
    \label{eq:16bis}
    \tilde\EE_{0,\L}( \psi^-_{x,\o}\psi^+_{y,\o})\simeq\frac1{L^2}\sum_{k\in\mathcal D_\L} \frac{e^{-i k(x-y)}}{D_\o(k)}\chi^+(k)\\\simeq
\frac1{2\pi}\frac 1{(1-i\o)(x_1-y_1)+(1+i\o)(x_2-y_2)}\\
=\frac1{2\pi(1-i\o)}\frac1{(x_1-y_1)+i\o(x_2-y_2)},
  \end{multline}
with $D_\o(k)$ defined in \eqref{eq:10}. This will be used in the comparison between the dimer model and the continuum fermionic model of Section \ref{sec:relativistico}.
\end{Remark}

\subsection{Interacting model}
\label{sec:intmod}
As discussed in \cite[Sec. 2.3 and 2.4]{GMTlungo}, Eq. \eqref{eq:5.17} admits the following natural generalization to the interacting case: 
\be \lim_{m\to 0}\lim_{L\to\infty}\mathcal W_\L({A})=\lim_{L\to\infty} W_\L({A}),\label{eq:5.17_int}\ee
where 
\be W_\L({A})=\log \frac{\int[\prod_{x\in\L}d\psi^+_xd\psi^-_x]e^{V(\psi,{
    A})}}{\int   [\prod_{x\in\L}d\psi^+_x d\psi^-_x]e^{V(\psi,0)}}\label{eq:gen_int}
\ee
and
\be 
\label{eq:VA}
V(\psi,{
  A})=\sum_{\gamma}(-1)^{|\g|}\alpha^{|\g|-1}\prod_{e\in\gamma}E_e
e^{A_e}\;.\ee Here, $\alpha=\exp(\lambda)-1 $, $\g$ are collections of
parallel bonds in $\mathbb T_L$ belonging to the same horizontal or
vertical strip: $\g=(e_1,\ldots,e_k)$, with $k\ge 1$.  In Section \ref{sec1},
in order to derive the lattice Ward identities, we will generalize
$W_\Lambda(A)$ to a generating function $W_\L(A,\phi)$, with $\phi$
being a ``Grassmann external source'', with respect to which we will take derivatives.

Note that, at zero external source, and at lowest non-trivial order in $\l$, the Grassmann action takes the form
\begin{multline}
  \label{eq:19}
 V(\psi,0)=-(\psi^+,K_0\psi^-)+ \lambda V_4(\psi)+O(\lambda^2),\quad {\rm where}\\
V_4(\psi)=-2 \sum_{x\in\L}\left[
\psi^+_x\psi^-_x\psi^+_{x+(0,1)}\psi^-_{x-(1,0)}+\psi^+_x\psi^-_x\psi^+_{x+(1,0)}\psi^-_{x-(0,1)}\right].
\end{multline}
 For later reference, we introduce the symbol $\EE_{\lambda,\L}( \cdot)$ for 
the `interacting measure'
\begin{eqnarray}
  \label{eq:misura}
  \EE_{\lambda,\L}( f(\psi))=\frac{\int \big[
\prod_{x\in\L}d\psi^+_x d\psi^-_x
\big]e^{V(\psi,0)}f(\psi)}{\int \big[
\prod_{x\in\L}d\psi^+_x d\psi^-_x
\big]e^{V(\psi,0)}}
\end{eqnarray}
and we let $\EE_\l$ be the weak limit of $\EE_{\l,\L}$ as $L\to\infty$. 

We also define
\be 
\label{eq:calI}
\mathcal I_e=\sum_{\g:\g\ni e}(-1)^{|\g|}\alpha^{|\g|-1}\prod_{e'\in\gamma}E_{e'},\ee
which 
is the fermionic counterpart of the dimer operator $\openone_e$, in the sense that for instance
\begin{eqnarray}
  \label{eq:59} \lim_{m\to 0}\lim_{L\to\infty}
\langle \openone_e\rangle_{\lambda;\L,m}=\EE_\lambda(\II_e),
\end{eqnarray}
and 
\begin{eqnarray}
   \label{eq:23}
   \lim_{m\to 0}\lim_{L\to\infty}
\langle \openone_e\openone_{e'}\rangle_{\lambda;\L,m}=\EE_\l( \mathcal I_e\mathcal I_{e'}),
 \end{eqnarray}
{\it provided} $e,e'$ have different
orientations, or they have the same orientation but there is no
$\gamma$ made of parallel adjacent bonds that contains both. In order to prove \eqref{eq:59} and \eqref{eq:23}, it is enough to differentiate 
\eqref{eq:5.17_int} with respect to $A_e$ and $(A_e,A_{e'})$, and then set ${A}\equiv{0}$ (the possibility of exchanging the 
limits $\lim_{m\to0}\lim_{L\to\infty}$ with the derivatives with respect to ${A}$ follows from the uniform bounds on the generating function proved in \cite{GMTlungo}).

\begin{Remark}\label{rmk:4} For future reference, it is useful to re-express the Grassmann action, $V(\psi,0)$, and the Grassmann counterpart of the dimer observable, $\mathcal I_e$, in terms of Dirac fields: 
for this purpose, it is enough to plug \eqref{eq:14} into the definition of $E_e$, and then use \eqref{eq:VA} and \eqref{eq:calI}. If we denote by $E^r_x$, $r\in\{1,2,3,4\}$, the operator $E_e$
with $e$ the edge of type $r$ with black vertex at $x$, we find: 
\begin{eqnarray}
  \label{eq:27}
  E^r_x=K_r\left[
\sum_\omega
  \omega^{r-1}\psi^+_{x,\omega}\psi^-_{x+v_r,\omega}+(-1)^x\sum_\omega (-\omega)^{r-1}\psi^+_{x,\omega}\psi^-_{x+v_r,-\omega}
\right]
\end{eqnarray}
where $v_1=0,v_2=-(1,0),v_3=-(1,1),v_4=-(0,1)$. Using this expression, we see that the quadratic part of the action, $-(\psi^+,K_0\psi^-)$, reads
\be -(\psi^+,K_0\psi^-)=\sum_{x,r,\o}K_r\omega^{r-1}\left[
\psi^+_{x,\omega}\psi^-_{x+v_r,\omega}+(-1)^{x+r-1}\psi^+_{x,\omega}\psi^-_{x+v_r,-\omega}
\right],\ee
while the `interaction' $V_{int}(\psi):=V(\psi,0)+(\psi^+,K_0\psi^-)$ can be re-expressed, at dominant order, as
\be \label{eq:21}V_{int}(\psi)=-16\alpha  \sum_x\psi^+_{x,+}\psi^-_{x,+}\psi^+_{x,-}\psi^-_{x,-} +\text{h.o.},
\ee
Here `h.o.' denotes higher order terms, namely `non-local quartic terms\footnote{These are the terms quartic in the Grassmann fields,
where one or more fields are replaced by their discrete derivatives: they result from rewriting $\psi^-_{x+v_r,\o}=\psi^-_{x,\o}+(\nabla_{v_r}\psi^-)_{x,\o}$ in \eqref{eq:27}, where $\nabla_{v_r}$ is by definition the 
discrete derivative in direction $v_r$.}',
or terms of order $6$ or higher in the Grassmann fields. All these higher order terms are {\it irrelevant} in the RG sense, see \cite[Sect.5 and 6]{GMTlungo}.

Similarly, if $e$ is the edge of type $r$ with black vertex at $e$, the operator $\mathcal I_e$ can be rewritten as 
\be
\label{eq:calIappr}
\mathcal I_e = -K_r \left[
\sum_\omega
  \omega^{r-1}\psi^+_{x,\omega}\psi^-_{x,\omega}+(-1)^x\sum_\omega (-\omega)^{r-1}\psi^+_{x,\omega}\psi^-_{x,-\omega}\right]+\text{ h.o.},
\ee
where the higher order terms are either non-local quadratic terms, or terms of order 4 or higher in the Grassmann fields. 
\end{Remark}
\section{Ward Identities}\label{sec1}

A crucial role in the proof of Theorem \ref{th:k=k} is played by
lattice conservation laws, which induce exact relations (Ward
Identities) among the correlation functions.  The basic microscopic
conservation law inducing the Ward identities is the fact that the
number of dimers incident to a given vertex of $V_B/V_W$ is
identically equal to $1$.  

In order to derive these identities, we need to generalize the
generating function \eqref{eq:gen_int}, since the identities will
involve correlation functions that \emph{cannot} be obtained  as
derivatives of $W_\L(A)$ with respect to the variables $A_e$. Namely,
given a set of Grassmann variables $\{\phi^+_x,\phi^-_x\}_{x\in\L}$ and (as before) a set of real numbers $A_e$ associated to edges
$e$, we define \be W_\L({A},{\phi})=\log\frac{\int[\prod_{x\in\L}d\psi^+_xd\psi^-_x]
e^{V(\psi,{A})+(\psi^+,\phi^-)+(\phi^+,\psi^-)}}{\int [\prod_{x\in\L}d\psi^+_x d\psi^-_x]e^{V(\psi,0)}}\label{eq:gen_int_phi}.\ee
Here, $(\psi^+,\phi^-):=\sum_{x\in\L}\psi^+_x\phi^-_x$, and similarly for $(\phi^+,\psi^-)$.

\begin{Remark}
As any function of a finite number of Grassmann variables,   the function $W_\L(A,\phi)$ is a polynomial in $\{\phi^+_x,\phi^-_x\}_{x\in\L}$. If 
$W_\L(A,0)$ is the monomial that contains none of the $\phi$ variables, then we have
\begin{eqnarray}\label{eq:idqqq}
W_\L(A,0)=W_\L(A),\end{eqnarray}
with $W_\L(A)$ defined in \eqref{eq:gen_int}.

We will also need to take derivatives of $W_\L(A,\phi)$
w.r.t. $\phi^\pm_x$. By this we simply mean the following: if the
sum of monomials of  $W_\L(A,\phi)$ that contains $\phi^\sigma_x$ is
written as $\phi^\sigma_x f(\phi)$ (with $f$
a polynomial not containing $\phi^\sigma_x$) then
$\partial_{\phi^\sigma_x}W_\L(A,\phi):=f(\phi)$. Note that
$\partial_{\phi^\sigma_x}$ and $\partial_{\phi^{\sigma'}_{x'}}$
anti-commute.
\end{Remark}

The following identities hold:
\begin{Proposition}
\label{prop:W1}
  For every $x,y,z\in\Lambda$ we have
\bea \label{eq:der3}
&&\sum_{e:b(e)=x}\partial_{A_e}\partial_{\phi^-_z}\partial_{\phi^+_y}W_\L(0,0)+  \delta_{x=z}\partial_{\phi^-_x}\partial_{\phi^+_y}W_\L(0,0)=0\\
&&\sum_{e:w(e)=x}\partial_{A_e}\partial_{\phi^-_z}\partial_{\phi^+_y} W_\L(0,0)+  \delta_{x=y}\partial_{\phi^-_z}\partial_{\phi^+_x} W_\L(0,0)=0.
\eea
\end{Proposition}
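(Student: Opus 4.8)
The plan is to derive the two identities in Proposition \ref{prop:W1} from the elementary ``hard-core'' constraint that exactly one dimer is incident to each vertex of $\mathbb T_L$, translated into the Grassmann language of \eqref{eq:gen_int_phi}. Concretely, for a fixed black vertex $b_x$ of coordinate $x$, the sum $\sum_{e:b(e)=x}\openone_e$ equals $1$ identically on every matching $M\in\mathcal M_L$. In the fermionic representation this is mirrored by an exact algebraic identity among the monomials $E_e=K_0(e)\psi^+_{x}\psi^-_{w(e)}$: since each term $E_e$ contains the factor $\psi^+_x$, and since $(\psi^+_x)^2=0$ by Grassmann anticommutation, multiplying the generating functional by $\psi^+_x$ and using the source term $(\psi^+,\phi^-)$ should produce the desired cancellation. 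The first identity corresponds to summing over edges sharing the \emph{black} endpoint $x$; the second to summing over edges sharing the \emph{white} endpoint $x$. The two are related by the black/white symmetry of the construction.

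The key steps I would carry out are the following. First I would write the change of Grassmann variables $\psi^+_x\to\psi^+_x+\epsilon_x$ (for the black identity; $\psi^-_x\to\psi^-_x+\epsilon_x$ for the white one) with $\epsilon_x$ a new Grassmann variable, inside the integral \eqref{eq:gen_int_phi}, and use the invariance of the Grassmann ``Lebesgue'' measure $\prod_x d\psi^+_x d\psi^-_x$ under such a shift. This invariance forces the coefficient of $\epsilon_x$ in the expansion of the shifted integrand to integrate to zero. Second, I would compute how each piece of the exponent $V(\psi,A)+(\psi^+,\phi^-)+(\phi^+,\psi^-)$ responds to the shift: the source term $(\psi^+,\phi^-)$ contributes $\epsilon_x\phi^-_x$, producing the contact terms $\delta_{x=z}$ in \eqref{eq:der3}; the interaction $V(\psi,A)$, through the structure \eqref{eq:VA} with each $\g\ni e$ carrying the factor $E_e\propto\psi^+_{b(e)}$, contributes exactly $\sum_{e:b(e)=x}\partial_{A_e}$ acting on $V$ up to the overall normalization, because $\partial_{\psi^+_x}$ of a product $\prod_{e\in\g}E_e$ with $b(e)=x$ is equivalent to stripping that factor, which matches differentiating in $A_e$. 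Third, I would differentiate the resulting functional identity once in $\phi^+_y$ and once in $\phi^-_z$ and then set $A=0$, $\phi=0$, converting the vanishing of the $\epsilon_x$-coefficient into \eqref{eq:der3}.

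The delicate bookkeeping — and the step I expect to be the main obstacle — is verifying that the response of the full interacting exponent $V(\psi,A)$ to the shift $\psi^+_x\mapsto\psi^+_x+\epsilon_x$ reproduces \emph{precisely} the operator $\sum_{e:b(e)=x}\partial_{A_e}$ applied to $W_\L$, with no spurious terms and with the correct relative sign against the contact term. This requires a careful inspection of the combinatorial definition \eqref{eq:VA}: each $\gamma$ is a string of parallel bonds, the prefactor is $(-1)^{|\g|}\alpha^{|\g|-1}$, and each $E_e$ is quadratic in $\psi$, so the signs arising from anticommuting $\epsilon_x$ past the other Grassmann factors, together with the $e^{A_e}$ weights, must be tracked so that the derivative in $\epsilon_x$ coincides with $\sum_{e:b(e)=x}\partial_{A_e}$ acting on the same integrand. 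One must also check that a given edge $e$ with $b(e)=x$ can belong to several strips $\gamma$, so the identity between the $\epsilon_x$-derivative and the $A_e$-derivative is an identity among generating functionals rather than term by term; I would verify it by noting that both operations amount to replacing the factor $\psi^+_x E_e/E_e$ inside each monomial, hence agree. Once this algebraic matching is established, differentiating in $\phi^+_y,\phi^-_z$ and specializing to $A=\phi=0$ yields both conservation laws; the white-vertex identity follows by the same argument with the roles of $\psi^+$ and $\psi^-$ (and of $b(e)$ and $w(e)$) exchanged.
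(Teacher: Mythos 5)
Your overall scheme (encode the hard-core constraint as an invariance of the Grassmann integral, extract the first-order coefficient, then differentiate in the external sources) is close in spirit to the paper's, but the specific transformation you chose cannot produce \eqref{eq:der3}, and the algebraic identification on which your plan rests is false. The crux: $\partial_{A_e}$ acting on $V(\psi,A)$ does \emph{not} strip any Grassmann factor. Since the $A_e$-dependence in \eqref{eq:VA} sits in the scalar weight $e^{A_e}$, one has $\partial_{A_e}V=\sum_{\gamma\ni e}(-1)^{|\gamma|}\alpha^{|\gamma|-1}\prod_{e'\in\gamma}E_{e'}e^{A_{e'}}$, i.e.\ the \emph{full} monomial, including $\psi^+_x$, is reproduced. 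By contrast, the coefficient of $\epsilon_x$ generated by your shift $\psi^+_x\to\psi^+_x+\epsilon_x$ is the amputated object $\partial_{\psi^+_x}V$, in which $\psi^+_x$ has been removed. These two insertions differ by exactly a factor $\psi^+_x$: the correct statement is $\psi^+_x\,\partial_{\psi^+_x}V=\sum_{e:b(e)=x}\partial_{A_e}V$ (using that the bonds of any string $\gamma$ are pairwise vertex-disjoint), \emph{not} $\partial_{\psi^+_x}V=\sum_{e:b(e)=x}\partial_{A_e}V$. One can see concretely that your route dead-ends: the shift-invariance identity reads $\int[\partial_{\psi^+_x}V+\phi^-_x]\,e^{V(\psi,A)+(\psi^+,\phi^-)+(\phi^+,\psi^-)}=0$, and both terms are \emph{odd} in the external Grassmann sources; hence, after applying $\partial_{\phi^+_y}\partial_{\phi^-_z}$ and setting $A=\phi=0$, every term vanishes by parity (for instance your contact term becomes $\delta_{x=z}$ times the one-point function of $\psi^-_y$, which is zero), and you are left with the trivial identity $0=0$. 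The pure shift generates Schwinger--Dyson (equation-of-motion) identities, which live at \emph{odd} order in $\phi$, not the Ward identities \eqref{eq:der3}, which live at even order.

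The fix is what the paper does: use the \emph{multiplicative} local transformation $\psi^\pm_x\to e^{i\a^\pm_x}\psi^\pm_x$, whose infinitesimal form is the field-dependent shift $\delta\psi^+_x=i\a^+_x\psi^+_x$; equivalently, run your integration-by-parts argument on $\psi^+_x\,e^{V+\mathrm{sources}}$ rather than on $e^{V+\mathrm{sources}}$. This inserts the number-operator combination $\psi^+_x\partial_{\psi^+_x}$, which is what genuinely matches $\sum_{e:b(e)=x}\partial_{A_e}$ (in the multiplicative picture this is immediate, since $E_ee^{A_e}\to E_ee^{i\a_e+A_e}$ with $\a_e=\a^+_{b(e)}+\a^-_{w(e)}$, so the $\a$- and $A$-derivatives act in exactly parallel ways). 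It also repairs the contact term: the source contribution is then $\psi^+_x\phi^-_x$ rather than $\phi^-_x$ alone, so two $\phi$-derivatives produce $\delta_{x=z}$ times the dressed propagator $\partial_{\phi^-_x}\partial_{\phi^+_y}W_\L(0,0)$, as required by \eqref{eq:der3}. Finally, note that the multiplicative change of variables has a nontrivial Jacobian, $e^{i\sum_x(\a^+_x+\a^-_x)}$, which produces the inhomogeneous term $1$ in the intermediate identity \eqref{eq:der1}; it drops out of \eqref{eq:der3} only after the $\phi$-derivatives are taken, whereas your trivial-Jacobian bookkeeping never generates it.
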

\begin{proof}
The starting point in the derivation of \eqref{eq:der3} is a  ``local gauge covariance property'' of the
Grassmann generating function $W_\L({A},{{\phi}})$, i.e. an identity satisfied by $W_\L({A},{{\phi}})$ when the Grassmann variables $\psi^\sigma_x$ are multiplied by a  phase depending on $x$ and $\sigma$.

Grassmann integrals are known to satisfy the following: if
$\psi_1,\dots,\psi_n$ are Grassmann variables and $\bar
\psi_i(\psi)=\sum_{j\le n}a_{ij}\psi_j$ are linear combinations of the
$\psi$ variables, then
\begin{eqnarray}
  \label{eq:1}
  \int \prod_j d\psi_j f(\psi)=(\det a)^{-1}\int \prod_j d\psi_j f(\bar\psi(\psi)).
\end{eqnarray}
In our case,  consider the phase transformation $\psi^\pm_x\to
\bar \psi^\pm_x= e^{i\a^\pm_x}\psi^\pm_x$, 
with $\a_x^\pm\in\mathbb R$. 
Note that the combination
$E_ee^{A_e}$ appearing in the definition of $V(\psi,A)$, see \eqref{eq:VA}, transforms as follows under the phase transformation:
$E_ee^{A_e}\to E_ee^{i\a_e+A_e}$, where
$\a_e=\a^+_{b(e)}+\a^-_{w(e)}$. Therefore Eq. \eqref{eq:1}, together
with the fact that in our case \[\det a=e^{i\sum_{x\in\L}(\alpha^+_x+\alpha^-_x)},\] implies 
\be W_\L({ A},{{\phi}})=-i\sum_{x\in\L}(\a^+_x+\a^-_x)+W_\L({ A}+i{\alpha},{{\phi}}e^{i{\alpha}})\label{eq:gauge_cov}\ee
with $({{\phi}}e^{i{\alpha}})^\sigma_x=\phi^\sigma_x e^{i\a^{-\sigma}_x},\sigma=\pm$.
By repeatedly deriving this identity with respect to ${A}$ and ${\phi}$, and then setting ${A}={\phi}={0}$, we 
obtain a sequence of exact identities among the correlation functions, known as Ward Identities. 

We are particularly interested in \eqref{eq:der3}, relating the ``vertex
function'',  $ \partial_{A_e}\partial_{\phi^-_z}\partial_{\phi^+_y}W_\L(0,0)$
with the ``dressed propagator'' $\partial_{\phi^-_x}\partial_{\phi^+_y}W_\L(0,0) $. In order to get that,
we start by deriving \eqref{eq:gauge_cov} once with respect to $\a^+_x$ and then set ${{\a}}={0}$, thus obtaining:
\begin{eqnarray}
\label{eq:der1}
\sum_{e:b(e)= x}\partial_{A_e}W_\L({  A},{\phi})+\phi^-_x\partial_{\phi^-_x} W_\L({ A},{\phi})=1\;.  
\end{eqnarray}
Similarly, deriving with respect to $\a^-_x$ and setting
${\a}={{0}}$, we obtain: 
\begin{eqnarray}
\label{eq:der2}
\sum_{e:w(e)= x}\partial_{A_e} W_\L({A},{\phi})+\phi^+_x\partial_{\phi^+_x} W_\L({A},{\phi})=1\;.  
\end{eqnarray}
Next, we derive \eqref{eq:der1}, \eqref{eq:der2} w.r.t. 
$\phi^+_y$ and $\phi^-_z$ and we set ${\phi}={A}={0}$, thereby finding \eqref{eq:der3}.
\end{proof}

Equations \eqref{eq:der3} can be rewritten as
\bea 
\label{quasiward}
&&\sum_{e:b(e)=x}\EE_{\lambda,\L}( \mathcal I_e;\psi^-_y\psi^+_z) + \delta_{x=z}\EE_{\lambda,\L}( \psi^-_y\psi^+_x)=0
\\
&&\sum_{e:w(e)=x}\EE_{\lambda,\L}( \mathcal I_e;\psi^-_y\psi^+_z)+  \delta_{x=y}\EE_{\lambda,\L}(\psi^-_x\psi^+_z)=0,\label{quasiward2}
\eea
where the semicolon indicates truncated expectation (i.e., $\EE_{\l,\L}(A;B)=\EE_{\l,\L}(AB)-\EE_{\l,\L}(A)\EE_{\l,\L}(B)$). 

\begin{Remark}
If $\l=0$, one can check that \eqref{quasiward} reduces to the
statement that $\EE_{0,\Lambda}(\psi^-_x\psi^+_y)=K_0^{-1}(x,y)$
satisfies $K_0 K_0^{-1}=I$, i.e. the non-homogeneous discrete
Cauchy-Riemann equation.
\end{Remark}

If $e$ is the edge of type $r\in\{1,2,3,4\}$ with $b(e)=x$, we let 
\begin{gather} G^{(2,1)}_r(x, y,z):=\EE_\lambda(
\mathcal I_e;\psi^-_y\psi^+_z)
\end{gather}
(recall that $\EE_\lambda$ is the $L\to\infty$ limit of $\EE_{\lambda,\L}$).
Similarly, we define
\begin{gather}G^{(2)}(x,y):=\EE_\lambda( \psi^-_x\psi^+_y).
\end{gather}
Note that both $G^{(2,1)}_r$ and $G^{(2)}$ are translationally invariant. Their Fourier transform is defined by 
\begin{gather}\label{eq:FG21}
\hat G^{(2)}(p)=\sum_{x} G^{(2)}(x,0)e^{i p x}\\
\hat G^{(2,1)}_r(k,p)=\sum_{x,z}e^{-i px-ikz}
G^{(2,1)}_r(x,0,z)
\end{gather}
so that
\begin{gather}
\label{eq:AFG21}
\int\limits_{[-\p,\p]^2} \frac{dp}{(2\pi)^2}e^{-ipx}\hat G^{(2)}(p)=G^{(2)}(x,0)\\
\int\limits_{[-\p,\p]^2} \frac{dp}{(2\pi)^2}\int\limits_{[-\p,\p]^2} \frac{dk}{(2\pi)^2}e^{ip x+ik z}\hat G^{(2,1)}_r(k,p)=G^{(2,1)}_r(x,0,z).
\end{gather}

From \eqref{quasiward}-\eqref{quasiward2} we finally deduce  our exact lattice Ward
identity of interest:
\begin{Proposition}[Lattice Ward Identities]
For every $x,y,z$, 
\be
\label{eq:WIreticolo}
\delta_{x=y}G^{(2)}(x,z)  -\delta_{x=z}G^{(2)}(y,x)=-\sum_{r=2}^4\nabla_{-v_r}G^{(2,1)}_r(x,y,z),\ee
where $(\nabla_n f)(x,y,z):=f(x+n,y,z)-f(x,y,z)$ is the (un-normalized) discrete derivative acting on the $x$ variable. 
\end{Proposition}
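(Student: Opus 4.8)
The plan is to obtain \eqref{eq:WIreticolo} directly from the pair \eqref{quasiward}--\eqref{quasiward2}, by letting $L\to\infty$ and reorganizing the two edge sums in terms of the black-site coordinate. First I would pass to the weak limit defining $\EE_\lambda$ as the $L\to\infty$ limit of $\EE_{\lambda,\L}$: the truncated quantities $\EE_{\lambda,\L}(\mathcal I_e;\psi^-_y\psi^+_z)$ converge to $G^{(2,1)}_{r(e)}(b(e),y,z)$, while $\EE_{\lambda,\L}(\psi^-_a\psi^+_b)\to G^{(2)}(a,b)$, so that \eqref{quasiward}--\eqref{quasiward2} survive in the limit as identities among $G^{(2,1)}_r$ and $G^{(2)}$.

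The key bookkeeping step is to parametrize the two edge sums by the type $r$ and the black site. In \eqref{quasiward} the four edges with $b(e)=x$ are exactly those of type $r=1,\dots,4$ with black vertex $x$, so the limiting identity reads
\be
\sum_{r=1}^4 G^{(2,1)}_r(x,y,z)=-\delta_{x=z}\,G^{(2)}(y,x).
\ee
For \eqref{quasiward2} the sum is over the edges with $w(e)=x$, and I must re-express it in black-site coordinates: by Remark \ref{rmk:4} the white endpoint of the type-$r$ edge with black vertex $x'$ is $x'+v_r$, hence the type-$r$ edge with $w(e)=x$ has black vertex $x-v_r$, and the limiting identity becomes
\be
\sum_{r=1}^4 G^{(2,1)}_r(x-v_r,y,z)=-\delta_{x=y}\,G^{(2)}(x,z).
\ee

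Finally I would subtract the first of these from the second. Using the definition of the discrete derivative, the left-hand side collapses to $\sum_{r=1}^4(\nabla_{-v_r}G^{(2,1)}_r)(x,y,z)$, and the crucial point is that the $r=1$ contribution vanishes identically because $v_1=0$ makes $\nabla_{-v_1}=\nabla_0=0$; the sum therefore runs only over $r=2,3,4$. The right-hand side is $-\delta_{x=y}G^{(2)}(x,z)+\delta_{x=z}G^{(2)}(y,x)$, and multiplying through by $-1$ reproduces \eqref{eq:WIreticolo}. The only genuinely delicate points are the correct sign in the reindexing $x\mapsto x-v_r$ of the white-vertex sum---an error there would spoil the discrete derivative structure---and the interchange of $\lim_{L\to\infty}$ with the finite edge sums and $\phi$-derivatives, which is already guaranteed by the uniform generating-function estimates of \cite{GMTlungo}. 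Beyond this careful bookkeeping I do not expect any real obstacle.
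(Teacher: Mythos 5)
Your proposal is correct and takes essentially the same route as the paper: the Proposition is deduced there directly from \eqref{quasiward}--\eqref{quasiward2} by passing to the $L\to\infty$ limit, reindexing the white-vertex sum via the observation that the type-$r$ edge with $w(e)=x$ has black vertex $x-v_r$, and subtracting the two identities, exactly as you do. Your bookkeeping (the vanishing of the $r=1$ term because $v_1=0$, the sign conventions, and the justification of the limit exchange via the uniform bounds of \cite{GMTlungo}) matches the paper's implicit derivation.
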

After Fourier transform, this identity reads
\be\label{eq:WIF}\hat G^{(2)}(k+p)-\hat G^{(2)}(k)=\sum_{r=2}^4(e^{-ipv_r}-1)\hat G^{(2,1)}_r(k,p),\ee

\section{The reference model}\label{sec:relativistico}

In order to prove Theorem \ref{th:k=k}, we intend to follow a logic analogous to
\cite{BMun,BMdr}. The general scheme is as follows: we introduce a
relativistic reference model whose correlation functions have the same
long distance behavior as our lattice model, provided the bare
parameters entering the definition of its action are properly
chosen. The relativistic model satisfies Ward Identities corresponding
to local chiral gauge invariance, which guarantees the validity of
exact scaling relations among the critical exponents of its
correlation functions (and, therefore, a posteriori, of the
correlation functions of the dimer model). In addition to the
relations between critical exponents, we can get exact relations among
the exponents and the {\it amplitudes} of the correlations, by
comparing the relativistic Ward Identities with the lattice ones,
which are the same at dominant order (asymptotically at large
distances, or for momenta close to the Fermi points).

In this section, we introduce the reference relativistic model and
recall its Ward Identities.  The reference model is nothing but the
formal scaling limit of the Grassmann integral of our dimer model,
properly regularized, thanks to the presence of: (1) a smooth
non-local density-density potential $v_0$, decaying on a fixed scale,
which sets the unit length (the smooth, rather than delta-like
potential, provides an ultraviolet cut-off on the fermionic
interaction), (2) an infrared regularization on the propagator, induced by 
the presence of a finite box of side $L$ and by the anti-periodic boundary conditions 
enforced on the Grassmann fields, (3) an ultraviolet
regularization on the propagator, which cuts off the momenta larger
than $2^N$, with $N\gg 1$, and (4) an underlying lattice of mesh $a$, which guarantees that the number of Grassmann variables is finite.
The limit $a\to 0$, followed by $N\to\infty$ and then $L\to\io$,  
is called the {\it limit of removed cut-offs}. Note
that in this limit the decay scale of $v_0$ is kept fixed: therefore,
even the limiting theory has an ultraviolet regularization, which
guarantees the finiteness of the bare parameters to be fixed.

Given $L>0$ and an integer $M$, we let $a=L/M$ and we define $\L=a\mathbb Z^2/L\mathbb Z^2$ to be the discrete torus of side $L$ and lattice mesh $a$. 
The reference model is defined in terms of a generating functional 
$\mathcal W^\L_N(J,\phi)$ parametrized by 
\begin{itemize}
\item four real constants $Z,Z^{(1)},Z^{(2)},\l_{\infty}$;
\item external sources $J$, where $J=\{J^{(j)}_{x,\o}\}^{j=1,2}_{\o=\pm,\, x\in\L}$ and $J^{(j)}_{x,\o}\in\mathbb R$;
\item external Grassmann sources $\phi$, where $\phi=\{\phi^\sigma_{x,\o}\}^{\s,\o=\pm}_{x\in\Lambda}$ and $\phi^\sigma_{x,\o}$ is a Grassmann variable.
\end{itemize}
The generating function is defined by 
\be\lb{vv1}
e^{\WW^\Lambda_{N}(J,\phi)} = \frac{\int\! P_Z^{[\le N]}(d\psi)
e^{\VV(\sqrt{Z}\psi) + \sum_{j=1}^2Z^{(j)}(J^{(j)},\,\rho^{(j)})+
Z[(\psi^{+},\phi^-)+(\phi^+, \psi^{-})]}}
{\int\! P_Z^{[\le N]}(d\psi)
e^{\VV(\sqrt{Z}\psi) }}
\;,\ee
where, letting $\int_\L dx:=a^2\sum_{x\in\L}$, we defined $(J^{(j)},\r^{(j)}):=\sum_{\o}\int_\Lambda d x\, J^{(j)}_{x,\o}\r^{(j)}_{x,\o}$,  with
\be\lb{rhodef}
\r^{(1)}_{x,\o} = \psi^+_{x,\o} \psi^{-}_{x,\o}\;,
\qquad 
\r^{(2)}_{x,\o} = \psi^+_{x,\o} \psi^{-}_{x,-\o}\;.
\ee
and 
$$(\psi^+,\phi^-):=\sum_{\o=\pm}\int_{\L} d x\, \psi^+_{x,\o}\phi^-_{x,\o},\quad  (\phi^+,\psi^-):=\sum_{\o=\pm}\int_{\L} dx\, \phi^+_{x,\o}\psi^-_{x,\o}\;,$$ 
Moreover, $P_Z^{[\le N]}(d\psi)$ is
the fermionic measure with propagator (satisfying anti-periodic boundary conditions over $\L$)
\be\lb{gth}
\frac{1}{Z} g^{[\le N]}_{R,\o}(x-y)=\frac{1}{Z}\frac{1}{L^2}\sum_{k\in\mathcal D}e^{-i k(x-y)}\frac{\chi_{N}(k)}{(-i-\o)k_1+(-i+\o)k_2}\;,
\ee
where  $\chi_{N}(k)=\chi(2^{-N}|k|)$, with $\chi:\mathbb R^+\to \mathbb R$ a $C^\infty$ cutoff function that is equal to 1, if its argument is smaller than 1, and 
equal to 0, if its argument is larger than 2, and $\mathcal
D=(2\pi/L)(\mathbb Z/M\mathbb Z+1/2)^2$ (we recall that $M=L/a$). 

\begin{Remark}
Note that, in the limit of removed cut-offs, the cut-off function $\c_N$ tends to 1, the Riemann sum over $k\in\mathcal D$ tends to the corresponding integral, 
so that \eqref{gth} reduces to $1/Z$ times the
inverse of the Dirac operator \be D^x_\o:=(1-i\o)\partial_{x_1}+(1+i\o)\partial_{x_2}.\label{Dxo}\ee
Also, compare \eqref{gth} with \eqref{eq:16bis}: asymptotically at large distances,
the lattice Grassmann fields $\psi^\pm_{x,\o}$ and the ones of the
continuous model have the same propagator (apart from the constant pre-factor $1/Z$).
\end{Remark}

Finally, the interaction is
\be\lb{gjhfk} \VV(\psi)=\frac{\l_\io}2 \sum_{\o=\pm}\int_\Lambda
dx \int_\Lambda dy\ v_0(x-y) \psi^+_{x,\o}
\psi^-_{x,\o}\psi^+_{y,-\o}\psi^-_{y,-\o}\;, \ee
where $v_0$ is a smooth rotationally invariant potential, exponentially decaying to zero at large distances, 
of the form
\be v_0(x)=\frac{1}{L^2}\sum_{p\in (2\pi/L)\mathbb Z^2} \hat v_0(p)
e^{ipx}\;, \label{eqv0}\ee
with $|\hat v_0(p)|\le C e^{-c |p|}$, for some constants $C$, $c$, and
$\hat v_0(0)=1$. Note the similarity between \eqref{gjhfk} and the dominant, `local', quartic interaction of the interacting dimer model, \eqref{eq:21}.

We shall use the following definitions\footnote{In the right sides of \eqref{eq:5.6}, the space label $x$, $y$, $z$ of the external fields $J^{(j)}_\o$, $\phi^\pm_\o$ 
should be actually interpreted as the points $x_a$, $y_a$, $z_a$ in $\L$ closest to $x$, $y$, $z$. Clearly, $\lim_{a\to 0}x_a=x$, etc. In the formulas, we drop the label $a$ just for lightness of notation.}:
\bea
&&G^{(2,1)}_{R,\o',\o}(x,y,z) =\lim_{L\to\io}\lim_{N\to\infty}\lim_{a\to 0} \frac{\partial^3}{\partial J_{x,\o'}^{(1)}\partial\phi^-_{z,\o}\partial\phi^+_{y,\o} }
\WW^\L_{N}(J,\phi)|_{J=\phi=0}\;,\nn\\
&&G^{(2)}_{R,\o}(x,y) =\lim_{L\to\io}\lim_{N\to\infty}\lim_{a\to 0} \frac{\partial^2}{ \partial\phi^-_{y,\o}
\partial\phi^+_{x,\o}} \WW^\L_{N}(J,\phi)|_{J=\phi=0}\;,\label{eq:5.6}\\
&&S^{(j,j)}_{R,\o,\o'}(x,y)=\lim_{L\to\io}\lim_{N\to\infty}\lim_{a\to 0}\frac{\partial^2}{\dpr J_{x,\o}^{(j)} \dpr
J_{y,\o'}^{(j)}}\WW^\L_{N}(J,\phi)|_{J=\phi=0}\;.\nn
\eea
The very existence of the limits in the right sides follows from the construction of the correlation functions of the 
reference model, performed, e.g., in \cite[Sect. 3 and 4]{BFM}. The Fourier transforms of the correlations in \eqref{eq:5.6} are defined as follows:
\bea
&&G^{(2,1)}_{R,\o',\o}(x,y,z)=\int_{\mathbb R^2} \frac{dk}{(2\pi)^2}\int_{\mathbb R^2}\frac{dp}{(2\pi)^2} e^{ipx -i(k+p)y
+ikz} \hat G^{(2,1)}_{R,\o',\o}(k,p)\;,\nn\\
&&G^{(2)}_{R,\o}(x,y) = \int_{\mathbb R^2}\frac{ dk}{(2\pi)^2} 
e^{-ik(x-y)}\hat G^{(2)}_{R,\o}(k)\;,\\
&&S^{(j,j)}_{R,\o,\o'}(x,y)=\int_{\mathbb R^2} \frac{dp}{(2\pi)^2} e^{ip(x-y)}
\hat S^{(j,j)}_{R,\o,\o'}(p)\;.\nn
\eea

In the limit of removed cut-offs, $\hat G^{(2,1)}_{R,\o',\o}$ and $\hat G^{(2)}_{R,\o}$
satisfy the following remarkable identities: 
for small $\l_{\io}$, if
$k$ and $k+p$ are different from $0$,
\be\label{h11}
\frac{Z}{Z^{(1)}}\sum_{\omega'=\pm}D_{\omega'}(p)\hat G^{(2,1)}_{R,\omega',\o}(k,p)=
 \hat F(p) [\hat G^{(2)}_{R,\o}(k) - \hat G^{(2)}_{R,\o}(k+p)]\;,\ee
where $D_\omega(p)$ was defined in \eqref{eq:10bis}, 
and 
\begin{eqnarray}
  \label{eq:Ap}
 \hat F(p)=\frac1{1-\tau \hat v_0(p)}, \quad \tau=-\frac{\lambda_\infty}{8\pi}.
\end{eqnarray}

Equations \pref{h11} is a Ward Identity, derived by a local chiral
gauge transformation of the Grassmann field $\psi$, i.e., by the
$\o$-dependent phase transformation $\psi^\pm_{x,\o}\to e^{\pm
  i\a_\o(x)}\psi^\pm_{x,\o}$.  The fact that $\hat F(p)$ is not equal
to $1$ is a manifestation of the {\it anomalies} in quantum field
theory. Finally, the linearity of $1/\hat F(p)$ in terms of $\l_{\io}$
is a property called {\it anomaly non-renormalization}.
Eq.\eqref{h11} was proved in \cite{M07}, see also \cite{BFM} and the
comments after Prop.12 in \cite{GMTlungo}.  A sketch of its proof is
also discussed, for completeness, in Appendix \ref{app3}.

Similarly, the density-density correlations $S^{(1,1)}_{R,\o,\o'}(x,y)$ satisfy the following identities (see, again, Appendix \ref{app3}): 
\bea\lb{tt}
&&D_\o(p) \hat S^{(1,1)}_{R,\o,\o}(p) - \t\ \hat v_0(p) D_{-\o}(p) \hat
S^{(1,1)}_{R,-\o,\o}(p) + \frac{(Z^{(1)})^2}{8\pi  Z^2} D_{-\o}(p)=0\;,\nn\\
&&D_{-\o}(p) \hat S^{(1,1)}_{R,-\o,\o}(p) - \t\  \hat v_0(p) D_{\o}(p) \hat
S^{(1,1)}_{R,\o,\o}(p) =0\;,
\eea
which imply:
\bea
\label{goo}
&&\hat S^{(1,1)}_{R,\o,\o}(p) = -\frac{1}{Z^2}\frac{(Z^{(1)})^2}{8\p  (1-\t^2\hat v_0(p)^2)}
\frac{D_{-\o}(p)}{D_\o(p)}\;.
\eea

\section{Proof of Theorem \ref{th:k=k}}\label{sec8}

In order to transfer the information encoded in the Ward Identities of the reference model to the correlation functions of the dimer model, 
we choose the parameters $Z, Z^{(1)}, Z^{(2)}, \l_{\infty}$ of the
reference model so that its correlations are asymptotically the same as those associated with the 
lattice Grassmann generating function \eqref{eq:gen_int}, in the limit of small momenta/large distances. More precisely, the parameters of the reference model can be fixed so that the 
following asymptotic relation among the correlation functions of the two models is valid. 

\begin{Lemma}\lb{lm1} Given $\l$ small enough, there are constants $Z$, $Z^{(1)}$, $Z^{(2)}$,
$\l_\io$, depending analytically on $\l$, such that, if $|p|\le 1$,
\bea\lb{h10}
&&\hat G^{(0,2)}_{r,r'}(p) = K_rK_{r'} \sum_{\o=\pm}\o^{r+{r'}}\hat S^{(1,1)}_{R,\o,\o}(p)+R'_{r,r'}(p)\;,\\
&&\hat G^{(0,2)}_{r,r'}(p^-+p) = K_rK_{r'} \sum_{\o=\pm}(-1)^{r'-1}\o^{r+{r'}}\hat S^{(2,2)}_{R,\o,-\o}(p)+R''_{r,r'}(p)\;,\quad{\phantom{\cdot}}
\label{h10aa}
\eea
where $R'_{r,r'}(p)$ and $R''_{r,r'}(p)$ are continuous in $p$, notably at $p=0$. Moreover, supposing that $0<\mathfrak c\le |p|,|k|,|k+p|\le
2\mathfrak c$ 
for some $\mathfrak c<1$,  then for any $0<\th<1$ one has
\bea\lb{h10a}
&&\hat G^{(2,1)}_r(k+ p^\o, p)
= -K_r\sum_{\o'=\pm}(\o')^{r-1}\hat G^{(2,1)}_{R,\o',\o}(k,p)[1+O(\mathfrak c^\th)]\;,\\
&&\hat G^{(2)}(k+p^\o) = \hat G^{(2)}_{R,\o}(k)[1+O(\mathfrak c^\th)]\;.\lb{h10ab}
\eea
\end{Lemma}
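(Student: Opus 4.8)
The plan is to prove Lemma \ref{lm1} by a scale-by-scale comparison of the convergent multiscale (Renormalization Group) expansions of the two models, exploiting that the reference model of Section \ref{sec:relativistico} is, by construction, the infrared fixed point of the Grassmann representation \eqref{eq:gen_int_phi} of the dimer model. First I would recall from \cite{GMTlungo,GMTcorto} the RG analysis of \eqref{eq:gen_int_phi}: after integrating out the ultraviolet scales, the flow is governed by a sequence of effective running couplings, namely a wave-function renormalization $Z_h$, two vertex renormalizations $Z^{(1)}_h,Z^{(2)}_h$ associated respectively with the density and mass bilinears of \eqref{eq:calIappr}, and an effective interaction strength $\l_h$. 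These converge, as $h\to-\io$, to finite limits depending analytically on $\l$, and I would fix the bare parameters $Z,Z^{(1)},Z^{(2)},\l_\io$ of the reference model precisely as these infrared data; the analyticity in $\l$ is inherited from the convergence of the expansion via an implicit-function argument in the space of running couplings.

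With this choice, I would compare the two expansions term by term. For the two-point and vertex functions \eqref{h10a}--\eqref{h10ab}, the external momenta lie in the single-scale shell $\mathfrak c\le|k|,|p|,|k+p|\le 2\mathfrak c$, so both correlations are dominated by the scales $h\simeq\log_2\mathfrak c$; on these scales the effective propagator of the dimer model coincides, by \eqref{eq:16bis} and \eqref{gth}, with $Z^{-1}$ times the relativistic one, and the relevant part of the effective vertex has the index structure of \eqref{eq:calIappr}. The prefactors $K_r$, the powers $(\o')^{r-1}$ and the $\o$-sums are those read off from \eqref{eq:calIappr}, while all remaining contributions -- irrelevant quartic and higher terms, the lattice correction of size $O(|x|^{-2})$ to the propagator in \eqref{eq:12}, and the mismatch between the delta interaction \eqref{eq:21} and the smooth potential $v_0$ -- are suppressed by a relative factor $O(\mathfrak c^\th)$, with $\th<1$ fixed by the scaling gap of the least irrelevant operator. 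This produces the multiplicative error form $[1+O(\mathfrak c^\th)]$.

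For the dimer-dimer correlation \eqref{h10}--\eqref{h10aa}, I would insert the decomposition \eqref{eq:calIappr} of $\II_e$ into its smooth density part $\psi^+_{x,\o}\psi^-_{x,\o}$ and its oscillating mass part $(-1)^x\psi^+_{x,\o}\psi^-_{x,-\o}$. At small external momentum $p$ the density part dominates and reproduces $K_rK_{r'}\sum_\o\o^{r+r'}\hat S^{(1,1)}_{R,\o,\o}(p)$, with $\hat S^{(1,1)}$ as in \eqref{goo}; near $p^-=(\pi,\pi)$, where the factor $(-1)^x$ shifts the momentum by $p^-$, the mass part dominates and reproduces the corresponding $\hat S^{(2,2)}_{R,\o,-\o}$ term, the signs and powers of $\o$ (in particular the $(-1)^{r'-1}$) being again read off from the mass prefactors in \eqref{eq:calIappr}. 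The remainders $R'_{r,r'}$ and $R''_{r,r'}$ collect the density-mass cross terms, the higher-order terms of \eqref{eq:calIappr}, and the subleading scales; since all of these decay in real space faster than $|x|^{-2}$, hence are absolutely summable, their Fourier transforms are continuous, in particular at $p=0$.

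The main obstacle will be the uniform, scale-by-scale control of these error terms together with the consistency of the coupling matching. Concretely, once the bare reference parameters are fixed as the infrared data of the dimer flow, one must show that the two renormalized expansions possess identical dominant kernels on every infrared scale, the discrepancies being genuinely irrelevant in the RG sense; this rests on the convergence of the tree expansions for the dimer model \cite{GMTlungo} and for the reference model \cite{BFM}. The delicate bookkeeping is to verify that all lattice artifacts (the $O(|x|^{-2})$ tail in \eqref{eq:12}, the finite-mesh and ultraviolet cutoffs of \eqref{gth}) and the difference between $v_0$ and the dimer interaction contribute only to the continuous remainders and to the $O(\mathfrak c^\th)$ terms, uniformly as the cutoffs are removed.
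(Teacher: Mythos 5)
Your proposal is correct in substance and takes essentially the same route as the paper: the paper proves Lemma \ref{lm1} by reducing it, through the notational dictionary of Section \ref{sec9} (coordinate change $\tilde x(x)$, field rescaling $\psi^\pm_{x,\o}=\sqrt2\,\tilde\psi^\pm_{\tilde x(x),\o}$), to \cite[Prop.~12]{GMTlungo} and the discussion following it, and the proof of that proposition is precisely the multiscale matching of the dimer flow with the reference model that you sketch, with the remainders controlled exactly as you describe (irrelevant terms, lattice corrections, and the $v_0$ mismatch feeding the continuous remainders and the $O(\mathfrak c^\th)$ factors). The one imprecision worth noting is that the running constants $Z_h$, $Z^{(1)}_h$, $Z^{(2)}_h$ do \emph{not} converge to finite limits as $h\to-\io$ (they have anomalous power-law behavior, which is the very origin of the nontrivial exponents), so the bare parameters $Z$, $Z^{(1)}$, $Z^{(2)}$, $\l_\io$ must be fixed by requiring asymptotic agreement of the two flows (ratios of the corresponding running constants tending to one), via the implicit-function argument you invoke, rather than by literally equating them with infrared limits.
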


\begin{Remark}
At lowest non-trivial order in $\l$, the parameters of the reference model are: $Z=1+O(\l)$, $Z^{(1)}=1+O(\l)$, $Z^{(2)}=1+O(\l)$, and $\l_\infty=-16\lambda+O(\lambda^2)$. 
In order to obtain these values, at lowest non-trivial order, it is enough to fix the parameters in the bare Grassmann actions of the reference and dimer models, so that the 
corresponding expressions match at dominant order: compare, e.g., \eqref{gth} with \eqref{eq:16bis} and \eqref{gjhfk} with \eqref{eq:21}.
\end{Remark}

Lemma \ref{lm1} is a restatement (using new notations) and a slight extension of \cite[Prop.12]{GMTlungo}. The comparison between Lemma \ref{lm1}
and \cite[Prop.12]{GMTlungo} is discussed in the next section. Let us now discuss its implications, in particular let us show how to use it in order to prove Theorem \ref{th:k=k}.

\bigskip

By combining \pref{h10a}-\eqref{h10ab} with \pref{eq:WIF}, and using the fact that $\sum_{r=2}^4K_r(ip\cdot v_r)(\o')^{r-1}=-D_{\o'}(p)$,
we obtain: 
\bea\lb{xa1}
&& \sum_{\o'=\pm}D_{\o'}(p)\hat G^{(2,1)}_{R,\o',\o}(k,p) 
= \left[ \hat G^{(2)}_{R,\o}(k) - \hat
G^{(2)}_{R,\o}(k+p) \right ] [1+O(\mathfrak c^\th)],\qquad {\phantom{\cdot}}
\eea
that, if compared with \eqref{h11} and recalling that $\hat v_0(p)=1+O(p)$, gives
\be \frac{Z^{(1)}}{(1-\t)Z}=1\;.\label{z1tau}\ee
Moreover, by combining \eqref{goo} with \eqref{h10} we obtain:
\be\lb{h100}
\hat G^{(0,2)}_{r,r'}(p) = - \frac{K_rK_{r'}(Z^{(1)})^2}{8\p Z^2(1-\t^2)}\sum_{\o=\pm}\o^{r+{r'}}\frac{D_{-\o}(p)}{D_\o(p)}+\tilde R_{r,r'}(p)\;,
\ee
where $\tilde R_{r,r'}$ is continuous in $p$, in particular at $p=0$. By using \eqref{z1tau}, we can rewrite this equation as
\be\lb{h101}
\hat G^{(0,2)}_{r,r'}(p) = -\frac{K_rK_{r'}}{8\p}\frac{1-\t}{1+\t} \sum_{\o=\pm}\o^{r+{r'}}\frac{D_{-\o}(p)}{D_\o(p)}+\tilde R_{r,r'}(p)\;.
\ee
Taking for instance $r=r'=1$ this implies that
\begin{eqnarray}
  \label{eq:rr1}
 \hat G^{(0,2)}_{1,1}(p) =-\frac{1}{2\pi}\frac{1-\tau}{1+\tau}\frac{p_1 p_2}{p_1^2+p_2^2}+\tilde R_{1,1}(p).
\end{eqnarray}
This
should be compared with the Fourier transform of \eqref{eq:28}. 
Indeed, recall from the definition \eqref{eq:13} that \[G^{(0,2)}_{1,1}(x)=\langle
\openone_e;\openone_{e'}\rangle_\lambda,\] with $e,e'$ two horizontal 
edges of type 1, with black sites $b(e),b(e')$ of coordinates $x$ and $0$ respectively, and 
white sites of coordinates $w(e), w(e')$ also of coordinates $x$ and $0$.
Therefore, \eqref{eq:rr1} must also equal
\begin{eqnarray}
  \label{eq:must}
  \sum_{x\in\mathbb Z^2} e^{-ip x}\langle \openone_e;\openone_{e'}\rangle_\lambda.
\end{eqnarray}
In the computation of the Fourier transform for small values of $p$,
both the error term $R_{1,1}(x)$ in \eqref{eq:28} as well as the
oscillating term proportional to $B(\lambda)$  give a
contribution that is continuous in $p$ for $p$ in a neighborhood of $0$.  In
contrast, the term proportional to $A(\lambda)$ has a Fourier
transform that is not continuous at $p=0$. Namely,
\begin{eqnarray}
\label{eq:rsi}
  &&\sum_{x\in\mathbb Z^2} e^{-ip x}\langle \openone_e;\openone_{e'}\rangle_\lambda=\\
  &&\qquad =  \frac{A(\l)}{2\p^2}\sum_{x\ne 0} e^{-i p x}\text{Re}\left[\frac1{((x_1+x_2)+i(x_2-x_1))^2}\right]+\bar R(p)\nonumber\\
  &&\qquad = -\frac{A(\l)}{2\pi}\frac{p_1p_2}{p_1^2+p_2^2}+\bar{\bar R}(p)
\end{eqnarray}
where $\bar R(p)$ and $\bar{\bar R}(p)$ are continuous in $p$, notably
at $p=0$ (for a proof of the second identity, see, e.g., Appendix
\ref{app:residui}).  By comparing this expression with \eqref{eq:rr1}
we get \be A(\l)=\frac{1-\t}{1+\t}\;.\label{eq:Atau}\ee

On the other hand, the right side of this equation, $(1-\t)/(1+\t)$, coincides with the critical exponent of the correlation $S^{(2,2)}_{R,\o,-\o}$. In fact, 
it was proved in \cite[Section 4.2]{BFM}, see in particular \cite[Eqs.(4.24),(4.26)]{BFM}, that 
\be S^{(2,2)}_{R,\o,\o'}(x,y)=C\frac{\d_{\o,-\o'}}{|x-y|^{2\nu}}+\bar R_{\o,\o'}(x,y),\label{eq:8.13}\ee
where: (i) for some $0<\th<1/2$ and a suitable constant $c_\th>0$, $|\bar R_{\o,\o'}(x,y)|\le c_\th|x-y|^{-3+\th}$; (ii) 
$C$ and $\nu$ are analytic functions of $\l_{\infty}$ and
\be \n=\frac{1-\t}{1+\t}.\label{eq:nutau}\ee
For completeness, we reproduce a sketch of the proof of this identity in 
Appendix \ref{app3}. 
 
By comparing \eqref{eq:28} with \eqref{h10aa},\eqref{eq:8.13}, we recognize that the critical exponent in \eqref{eq:8.13} is the same as the one in the second line of \eqref{eq:28}. Therefore, by combining \eqref{eq:Atau} with \eqref{eq:nutau}, we obtain the statement of Theorem \ref{th:k=k}, as desired. \qed

\section{Comparison of Lemma \ref{lm1} with \cite[Prop.12]{GMTlungo}}
\label{sec9}

As discussed after its statement, Lemma \ref{lm1} is essentially a restatement \cite[Prop.12]{GMTlungo} in the notation of this paper. 
More precisely, \eqref{h10} and \eqref{h10aa} are 
obtained by Fourier transforming \cite[Eqs.(6.91),(6.93)]{GMTlungo} with respect to $x-y$ and then by using 
\cite[Eq.(6.90)]{GMTlungo}, see below for more details. On the contrary, \eqref{h10a} and \eqref{h10ab} are not part of
\cite[Prop.12]{GMTlungo}, however, their proof follows by the same discussion as the one after \cite[Prop.12]{GMTlungo}.

\medskip

In order to help the reader recognizing that 
\eqref{h10} and \eqref{h10aa} are direct consequences of 
\cite[Eqs.(6.90),(6.91),(6.93)]{GMTlungo}, let us discuss more precisely the connection between the notations used here 
and in \cite{GMTlungo}. In order to avoid confusion, the parameters, fields and coordinates of the models in \cite{GMTlungo} will be 
referred here by adding an extra tilde on the symbols: we denote by 
$\tilde J^{(j)}_\omega(\tilde x)$ its source terms, by $\tilde v(\tilde x)$ its potential, 
by $\tilde \psi^\pm_{\tilde x,\omega}$ its Grassmann fields, etc. The coordinates $x$ used in this paper are related to those 
of \cite{GMTlungo}, denoted by $\tilde x$, by the formulas: $\tilde x(x)=(x_1+x_2,x_2-x_1)$ and its inverse $x(\tilde
x)=1/2(\tilde x_1-\tilde x_2,\tilde x_1+\tilde x_2)$. 

\medskip



The connection between the reference model in this paper and the one in \cite[Sect.6.3.2]{GMTlungo} is established by fixing:
\bea &&
\psi^\pm_{x,\omega}=\sqrt 2 \tilde\psi^\pm_{\tilde x(x),\omega},\qquad 
J^{(j)}_{x,\omega}=\tilde J^{(j)}_\omega(\tilde x(x)), \qquad v_0( x)=\tilde v(\tilde x(x)),\qquad \phantom{\cdot}\label{eq:9.2}
  \end{eqnarray}
and taking the same values for the parameters $\lambda_\infty,Z,Z^{(j)}$.
By fixing the parameters in this way, then the two relativistic models, in the limit of removed cut-offs, are exactly the same. 
The proof of this fact is straightforward, it is just a matter of tracking the normalization constants in the two cases correctly, and it is left to the reader. 
Note that, due to the factor $\sqrt2$ in the first of \eqref{eq:9.2}, the relativistic correlations $S^{(j,j)}_{R,\o,\o'}(x,y)$ defined in this paper are related 
to their analogues in \cite{GMTlungo} via the following:
\be S^{(j,j)}_{R,\o,\o'}(x,y)=4\tilde S^{(j,j)}_{R,\o,\o'}(\tilde x(x),\tilde x(y)).\label{eq:9.4}\ee
Now, by using this equation and the definition of dimer-dimer correlations, we immediately get \eqref{h10}-\eqref{h10aa}.
In fact, by direct inspection of \cite[Eq.(6.95)]{GMTlungo}, one sees that the 
dimer-dimer correlation in the notations of this paper is related to the one in \cite{GMTlungo} via
\begin{eqnarray}
  \label{eq:42}
  G^{(0,2)}_{r,r'}(x,y)=\left.\frac{\partial^2}{\partial \tilde J_{\hat
  x(x,r),j(r)}\partial \tilde J_{\hat y(0,r'),j(r')}}\tilde{\mathcal S}(\tilde{{\bf J}})\right|_{\tilde J=0},
\end{eqnarray}
where, given a coordinate $x $ and $r=1,2,3,4$, we let:  
$j(r)=1$ if $r=1,3$ (i.e. if the edge of type $r$ is horizontal), 
$j(r)=2$ if $r=2,4$ (i.e. if the edge is vertical), and 
\begin{eqnarray}
  \label{eq:43}
  \hat x(x,r)=
\left\{
  \begin{array}{ll}
    \tilde x(x)& \text{ if } r=1,2\\
\tilde x(x)-(1,0) & \text{ if } r=3\\
\tilde x(x)-(0,1) & \text{ if } r=4.
  \end{array}
\right.
\end{eqnarray}
Moreover, by using \cite[Eqs.(6.27)-(6.28), (6.91)--(6.93), (6.96)-(6.97)]{GMTlungo}, we can rewrite
\bea
  \label{eq:44}
  G^{(0,2)}_{r,r'}(x,y)&=&4 K_rK_{r'}
 \sum_{j=1}^2\sum_{\o=\pm} \Big[\o^{r+r'}\tilde S^{(1,1)}_{R;\o,\o}(\tilde x(x),\tilde x(y))\\
 &+&
 (-1)^{x-y}\o^{r-1}(-\o)^{r'-1}\tilde S^{(2,2)}_{R;\o,-\o}(\tilde x(x),\tilde x(y))\Big].\nonumber
\eea
By using \eqref{eq:9.4} and \cite[Eq.(6.90)]{GMTlungo}, and by taking Fourier transform at both sides, we finally get 
\eqref{h10} and \eqref{h10aa}.
The proof of \eqref{h10a} and \eqref{h10ab} goes along the same lines outlined after the statement of 
\cite[Prop.12]{GMTlungo} and we shall not belabor the details here.

\appendix
\section{Fourier singularities of the dimer-dimer correlation}
\label{app:residui}

In this section, we compute the dominant behavior of the Fourier transform $\hat G_{r,r'}^{(0,2)}(p)$ of the dimer-dimer correlation \eqref{eq:28}
close to the Fermi points, $p=0$ and $p=(\pi,\pi)$. The formulas derived here will be useful in the first order computation 
of $A(\l)$ and $\nu(\l)$, discussed in the next appendix.

\medskip

More precisely, we prove that, if $|p|\le 1$,
\bea 
\label{eq:sing1}
&& \hat G^{(0,2)}_{r,r'}(p)=-\frac{A(\lambda)}{4\pi}\text{Re}\left[K_r K_{r'}\frac{D_-(p)}{D_+(p)}
\right]+F_{r,r'}^+(p)\\
&&   \hat G^{(0,2)}_{r,r'}((\pi,\pi)+p)=\frac{t_{r,r'}}{2\pi}\frac{B(\lambda)}{2(\nu-1)}\Big[1-\Big(\frac{|p|}2\Big)^{2(\nu-1)}\frac{\G(2-\nu)}{\G(\n)}\Big]+
F^-_{r,r'}(p),\nonumber
\eea
where $\n=\n(\l)$, and $F^\pm_{r,r'}$ are functions that are continuous in $p$ in a neighborhood of $p=0$, uniformly in $\l$, for $\l$ small.

\subsection{Singularity at $p=0$} In order to prove the first of \eqref{eq:sing1}, we note that the discontinuous part of $\hat G^{(0,2)}_{r,r'}(p)$ 
at $p=0$ comes from the first term in the right side of \eqref{eq:28}, that 
is, if $|p|\le1$,
\be \hat G^{(0,2)}_{r,r'}(p)=-\frac{A(\l)}{2\p^2}\sum_{x\neq 0}e^{-ipx}\text{Re}\left[\frac{e^{i \pi/2(r+r')}}{((x_1+x_2)+i(x_2-x_1))^2}\right]+\tilde F^+_{r,r'}(p),
\label{eq:A.2}\ee
where $\tilde F^+_{r,r'}$ is continuous in $p$ in the region $|p|\le 1$, uniformly in $\l$. 
One can first rewrite, recalling \eqref{eq:12}-\eqref{eq:40} and the fact that $K_r=\exp(i\pi/2(r-1))$,
\begin{multline}
  \label{eq:Gnuova}
-\frac1{2\pi^2}\text{Re}\left[\frac{e^{i \pi/2(r+r')}}{((x_1+x_2)+i(x_2-x_1))^2}
\right]= -K_r K_{r'}g(v_{r'}-x)g(x+v_r)\\-K_rK_{r'}((-1)^{v_r}+(-1)^{v_{r'}})\frac{(-1)^{x_1+x_2}}{8\pi^2|x|^2}+O(|x|^{-3}).
\end{multline}
where we used also
the fact that $(-1)^{v_r+v_{r'}}=+1$, if $K_rK_{r'}\in\mathbb R$, and $(-1)^{v_r+v_{r'}}=-1$, if $K_rK_{r'}\in i\mathbb R$.
By plugging \eqref{eq:Gnuova} into \eqref{eq:A.2}, we get 
\begin{eqnarray}
  \label{eq:Gatto}
  \hat G^{(0,2)}_{r,r'}(p)=-A(\lambda)K_rK_{r'}\int\limits_{[-\p,\p]^2}\frac{dk}{(2\pi)^2}\frac{e^{-i k(v_r+v_{r'})}}{\mu(k)\mu(k+p)}
+\bar F^+_{r,r'}(p),
\end{eqnarray}
where $\bar F^+_{r,r'}$ is continuous in $p$ in the region $|p|\le 1$, uniformly in $\l$. 

Now, the desired result follows from the explicit computation of the integral over $k$, which is summarized in the following proposition, formulated here in greater generality, for later convenience 
(the first of \eqref{eq:sing1} follows by an application of this proposition, with $(a,b)=-v_r- v_{r'}$).

\begin{Proposition}
\label{prop:Ibis}
Let $a,b\in \mathbb Z$. Let
\begin{eqnarray}
  \label{eq:I}
  I_{(a,b)}(p)=\int_{[-\p,\pi]^2} \frac{dk}{(2\pi)^2}\frac{e^{i k_1 a+i k_2 b}}{\mu(k)\mu(k+p)}.
\end{eqnarray}
Then, one has
\begin{multline}
  \label{eq:Ibis}
  I_{(a,b)}(p)=\frac 1{8\pi}\left[\frac{D_-(p)}{D_+(p)}+(-1)^{a+b}\frac{D_+(p)}{D_-(p)}\right]
+\frac i{8\pi}\left[
1-(-1)^{a+b}
\right]
\\
+\frac{(1-b)}{2\pi}\int_0^{2\pi}dk_1 e^{i k_1 a}\frac{(1+i e^{i k_1})^{b-2}}{(e^{i k_1}+i)^b}
\left[1_{\{b\le 0\}} 1_{\{\pi\le k_1\le 2\pi\}}-1_{\{b\ge 1\}}1_{\{0\le k_1\le\pi\}}\right]
\\+R_{a,b}(p)
\end{multline}
where $R_{a,b}(p)$ vanishes continuously for $p\to0$.
Moreover,
\begin{eqnarray}
  \label{eq:Isimmetrie}
  I_{(a,b)}(-p)=(-1)^{a+b}I_{(a,b)}(p)^*,\quad  I_{(a,b)}(-p)=I_{(a,b)}(p)+R'_{(a,b)}(p)
\end{eqnarray}
where $R'_{a,b}(p)$ vanishes continuously for $p\to0$.
\end{Proposition}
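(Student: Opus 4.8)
The plan is to compute \eqref{eq:I} by performing the $k_2$-integral via residues, thereby reducing it to a one-dimensional integral over $k_1$, and then to isolate the part that fails to be continuous at $p=0$. Set $z=e^{ik_2}$ and $u=e^{ik_1}$. Both denominator factors are affine in $z$: from \eqref{eq:73}, $\mu(k)=A-Bz$ with $A=1+iu$, $B=u+i$, and $\mu(k+p)=A'-B'e^{ip_2}z$, where $A',B'$ are obtained from $A,B$ by replacing $u$ with $u'=e^{i(k_1+p_1)}$. Since $e^{ik_2 b}=z^b$ and $dk_2/(2\pi)=dz/(2\pi i z)$, the inner integral is the contour integral of $z^{b-1}/[(A-Bz)(A'-B'e^{ip_2}z)]$ over $|z|=1$. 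Its poles are $z_0=A/B$ and $z_1=A'/(B'e^{ip_2})$ (the zeros of the two factors), together with a pole of order $1-b$ at $z=0$ when $b\le0$. A direct computation gives $|z_0|^2=(1-\sin k_1)/(1+\sin k_1)$ and $|z_1|^2=(1-\sin(k_1+p_1))/(1+\sin(k_1+p_1))$, so $z_0$ (resp. $z_1$) lies inside the unit disk iff $\sin k_1>0$ (resp. $\sin(k_1+p_1)>0$). The residue theorem then gives
\[
I_{(a,b)}(p)=\int_0^{2\pi}\frac{dk_1}{2\pi}\,e^{ik_1a}\Big[\mathrm{Res}_{z_0}1_{\{\sin k_1>0\}}+\mathrm{Res}_{z_1}1_{\{\sin(k_1+p_1)>0\}}+\mathrm{Res}_{0}1_{\{b\le0\}}\Big].
\]

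I would first extract the part that is regular in $p$. For $k_1$ bounded away from $0$ and $\pi$ the two indicators coincide, and using the elementary identities $B+iA=2i$ and $AB=i(1+u^2)$ one finds $A'-B'e^{ip_2}z_0=(2u/B)(-ip_1+p_2\cos k_1)+O(|p|^2)$; the $O(1/|p|)$ singularities of $\mathrm{Res}_{z_0}$ and $\mathrm{Res}_{z_1}$ therefore cancel, and $\mathrm{Res}_{z_0}+\mathrm{Res}_{z_1}\to(b-1)z_0^{b-2}/B^2=(b-1)(1+iu)^{b-2}/(u+i)^b$ as $p\to0$. The residue at $z=0$ is smooth in $p$, and a short check (e.g.\ at $b=0,-1$) shows its value at $p=0$ equals $(1-b)z_0^{b-2}/B^2$, i.e.\ minus the bulk limit. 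Reorganizing the three contributions over $(0,\pi)$ and $(\pi,2\pi)$ reproduces precisely the explicit $k_1$-integral in \eqref{eq:Ibis}: for $b\ge1$ there is no $z=0$ pole and the bulk limit over $(0,\pi)$ survives, whereas for $b\le0$ the $z=0$ residue (integrated over all of $(0,2\pi)$) combines with the bulk over $(0,\pi)$ to leave the integral over $(\pi,2\pi)$, which is exactly the stated indicator structure $1_{\{b\le0\}}1_{\{\pi\le k_1\le2\pi\}}-1_{\{b\ge1\}}1_{\{0\le k_1\le\pi\}}$.

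The only contributions that can be discontinuous at $p=0$ come from the two boundary layers of width $O(|p|)$ around $k_1=0$ and $k_1=\pi$, where exactly one of $z_0,z_1$ crosses the unit circle, i.e.\ where $k$ approaches a Fermi point $p^\omega$. There I would replace $\mu$ by its linearization $D_\omega$ (see \eqref{eq:10}--\eqref{eq:10bis}) and $e^{ik\cdot(a,b)}$ by its value at the Fermi point, namely $1$ at $p^+=(0,0)$ and $(-1)^{a+b}$ at $p^-=(\pi,\pi)$; this is where the prefactor $(-1)^{a+b}$ in the second singular term originates. Passing to the complex variable $\zeta=k_1+ik_2$, in which $D_+(k)=-(1+i)\zeta$, the resulting continuum integral is computed explicitly and yields the discontinuous contribution $\tfrac1{8\pi}\,D_-(p)/D_+(p)$ from $p^+$ and $\tfrac{(-1)^{a+b}}{8\pi}\,D_+(p)/D_-(p)$ from $p^-$, together with the constant $\tfrac{i}{8\pi}[1-(-1)^{a+b}]$; everything else is continuous at $p=0$ and is absorbed into $R_{a,b}(p)$. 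I expect this Fermi-point analysis to be the main obstacle: one must control the boundary-layer approximation uniformly and pin down the exact coefficient $1/(8\pi)$ and the additive constant. This computation is of the same type as, and a refinement of, the one behind the second identity in \eqref{eq:rsi}; as a consistency check, for $r=r'=1$ one has $a+b=0$, $K_1=1$, and since $|D_+(p)|=|D_-(p)|$ for real $p$ the bracket $\tfrac1{8\pi}[D_-/D_++D_+/D_-]$ collapses to $\tfrac1{4\pi}\mathrm{Re}(D_-/D_+)$, matching the coefficient in \eqref{eq:sing1}.

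The two relations in \eqref{eq:Isimmetrie} follow from exact symmetries of $\mu$. Shifting $k\mapsto k-p$ gives $I_{(a,b)}(-p)=e^{ip\cdot(a,b)}I_{(a,b)}(p)$; since $I_{(a,b)}$ is bounded near $p=0$ (the $D_\pm$ ratios are homogeneous of degree $0$ and $D_\pm(p)\ne0$ for real $p\ne0$, as $|D_\pm(p)|^2=2|p|^2$), the factor $e^{ip\cdot(a,b)}-1=O(|p|)$ yields the second relation $I_{(a,b)}(-p)=I_{(a,b)}(p)+R'_{(a,b)}(p)$ with $R'_{(a,b)}\to0$. For the first relation I would use the two identities $\overline{\mu(k)}=\mu(-k_2,-k_1)$ and $\mu(k_2,k_1)=\mu(k+(\pi,\pi))$, both checked directly from \eqref{eq:73}. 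Complex conjugation together with the change of variables $k\mapsto(-k_2,-k_1)$ gives $I_{(a,b)}(p)^*=I_{(b,a)}(-p_2,-p_1)$, while the reflection $k\mapsto(k_2,k_1)$ followed by the shift by $(\pi,\pi)$ gives $I_{(b,a)}(q)=(-1)^{a+b}I_{(a,b)}(q_2,q_1)$; composing the two (with $q=(-p_2,-p_1)$) produces the exact identity $I_{(a,b)}(-p)=(-1)^{a+b}I_{(a,b)}(p)^*$.
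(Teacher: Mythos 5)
Your proposal is correct and takes essentially the same route as the paper's own (very terse) proof, which consists exactly of the change of variables $z_j=e^{ik_j}$, evaluation of the $z_2$-integral by residues with the noted caveat that the three poles may lie inside or outside $|z_2|=1$ depending on $p$ (your $\sin k_1$ versus $\sin(k_1+p_1)$ criterion), and the remaining $k_1$-integral. The details you supply beyond the paper's hints — the bulk limit reproducing the explicit $k_1$-integral (including the $z=0$ residue bookkeeping for $b\le 0$), the boundary-layer analysis yielding the $D_\pm$-ratio terms together with the constant $\frac{i}{8\pi}\left[1-(-1)^{a+b}\right]$, and the exact symmetries of $\mu$ used for \eqref{eq:Isimmetrie} — all check out.
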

Checking \eqref{eq:Isimmetrie} is a simple exercise. 
The proof of \eqref{eq:Ibis} is lengthy but straightforward; we do not give details but only a few hints. One starts by rewriting, via the change of variables $z_j=e^{i k_j},j=1,2$,
\begin{eqnarray}
  I_{(a,b)}(p)=-\frac1{(2\pi)^2}\oint \frac{dz_1}{z_1}\oint\frac{dz_2}{z_2}\frac{z_1^a z_2^b}{\hat \mu(z_1,z_2)\hat\mu(z_1 e^{i p_1},z_2 e^{i p_2})},
\end{eqnarray}
with $\hat\mu(z_1,z_2)=1-z_1z_2+i z_1-i z_2$ and the integrals running over
$|z_1|=|z_2|=1$. The denominator, as a function of $z_2$, is a polynomial
with three distinct zeros (as long as $p\ne0$). The integral w.r.t. $z_2$
is performed via the residue theorem and the remaining integral over
$z_1=e^{i k_1}$ produces \eqref{eq:Ibis}. The only point that requires attention is that according to the value of $p$, some zeros can be inside or outside the integration curve $|z_2|=1$.

\subsection{Singularity at $p=(\pi,\pi)$}
\label{sec:pipi}

Let us now prove the second of \eqref{eq:sing1}. From \eqref{eq:28} we see that
\begin{eqnarray}
  \label{eq:sing2}
  \hat G^{(0,2)}_{r,r'}((\pi,\pi)+p)=t_{r,r'}\frac{B(\lambda)}{4\pi^2}\sum_{x\ne0} \frac{e^{-i x p}}{|x|^{2\nu}}
+\tilde F^-_{r,r'}(p)
\end{eqnarray}
where $\tilde F^-_{r,r'}$ is continuous in $p$ in the region $|p|\le 1$, uniformly in $\l$. 
We have
\bea
  \label{eq:89}
   \sum_{x\ne0} \frac{e^{-i x
       p}}{|x|^{2\nu}}&=&\int_{|x|\ge1}\frac{e^{-i x
       p}}{|x|^{2\nu}}dx+B_1(p)\\&=&2\pi
   |p|^{2(\nu-1)} \int_{|p|}^\infty \frac{J_0(\rho
     )}{\rho^{2\nu-1}}d\rho+B_1(p)\nonumber
\eea
where $J_0$ is the Bessel function with index $n=0$ and $B_1(p)$ is continuous in a neighborhood of $p=0$, uniformly in $\nu$, for $\n-1$ small. 
Now, by integration by parts, and recalling that $J_0'(\r)=-J_1(\r)$ and $J_0(0)=1$, we find
\bea &&\int_{|p|}^\infty \r^{1-2\n}J_0(\rho)d\r=\frac1{2(\n-1)}\Big[
J_0(|p|)|p|^{2(1-\n)}-\int_{|p|}^\infty\!\!\!\! J_1(\r)\r^{2(1-\n)}\Big]\\&&\quad =
\frac1{2(\n-1)}\Big[
|p|^{2(1-\n)}-\int_{0}^\infty J_1(\r)\r^{2(1-\n)}\Big]+\int_0^{|p|}\!\!\!d\r J_1(\r)\int_\r^{|p|}\!\!\!dx\, x^{1-2\n}.\nonumber
\eea
Recalling that $J_1(x)$ vanishes linearly at $x=0$, we find that the last term is of order $O(|p|^{4-2\n})$, as $p\to 0$,
uniformly in $\n$ for $\n-1$ small. Moreover, the integral $\int_{0}^\infty J_1(\r)\r^{2(1-\n)}$ is explicitly known, see \cite[formula 6.561(14)]{GR}: 
\be \int_{0}^\infty J_1(\r)\r^{2(1-\n)}=2^{2(1-\n)}\frac{\G(2-\n)}{\G(\n)}.\ee
By plugging this formula in \eqref{eq:sing2}-\eqref{eq:89} we obtain, as desired, the second of \eqref{eq:sing1} .

\section{First-order calculation}
\label{app:primordine}

Here we check that $A(\lambda)=\nu(\lambda)$ at first order
in $\lambda$ and, more precisely, that \eqref{eq:Kkl} holds.  From
\eqref{eq:13} and 
\eqref{eq:5.17_int}--\eqref{eq:VA} we have that, at first order in $\lambda$,
if $e$ and $e'$ are two edges such that $r(e)=r$, $b(e)=x$, and $r(e')=r'$, $b(e')=y$,
\bea 
  G^{(0,2)}_{r,r'}(x,y)&=&\EE_0(E_e;E_{e'})-\lambda\EE_0(\mathcal
  I^{(1)}_e;E_{e'})-\lambda \EE_0(E_e;\mathcal
  I^{(1)}_{e'})\nonumber\\
&+& \lambda\EE_0(E_e;E_{e'}; V_4)+O(\lambda^2)  \label{eq:45}
\eea
where $\EE_0$ is the Gaussian Grassmann integration with propagator \eqref{eq:61}, in the limit $L\to\infty$, and the semicolon indicates truncated expectation. 
Moreover, given an edge $e_0$, we defined $\mathcal I^{(1)}_{e_0}:=E_{e_0}
(E_{e_1}+E_{e_2})$, with $e_1,e_2$ the two edges parallel to $e_0$ and
at a distance $1$ from it, while  $V_4$ was defined in \eqref{eq:19}.
By computing all the contributions to \eqref{eq:45} and by taking Fourier transform, we obtain (see below for the
details of the computation):
\bea && \hat G^{(0,2)}_{r,r'}(p)=   -(1-2\lambda)K_r
  K_{r'}\int \frac{dk}{(2\pi )^2}\frac{e^{-i k v_r-i(k+p)v_{r'}}}{\mu(k)\mu(k+p)} \nonumber\\
  && - 2\lambda K_r K_{r'}\int\frac{dk}{(2\pi)^2}\int\frac{dk'}{(2\pi)^2}
\frac{W(k,k',p)e^{-i v_r(k'-p)-i v_{r'}(k+p)}}{\mu(k)\mu(k+p)\mu(k')\mu(k'-p)}\label{eq:B.2}\\
&& + \frac{\lambda}{4}\int\frac{dk}{(2\pi)^2}\frac{
K_{r'}e^{-i
    v_{r'}(k+p)}f_r(k,p)+K_{r}e^{-i
    v_{r}(k+p)}f_{r'}(k,p)}{\mu(k)\mu(k+p)}+O(\lambda^2),\nonumber\eea
where the integrals over $k$ and $k'$ are taken over the Brillouin zone $[-\p,\p]^2$, and 
we recall that $v_1=(0,0),v_2=-(1,0),v_3=-(1,1),v_4=-(0,1)$, while $\mu(k)$ was defined in \eqref{eq:73}.
Moreover, 
\bea && W(k,k',p)=e^{i(k_1'+k_2'-p_2)}+e^{i(k_1'+k_2'-p_1)} +e^{i(k_1+k_2+p_2)} +e^{i(k_1+k_2+p_1)}\nonumber \\
&&\quad  -e^{i(k_2+k'_1+p_2)}-e^{i(k_1+k'_2+p_1)}-e^{i(k_2'+k_1-p_2)}-e^{i(k_1'+k_2-p_1)},\label{eq:B.3}
\eea
while $f_r(k,p)$, for $r=1,\ldots,4$, is defined by
\bea 
&&\hskip-.5truecm f_1(k,p)=e^{i(k_1+k_2)}(e^{ip_1}+e^{ip_2})-2+ie^{ik_2}(1+e^{ip_2})-ie^{ik_1}(e^{ip_1}+1),\nonumber\\
&&\hskip-.5truecm f_2(k,p)=2e^{i(k_1+k_2)}-(e^{-ip_1}+e^{-ip_2})+ie^{ik_2}(e^{-ip_1}+1)-ie^{ik_1}(1+e^{-ip_2}),\nonumber\\
&&\hskip-.5truecm f_3(k,p)=e^{i(k_1+k_2)}(1+e^{ip_2})-(e^{-ip_1}+1)+ie^{ik_2}(e^{-ip_1}+e^{ip_2})-2ie^{ik_1},\nonumber\\
&&\hskip-.5truecm f_4(k,p)=e^{i(k_1+k_2)}(e^{ip_1}+1)-(1+e^{-ip_2})+2ie^{ik_2}-ie^{ik_1}(e^{ip_1}+e^{-ip_2}).\nonumber\eea
Observe for later convenience that
\begin{eqnarray}\label{eq:76}
W(k,k',0)=2(e^{i k_1}-e^{i k'_1})(e^{i k_2}-e^{i k'_2}),
\end{eqnarray}
and that 
\begin{eqnarray}
  \label{eq:78}
f_r(k,0)=-2\m(k).  
\end{eqnarray}
In the following, we shall first explain how to obtain \eqref{eq:B.2}, and then we will analyze its behavior close to $p=0$ and $p=(\pi,\pi)$,
so that, by using \eqref{eq:sing1}, we will be able to identify the values of $A(\l)$ and of $\nu(\l)$, at first non-trivial order in $\l$.

\subsection{Proof of \eqref{eq:B.2}}
We start from \eqref{eq:45}. Let us first look at the term $\EE_0(\mathcal
I^{(1)}_e;E_{e'})$. Given a quartic polynomial of the Grassmann
variables, of the form
$$W(\psi)=\sum_{x_1,\dots,x_4}a(x_1,\dots,x_4)\psi^+_{x_1}\psi^-_{x_2}\psi^+_{x_3}\psi^-_{x_4}, $$
we define its ``linearization'' $\overline W$ as:
\begin{multline}
  \label{eq:70}
  \overline W(\psi)=\sum_{x_1,\dots,x_4}a(x_1,\dots,x_4)[-\psi^+_{x_1}\psi^-_{x_2}g(x_4-x_3)-\psi^+_{x_3}\psi^-_{x_4}g(x_2-x_1)\\+\psi^+_{x_3}\psi^-_{x_2}g(x_4-x_1)+\psi^+_{x_1}\psi^-_{x_4}g(x_2-x_3)].
\end{multline}
Note that $\lis W$ is obtained from $W$ by `contracting' in all possible ways two out of the four Grassmann fields, the contraction 
corresponding to the selection of a pair of $\psi^+\psi^-$ fields, and by the replacement of the selected pair by its average with respect to $\EE_0(\cdot)$.
Due to the truncated expectation, one has then
 \begin{eqnarray}
    \label{eq:54}
 \EE_0(\mathcal
  I^{(1)}_e;E_{e'})=   \EE_0(\overline{\mathcal I}^{(1)}_e;E_{e'}).
  \end{eqnarray}
If, e.g., $e$ is of type $1$, 
\begin{eqnarray}
  \label{eq:57}
  \overline{\mathcal
  I}^{(1)}_e&=&g(v_1)(\psi^+_{x+(1,0)}\psi^-_{x-(0,1)}+\psi^+_{x+(0,1)}\psi^-_{x-(1,0)})+2g(v_3)\psi^+_x\psi^-_x\\
  &-&g(v_2)(\psi^+_{x+(0,1)}\psi^-_x
  + \psi^+_x\psi^-_{x-(0,1)})- g(v_4) (\psi^+_{x+(1,0)}\psi^-_{x}
  + \psi^+_x\psi^-_{x-(1,0)}).\nonumber
\eea
Since by symmetry all the edges $e$ have the same 
probability, $1/4$,  of
being occupied by a dimer, \eqref{eq:59} for $\lambda=0$ gives
$1/4=-\EE_0(E_e)$, so that
\be
  \label{eq:68}
g(v_r)=\EE_0(\psi^-_{x+v_r}\psi^+_x)=\frac1{4K_r}.
\ee
Therefore, if $e$ is of type $1$, 
\bea 
  \label{eq:58}
     \overline{\mathcal
  I}^{(1)}_e&=&\frac14(\psi^+_{x+(1,0)}\psi^-_{x-(0,1)}+\psi^+_{x+(0,1)}\psi^-_{x-(1,0)}-2\psi^+_x\psi^-_x)\\
  &+&\frac
i4 (\psi^+_{x+(0,1)}\psi^-_x
  + \psi^+_x\psi^-_{x-(0,1)}- \psi^+_{x+(1,0)}\psi^-_{x}
  - \psi^+_x\psi^-_{x-(1,0)})\nonumber\\
&=&\frac14\int\frac{dp}{(2\pi)^2}e^{i p
  x}\int\frac{dk}{(2\pi)^2}\hat\psi^+_{k+p}\hat\psi^-_kf_1(k,p)\nonumber
\eea
where in the last line we used the fact that $\psi^\pm_x=\int\frac{dk}{(2\p)^2}\hat \psi^\pm_ke^{\pm ikx}$,
as well as the definition of $f_1(k,p)$, see the equation after \eqref{eq:B.3}. 
A similar computation (details left to the reader) shows that, if $e$ is of type $r\in\{1,2,3,4\}$, 
\be  \overline{\mathcal
  I}^{(1)}_e=\frac14\int\frac{dp}{(2\pi)^2}e^{i p
  x}\int\frac{dk}{(2\pi)^2}\hat\psi^+_{k+p}\hat\psi^-_kf_r(k,p).\label{eq:B.11}\ee
Now, recall that 
\be
  \label{eq:64}
 E_{e'}=- K_{r'}\psi^+_y\psi^-_{y+v_{r'}}=-K_{r'}\int
  \frac{dp}{(2\pi)^2}e^{-ipy}\int \frac{dk}{(2\pi)^2}
\hat\psi^+_k\hat\psi^-_{k+p}e^{-i v_{r'}(k+p)}.
\ee
By using \eqref{eq:B.11}-\eqref{eq:64}, the Wick rule and the fact that $\EE_0(\hat \psi^-_k\hat\psi^+_{k'})=(2\p)^2$ $\delta(k-k')/\mu(k)$, 
we find that the Fourier transform of $-\lambda\EE_0(\mathcal
  I^{(1)}_e;E_{e'})$, computed at $p$, equals
\be
  \label{eq:69}
-\lambda  \sum_x e^{-i p x}\EE_0(\mathcal
  I^{(1)}_e;E_{e'})=   \lambda\frac{K_{r'}}4\int\frac{dk}{(2\pi)^2}\frac1{\mu(k)\mu(k+p)}e^{-i
    v_{r'}(k+p)}
f_r(k,p).\ee
\medskip

Next, we compute $\lambda\EE_0(E_e;E_{e'}; V_4)$. Among the
different ways of contracting the fields in the application of the
fermionic Wick rule to form connected diagrams, either two of the four fields
in a monomial 
of $ V_4$ are contracted among themselves, or they are all
contracted with fields in $E_e,E_{e'}$. 
One can check that the contributions of the former type, combined with the
zero-order diagram   $\EE_0(E_e;E_{e'})$, altogether give
\begin{eqnarray}
  \label{eq:66}
\bar\EE_0(E_e;E_{e'})+O(\lambda^2)
\end{eqnarray}
where $\bar \EE_0$ is the Grassmann Gaussian expectation defined in a way analogous to \eqref{eq:61}, 
with the difference that $-(\psi^+, K_0\psi^-)$ is 
replaced by $-(\psi^+, K_0\psi^-)+\lambda
\overline{ V}_4(\psi)$. Using
\eqref{eq:68} 
we find that 
\begin{eqnarray}
  \label{eq:67}
  \overline{ V}_4(\psi)=\sum_{x\in\L}\left[
  \psi^+_{x}\psi^-_{x-(1,1)}+i\psi^+_{x}\psi^-_{x-(0,1)}
  -\psi^+_x\psi^-_x- i\psi^+_x\psi^-_{x-(1,0)}\right]
\end{eqnarray}
which is nothing but $-(\psi^+,K_0\psi^-)$ itself. Therefore, 
\begin{eqnarray}
  \label{eq:52}
 \bar \EE_0(E_e;E_{e'})=K_r
  K_{r'}\bar \EE_0(\psi^+_{x}\psi^{-}_{x+v_r};\psi^+_y\psi^-_{y+v_{r'}}).
\\
=-\frac{K_r
  K_{r'}g(v_{r'}-x+y)g(v_r+x-y)}{(1+\lambda)^2}
\end{eqnarray}
Taking the Fourier transform, this gives
\begin{eqnarray}
  \label{eq:53}
  -(1-2\lambda)K_r
  K_{r'}\int \frac{dk}{(2\pi )^2}\frac{e^{-i k v_r-i(k+p)v_{r'}}}{\mu(k)\mu(k+p)}+O(\lambda^2).
\end{eqnarray}

Finally, we consider the contributions to $\lambda\EE_0(E_e;E_{e'}; V_4)$ from the diagrams where two fields in $V_4$ are contracted with
the two fields of $E_e$ and the remaining two are contracted with
$E_{e'}$. 
It is convenient to symmetrize $V_4$ by rewriting, after taking Fourier transform, 
\be \label{eq:71}
V_4(\psi)=-\frac12\int\frac{dp}{(2\pi)^2}\int\frac{dk}{(2\pi)^2}\int\frac{dk'}{(2\pi)^2}
\hat \psi^+_{k+p}\hat\psi^-_k\hat \psi^+_{k'-p}\hat
\psi^-_{k'}W(k,k',p),\ee
with $W$ as in \eqref{eq:B.3}. Now, by using \eqref{eq:71},
\eqref{eq:64},  and the analogous expression for $E_e$, we find that 
the Fourier transform of the sum of the contributions to $\l\EE_0(E_e;E_{e'}; V_4)$ from
connected diagrams, such that none of the fields of $V_4$ are contracted among themselves,
is equal to 
\begin{eqnarray}
  \label{eq:77}
 - 2\lambda K_r K_{r'}\int\frac{dk}{(2\pi)^2}\int\frac{dk'}{(2\pi)^2}
\frac{W(k,k',p)e^{-i v_r(k'-p)-i v_{r'}(k+p)}}{\mu(k)\mu(k+p)\mu(k')\mu(k'-p)}.
\end{eqnarray}
Now, recall that,  from \eqref{eq:45}, $\hat G^{(0,2)}_{r,r'}(p)$  at
first order in $\lambda$ is given by the sum of
\eqref{eq:53}, \eqref{eq:77}, \eqref{eq:69}, plus the term as in
\eqref{eq:69} but with $e,e'$ interchanged, which gives \eqref{eq:B.2}, as desired.

\subsection{Behavior for $p\sim 0$}
We want to identify the discontinuity of $\hat G^{(0,2)}_{r,r'}(p)$ at
$p=0$ and, by using \eqref{eq:sing1}, identify the pre-factor $A(\l)$ at lowest non-trivial order in $\l$.
We consider the case $r=r'=1$: any choice of $(r,r')$ is equally good for computing $A(\lambda)$,
and the case $r=r'=1$ is, possibly, the simplest. 

Let us first look at the first term in the right side of \eqref{eq:B.2}, which, for $r=r'=1$, is equal to 
\be-(1-2\lambda)\int \frac{dk}{(2\pi )^2}\frac{1}{\mu(k)\mu(k+p)},\label{eq:B.20bis}\ee
plus an error term that vanishes as $p\to 0$. The integral in \eqref{eq:B.20bis} can be evaluated by Proposition \ref{prop:Ibis}, and equals 
\be   \label{eq:85}
  -\frac{1-2\lambda}{4\pi}\text{Re}\left[
\frac{D_-(p)}{D_+(p)} \right]
\ee
plus terms that are continuous at $p=0$.

Next we look at the term in the second line of \eqref{eq:B.2}.  We can  can set $p=0$
in  $W(k,k',p)$, up to an error term that vanishes for $p\to0$. This
is because the integral diverges at like $(\log|p|)^2$, so that the
error term is $O(|p|(\log |p|)^2)$. Then, by using Proposition \ref{prop:Ibis}, 
the second line of \eqref{eq:B.2} for $r=r'=1$ reduces to
\begin{eqnarray}
\label{ree}
  -8 \lambda(I_{(1,1)}(p)I_{(0,0)}(p)-I_{(1,0)}(p)I_{(0,1)}(p))
\end{eqnarray}
plus terms that vanish as $p\to 0$. In order to evaluate this expression, we use \eqref{eq:Ibis},
which we rewrite as
\[
I_{(a,b)}(p)=\frac1{8\pi}\left[\frac{D_-(p)}{D_+(p)}+(-1)^{a+b}\frac{D_+(p)}{D_-(p)}
\right]+U_{(a,b)}+R_{(a,b)}(p).
\]
and we recall that $R_{(a,b)}$ vanishes continuously at
$p=0$. Omitting terms that are continuous at $p=0$, \eqref{ree} equals
\begin{gather}
  \label{eq:86}
  -\frac{2\lambda}\pi\text{Re}\left[
\frac{D_-(p)}{D_+(p)}(U_{(0,0)}+U_{(1,1)}-U_{(0,1)}-U_{(1,0)})\right].
\end{gather}
A simple computation shows that 
\[
U_{(0,0)}=\frac14\left(1-\frac2\pi\right),\quad U_{(1,1)}=0,\quad U_{(0,1)}=\frac
i{4\pi},\quad U_{(1,0)}=-\frac i{4\pi}
\]
so that \eqref{eq:86} reduces to 
\begin{gather}
  \label{eq:87}
  -\frac{\lambda}{2\pi}\left(1-\frac2\pi\right)\text{Re}\left[
\frac{D_-(p)}{D_+(p)}\right]
\end{gather}

Finally, it is easy to see that the term in the last line of \eqref{eq:B.2} is continuous at 
$p=0$. Indeed, as above, we can replace $f_1(k,p)$ by $f_1(k,0)$, up to an error term of order $O(|p|\log |p|)$. Now, recalling 
that $f_1(k,0)=-2\mu(k)$, we see that this term simplifies with the factor $\mu(k)$ in the denominator: after the simplification, we are left with 
an absolutely convergent integral. 

Altogether, from \eqref{eq:85} and \eqref{eq:87}
\begin{gather}
  \label{eq:84}
  \hat G^{(0,2)}_{r,r'}(p)=-\frac1{4\pi}\left(1-\frac{4\lambda}\pi\right) \text{Re}\left[
\frac{D_-(p)}{D_+(p)}\right]
\end{gather}
where we omitted terms that are either continuous at $p=0$ or are $O(\lambda^2)$.
In view of \eqref{eq:sing1}, we have
\begin{gather}
  \label{eq:88}
  A(\lambda)=1-\frac4\pi \lambda+O(\lambda^2).
\end{gather}
\subsection{Behavior for $p\sim (\pi,\pi)$}

In order to compute $\nu=\nu(\lambda)$ at first order, we use the second of \eqref{eq:sing1} that, if expanded at first order in $\l$,
and defining the coefficients $\n_1$ and $b_1$ via $\n(\l)=1+\n_1\l+O(\l^2)$ and $B(\l)\G(2-\nu(\l))/\G(\nu(\l))=1+b_1\l+O(\l^2)$, reads: 
\be  \hat G^{(0,2)}_{r,r'}((\pi,\pi)+q)=\frac{t_{r,r'}}{2\pi}\Big[-(1+b_1\l)\log\Big(\frac{|q|}2\Big)-\n_1\l\log^2\Big(\frac{|q|}2\Big)\Big],\label{eq:B.28}\ee
up to terms that are continuous at $q=0$, and/or of the order $O(\l^2)$. From this equation, it is apparent that $\n_1$ can be read from the 
most divergent contribution at order $O(\l)$, i.e., from the contribution to $\hat G^{(0,2)}_{r,r'}((\pi,\pi)+q)$ that diverges as $\log^2|q|$, as $q\to 0$, at first order in $\l$.

For simplicity, we consider again the case $r=r'=1$, in which case $t_{r,r'}=t_{1,1}=1$. We start from \eqref{eq:B.2} computed at $p=(\pi,\pi)+q$,
and we observe that neither the term in the first line nor the one in the third line 
give contribution to $\n_1$, since they diverge as $\log|q|$, as $q\to 0$. 

It remains to consider the second line of \eqref{eq:B.2},
which can be rewritten, up to terms that vanish as $q\to 0$, as
\be 4\lambda \int\frac{dk}{(2\pi)^2}\int\frac{dk'}{(2\pi)^2}
\frac{(e^{i k_1}-e^{i k'_1})(e^{i k_2}-e^{i
  k'_2})}{\mu(k)\mu(k+(\p,\p)+q)\mu(k')\mu(k'-(\p,\p)-q)},\label{eq:B.29}\ee
where we used that 
\[W(k,k',(\pi,\pi))=-W(k,k',0)=-2(e^{i k_1}-e^{i k'_1})(e^{i k_2}-e^{i
  k'_2}).\] The denominator has zeros when $k$ and $k'$ are close
either to $(0,0)$ or to $(\pi,\pi)$. However, when $k,k'$ are both
close  to $(0,0)$ or both close to $(\pi,\pi)$, the singularity is partially cancelled
by the fact that the numerator in \eqref{eq:B.29} vanishes there: as a consequence, the contributions to the integral 
in \eqref{eq:B.29} from the regions where $k,k'$ are both
close  to $(0,0)$ or both close to $(\pi,\pi)$ diverge less severely than $\log^2|q|$, as $q\to 0$. 

Therefore, the relevant contribution 
is from the region $k\sim 0,k'\sim(\pi,\pi)$, or vice-versa, in which case the numerator $(e^{i k_1}-e^{i k'_1})(e^{i k_2}-e^{i
  k'_2})$ in \eqref{eq:B.29} is equal to $4$, up to terms that, once
integrated in $k,k'$, give a contribution that is more regular than
$(\log |q|)^2$. We make a similar replacement in the denominator: consider, e.g., the case $k\sim 0,k'\sim(\pi,\pi)$,
and rewrite $k'= (\pi,\pi)+k''$, so that both $k$ and $k''$ are in a neighborhood of the origin. Then we can rewrite 
the denominator  $\mu(k)\mu(k+(\pi,\pi)+q)\mu(k')\mu(k'-(\pi,\pi)-q)$ in \eqref{eq:B.29} as its linearization, 
$D_+(k)D_-(k+q)D_-(k'')D_+(k''-q)$, plus a rest that, once
integrated in $k,k'$, give a contribution that is more regular than
$(\log |q|)^2$. We recall that $D_\pm(k)$ was defined in \eqref{eq:10bis}
and in particular $D_-(k)=-D_+(k)^*$. A similar argument can be repeated in the case that $k\sim (\p,\p),k'\sim 0$,
which gives exactly the same dominant contribution (and, therefore, can be accounted for by multiplying the result of the 
first case by an overall factor 2).

In conclusion, the dominant contribution (for
$q$ small) to \eqref{eq:B.29} is 
\begin{eqnarray}
  \label{eq:CA}
32\l \left|\int_{|k|< 1}\frac{dk}{(2\pi)^2}\frac1{D_+(k)D_-(k+q)}\right|^2.
\end{eqnarray}

Always at dominant order, we can restrict the integration to
$|k|>2|q|$ (the contribution from the complementary set is $O(1)$) and replace $D_+(k)D_-(k+q)$ by $-|D_+(k)|^2=-2|k|^2$ there.
Then, the integral \eqref{eq:CA} gives, at dominant order, 
\begin{eqnarray}
  \label{eq:CA2}
  32\lambda \left(\int_{2|q|<|k|<1}\frac{dk}{2(2\pi)^2}\frac1{|k|^2}\right)^2=\frac{2}{\pi^2}\l\big[\log(2 |q|)\big]^2.
\end{eqnarray}
By comparing this expression with the term of order $\log^2|q|$ in \eqref{eq:B.28}, and recalling that $t_{1,1}=1$, we conclude that
$\n_1=-4/\p$, that is 
\be 
\nu(\lambda)=1-\frac4{\p}\l+O(\lambda^2),\quad
\ee
as desired. 

\section{Ward Identities for the reference model}\label{app3}

In this section we sketch the proof of the Ward Identities \eqref{h11} and \eqref{tt}, and of the identity \eqref{eq:nutau} for the critical exponent $\nu$.
A full proof can be found in \cite{BFM} and references therein.

\subsection{Ward Identities}

The starting point in the derivation of the Ward Identities is a `chiral gauge transformation' of the Grassmann fields 
in the reference model's generating functional: that is, given $\bar \o\in\{\pm\}$, we perform 
the following change of variables:
\be
\psi^\pm_{x,\bar\o}\to e^{\pm i\a_{x,\bar\o}}
\psi^\pm_{x,\bar\o}\,,\quad \psi^\pm_{x,-\bar\o}\to 
\psi^\pm_{x,-\bar\o}\,,\label{eq:C.1}
\ee
in the numerator of the right side of \eqref{vv1}. The generating functional is invariant under this change of variables: 
therefore, the variation of the right side of \eqref{vv1} with respect to $\a_{\bar\o}:=\{\a_{x,\bar\o}\}_{x\in\L}$ is zero. In the following, we intend to compute 
this variation, take its derivative with respect to $\a_{\bar\o}$, and set $\a_{\bar\o}\equiv0$. The resulting identity can be thought of as the generating function for a hierarchy of Ward Identities. 

Note that the interaction $\mathcal V(\sqrt Z\,\psi)$ in the right side of \eqref{vv1} is invariant under the transformation \eqref{eq:C.1}, and so is the source term $Z^{(1)}(J^{(1)},\r^{(1)})$. Moreover, 
\bea && Z^{(2)}(J^{(2)},\r^{(2)}) \to Z^{(2)}\sum_{\o=\pm}\int_\L dx\, J^{(2)}_{x,\o}\r^{(2)}_{x,\o}\, e^{i\o\bar\o\a_{x,\bar\o}},\\
&&Z(\psi^+,\phi^-)\to Z\int_\L dx\big(\psi^+_{x,\bar\o}\phi^-_{x,\bar\o}e^{+i\a_{x,\bar\o}}+ \psi^+_{x,-\bar\o}\phi^-_{x,-\bar\o}\big),\\
&&Z(\phi^+,\psi^-)\to Z\int_\L dx\big(\phi^+_{x,\bar\o}\psi^-_{x,\bar\o}e^{-i\a_{x,\bar\o}}+ \phi^+_{x,-\bar\o}\psi^-_{x,-\bar\o}\big).\eea
Finally, and most importantly, the Gaussian integration is also affected by the chiral gauge transformation: in fact, $P_Z^{[\le N]}(d\psi)$
can be formally\footnote{The reason why \eqref{eq:C.2}-\eqref{eq:C.3} are {\it formal} is that $(\chi_N(k))^{-1}$ is infinite if $k$ is outside the support 
of $\c_N$. In order to make sense of these and the following formulas, one should introduce a function $\c_N^\e$ of full support, such that 
$\lim_{\e\to 0}\chi^\e_N(k)=\c_N(k)$. One should perform all computations keeping $\e$ fixed, and then send $\e\to0$ first, before the removal of all the other cut-off parameters. 
In this appendix, we neglect this issue: for more details, see \cite{BM02}.} written as
\be\lb{eq:C.2} P_Z^{[\le N]}(d\psi) = \frac1{\NN} \exp\Big\{ -Z\sum_\o\int_\L dx\, \psi^+_{x,\o} \big(D_{N}\psi^-_{\o}\big)(x) \Big\}
\prod_{\o=\pm}\prod_{x\in\L} d\psi^+_{x,\o} d\psi^-_{x,\o}\,,
\ee
where $\NN$ is a normalization constant and
\be
(D_{N}\psi^\pm_{\o})(x):=\mp\frac{1}{L^2}\sum_{k\in {\mathcal D}} e^{\pm i k x} (\chi_N(k))^{-1}D_\o(k)\hat\psi^\pm_{k,\o}\,.\label{eq:C.3}
\ee
Under the chiral gauge transformation, the `Grassmann measure' $$\prod_{\o=\pm}\prod_{x\in\L} d\psi^+_{x,\o} d\psi^-_{x,\o}$$ in the right side of \eqref{eq:C.2} is invariant, and so is 
the term $\int_\L dx\, \psi^+_{x,-\bar\o} \big(D_{N}\psi^-_{-\bar\o}\big)(x)$ at exponent, while
\bea && \int_\L dx\, \psi^+_{x,\bar\o} \big(D_{N}\psi^-_{\bar\o}\big)(x)\to 
\int_\L dx\, e^{+i\a_{x,\bar\o}}
\psi^+_{x,\bar\o} (D_{N}e^{-i\a_{\bar\o}} \psi^-_{\bar\o})(x)\\
&&=\int_\L\!\! dx\, \psi^+_{x,\bar\o} \big(D_{N}\psi^-_{\bar\o}\big)(x)+i\int_\L\!\! dx\, \a_{x,\bar\o}\Big[(D\r^{(1)}_{\bar\o})(x)+\d T_{\bar\o}(x)\Big]+O(\a_{\bar\o}^2),\nn
\eea
where
\bea &&(D \r^{(1)}_{\bar\o})(x)= -\frac1{L^4}\sum_{k,p\in\mathcal D}
e^{ipx} D_{\bar\o}(p)\hat\psi^+_{k+p,\bar\o} \hat\psi^-_{k,\bar\o}\,,\\
&&\delta T_{\bar\o}(x)=\frac{1}{L^4}\sum_{k,p\in\mathcal D}
e^{ipx} \hat\psi^+_{k+p,\bar\o} C_{\bar\o}(k+p,k)\hat\psi^-_{k,\bar\o}\,,\label{eq:C.9}
\eea
and 
\be\label{ccc} C_{\bar\o}(k+p,k) := [\c_{N}^{-1}(k)-1]
D_{\bar\o}(k) -[\c_{N}^{-1}(k+p)-1] D_{\bar\o}(k+p).\ee
Putting things together, and imposing that the variation of \eqref{vv1} vanishes at first order in $\a_{\bar\o}$, we find,
letting  \be \mathcal V(\psi,J,\phi):=\VV(\sqrt{Z}\psi) + \sum_{j}Z^{(j)}(J^{(j)},\,\rho^{(j)})+
Z[(\psi^{+},\phi^-)+(\phi^+, \psi^{-})],\ee
that
\bea && \int P_Z^{[\le N]}(d\psi)e^{\VV(\psi,J,\phi)}\big[Z^{(2)}\sum_\o \o\bar\o J^{(2)}_{x,\o}\r^{(2)}_{x,\o}+\label{eq:C.12}\\
&&+Z\psi^+_{x,\bar\o}\phi^-_{x,\bar\o}-Z\phi^+_{x,\bar\o}\psi^-_{x,\bar\o}
-Z(D\r^{(1)}_{\bar\o})(x)-Z\d T_{\bar\o}(x)\big]=0.\nn\eea
By taking derivatives w.r.t. $J$ and $\phi$, and then setting the external sources to zero, we generate a hierarchy of Ward Identities: for example, by 
deriving w.r.t. $\phi^-_{z,\o}$ and $\phi^+_{y,\o}$, and then setting the external sources to zero, we find
\bea && \d_{\o,\bar\o}\d_{x,z}Z^2\media{\psi^-_{y,\o}\psi^+_{x,\o}}_{L,N,a}-\d_{\o,\bar\o}\d_{x,y}Z^2\media{\psi^-_{x,\o}\psi^+_{z,\o}}_{L,N,a}\label{C.13}\\
&&-Z^3\media{(D\r^{(1)}_{\bar\o})(x);\psi^-_{y,\o}\psi^+_{z,\o}}_{L,N,a}-Z^3\media{\d T_{\bar\o}(x);\psi^-_{y,\o}\psi^+_{z,\o}}_{L,N,a}=0,\nn
\eea
where $\media{A(\psi)}_{L,N,a}:=\int P_Z^{[\le N]}(d\psi)e^{\VV(\sqrt Z \psi)}A(\psi)/\int P_Z^{[\le N]}(d\psi)e^{\VV(\sqrt Z \psi)}$, and the semicolon indicates 
truncated expectation (note that in the second line the truncated expectations are equal to the un-truncated ones, simply because $\media{(D\r^{(1)}_{\bar\o})(x)}_{L,N,a}=
\media{\d T_{\bar\o}(x)}_{L,N,a}=0$). After taking the Fourier transform and sending $L,N,a^{-1}\to\infty$, we get 
\be\label{eq:C.14}
\d_{\o,\bar\o}[\hat G^{(2)}_{R,\o}(k+p) - \hat G^{(2)}_{R,\o}(k)]+
\frac{Z}{Z^{(1)}}D_{\bar\o}(p)\hat G^{(2,1)}_{R,\bar\omega,\o}(k,p)-\frac{Z}{Z^{(1)}}\hat A_{\bar\o,\o}(k,p)=0,\ee
where
\be \hat A_{\bar\o,\o}(k,p)=\lim_{L,N,a^{-1}\to\infty}\frac{Z^{(1)}Z^2}{L^4}\sum_{q\in\mathcal D}
C_{\bar\o}(q+p,q)  \media{\hat\psi^+_{q+p,\bar\o} \hat\psi^-_{q,\bar\o};\hat \psi^-_{k+p,\o}\hat \psi^+_{k,\o}}_{L,N,a},\label{eq:C.15}\ee
and the limit in the right side is meant as the limit $a\to 0$ first, then $N\to\infty$, then $L\to\infty$. By summing \eqref{eq:C.14} over 
$\bar\o$, we will eventually get \eqref{h11}: of course, the subtle issue is to compute $\hat A_{\bar\o,\o}(k,p)$ (see below).

\begin{Remark} The term $\hat A_{\bar\o,\o}(k,p)$ is referred to as an anomaly term: formally, it would be zero if we replaced
the cut-off function $\c_N$ appearing in the definition of $C_\o(k+p,k)$ by $1$. Of course, there is an exchange of limits involved: we first
need to compute the expression
in the right side of \eqref{eq:C.15} under the limit sign, and then take the limit of removed cut-offs. 
In order to compute the right side of \eqref{eq:C.15}, we can use a multi-scale analysis similar to the one used in the computation of the free energy and 
the correlation functions of the interacting dimer model (see \cite{GMTlungo}) and of the reference model (see, e.g., \cite{BM02} or \cite{BFM}). Under 
RG iterations, the operator $\d T_\o$ appears to be marginal \cite{BM02,BM05}, with a non-trivial flow from the ultraviolet scale, down to the deep infrared: 
this induces a non trivial flow of the anomaly term, which converges to a finite value (its `infrared fixed point'), 
non-vanishing in the limit of removed cut-offs. The fact that the anomaly term does not vanish 
in this limit is visible already at first order in perturbation theory. 
\end{Remark}
In addition to the Ward Identity \eqref{eq:C.14} for the vertex function, we can get another identity for the density-density correlations, by deriving 
\eqref{eq:C.12} with respect to $J^{(1)}_{y,\o}$ and then setting the external sources to zero. In this case, after taking Fourier transform and the limit of removed cut-offs, we obtain 
\be D_{\bar \o}(p)\hat S^{(1,1)}_{R,\bar \o,\o}(p)=\hat B_{\bar\o,\o}(p),\label{eq:C.17} \ee
where
\be \hat B_{\bar\o,\o}(p)=\lim_{L,N,a^{-1}\to\infty}\frac{(Z^{(1)})^2}{L^6}\sum_{q,k\in\mathcal D}
C_{\bar\o}(q+p,q)  \media{\hat\psi^+_{q+p,\bar\o} \hat\psi^-_{q,\bar\o};\hat \psi^+_{k,\o}\hat \psi^-_{k-p,\o}}_{L,N,a}\label{eq:C.18}\ee
is an anomaly term. Its computation, sketched below, leads to \eqref{tt}.

\subsection{The anomaly terms} The anomaly terms have been computed in a series of previous works, starting from \cite{BM02,BM05}. 
In the presence of a non-local interaction potential, as the one considered here (recall \eqref{gjhfk}), it has a very explicit form: for instance,
$\hat A_{\o',\o}(k,p)$ is equal to 
an explicit pre-factor, {\it linear} in $\l_\infty$, times $\hat G^{(2,1)}_{R,-\o',\o}(k,p)$, see \eqref{c.21} below. The fact that the pre-factor is exactly 
linear in $\l_{\infty}$, i.e., that all its contributions beyond first order perturbation theory vanish in the removed cut-offs limit, is a phenomenon known as the {\it anomaly non-renormalization},
first proved in our context in \cite{M07}, see also \cite{BFM}. 
Here, we informally discuss the main ideas of the proof and explain how to compute the anomaly terms. 

For definiteness, let us consider the anomaly $\hat A_{\o',\o}(k,p)$
first: the expression in the right side of \eqref{eq:C.15}, under the limit sign, can be computed by a multi-scale analysis, similar to the one discussed in \cite{GMTlungo},
see \cite{BFM} for details. The outcome is that, if $k,p,k+p\neq0$, then $\hat A_{\o',\o}(k,p)$ is analytic in $\l_{\infty}$, uniformly in $N,L,a$. 
Note that, by the support properties of $C_{\bar\o}(q+p,q)$, in order for 
the summand not to vanish, both $q$ and $q+p$ must be `on scale $N$': that is, for any fixed $p$ and $N$ large enough, $2^{N-1}\le |q|, |q+p|\le 2^{N+2}$; therefore, the sum over $q$ 
in the right side of \eqref{eq:C.15} can be performed under this constraint. 

Thanks to the analyticity in $\l_\infty$, we can expand $\hat A_{\o',\o}(k,p)$ in series, thus finding 
\bea &&\hat A_{\o',\o}(k,p)= \lim_{L,N,a^{-1}\to\infty}\frac{Z^{(1)}Z^2}{L^4}\sum_{q\in\mathcal D}\sum_{n\ge 0}\frac1{n!}
C_{\bar\o}(q+p,q)\times\nn\\
&&\qquad \times  \langle\hat\psi^+_{q+p,\bar\o} \hat\psi^-_{q,\bar\o};\hat \psi^-_{k+p,\o}\hat \psi^+_{k,\o};\big[\VV(\sqrt{Z}\psi)\big]^{\!;n}\rangle^{0}_{L,N,a}\label{cc}\eea
where $\big[{\mathcal{V}}\big]^{\!;n}$ is a shorthand notation for 
$\underbrace{\VV;\VV;\, \cdots \,;\VV}_{\text{$n$ times}}$,
and $\media{(\cdot)}^{0}_{L,N,a}$ is the unperturbed integration, i.e.,
$\media{(\cdot)}^{0}_{L,N,a}:=\int P_Z^{[\le N]}(d\psi)(\cdot)$, whose expectations can be 
computed by the fermionic Wick rule.
Now, one can easily check that the $0$-th order term, i.e., the term with $n=0$ in the right side of \eqref{cc}, vanishes in the 
removed cut-off limit. The higher order terms can be of two types: either the two Grassmann fields 
$\hat\psi^+_{q+p,\bar\o} \hat\psi^-_{q,\bar\o}$ are contracted with (two of the four fields of) the same interaction term $\VV$, or with (the fields of) 
two different interaction terms. Correspondingly we can write: $\hat A_{\o',\o}(k,p)=(I)+(II)$, where, recalling that $\hat v_0(p)$ is the Fourier transform of the interaction potential, 
which is rotationally invariant and such that $\hat v_0(0)=1$,
\bea && (I)=\label{eq:ciao.ciao}\\
&&=\lim_{L,N,a^{-1}\to\infty}\frac{\l_\infty \hat v_0(p)}{L^6}\sum_{q\in\mathcal D}C_{\bar\o}(q+p,q)  Z^2
 \langle\hat\psi^+_{q+p,\bar\o} \hat\psi^-_{q,\bar\o};\hat\psi^+_{q,\bar\o} \hat\psi^-_{q+p,\bar\o}\rangle^0_{L,N,a}\nn\\
 &&\times\frac{1}{L^4} \sum_{q'\in\mathcal D}\sum_{n\ge 1}\frac{Z^{(1)}Z^2}{(n-1)!}
 \langle\hat\psi^+_{q'+p,-\bar\o} \hat\psi^-_{q',-\bar\o};
\hat \psi^-_{k+p,\o}\hat \psi^+_{k,\o};\big[\VV(\sqrt{Z}\psi)\big]^{\!;n-1}\rangle^{0}_{L,N,a}\nn\eea
and
\bea &&(II)=\label{eq:ciao}\\
&&=\lim_{L,N,a^{-1}\to\infty}\frac{1}{L^4}\sum_{q\in\mathcal D}C_{\bar\o}(q+p,q) 
 \langle\hat\psi^+_{q+p,\bar\o} \hat\psi^-_{q,\bar\o};\hat\psi^+_{q,\bar\o} \hat\psi^-_{q+p,\bar\o}\rangle^0_{L,N,a}\times\nn\\
 &&\times \sum_{n\ge 2}\frac{Z^{(1)}Z^2}{(n-2)!}
 \langle\frac{\partial\VV(\sqrt{Z}\psi)}{\partial\hat\psi^-_{q+p,\bar\o}};\frac{\partial\VV(\sqrt{Z}\psi)}{\partial\hat\psi^+_{q,\bar\o}};
\hat \psi^-_{k+p,\o}\hat \psi^+_{k,\o};\big[\VV(\sqrt{Z}\psi)\big]^{\!;n-2}\rangle^{0}_{L,N,a}\nn\eea
Let us focus on $(I)$ first. After summation over $n$, the  third line of \eqref{eq:ciao.ciao} can be explicitly re-written as 
$$\frac1{L^4}\sum_{q'\in\mathcal D}Z^{(1)}Z^2\media{\hat\psi^+_{q'+p,-\bar\o} \hat\psi^-_{q',-\bar\o};
\hat \psi^-_{k+p,\o}\hat \psi^+_{k,\o}}_{L,N,a},$$ 
whose removed cut-off limit equals $\hat G^{(2,1)}_{R,-\bar\o,\o}(k,p)$. Therefore,
\be (I)= \l_\infty\hat v_0(p)\hat H_\o(p) \hat G^{(2,1)}_{-\o',\o}(k,p),\ee
where 
\be \hat H_\o(p)=-\lim_{N\to\infty}\int_{\mathbb R^2}\frac{dq}{(2\p)^2}C_\o(q+p,q)\frac{\c_N(q)\c_N(q+p)}{D_\o(q)D_\o(q+p)}.\nn\ee
Recall that the integrand is non vanishing only if $q$ and $q+p$ are on scale $N$. Therefore, the integrand 
vanishes pointwise in the limit   $N\to\infty$. However, the limit of the integral is not zero: a straightforward, but very instructive, computation shows that 
\be \hat H_\o(p)=-\frac1{8\p}D_{-\o}(p), \label{eq:C.20}\ee
from which we infer that $(I)=-\frac{\l_\infty}{8\p}D_{-\o'}(p)\hat v_0(p)\hat G^{(2,1)}_{-\o',\o}(k,p)$. 

Let us now turn to $(II)$: the key fact is that all its contributions vanish in the removed cut-off limit. 
At a perturbative level, this can be checked by looking at the Feynman diagram expansion of the right side of 
\eqref{eq:ciao}: the constraint that the momenta $q,q+p$ are on scale $N$, while the external momenta $k,k+p$ are fixed, 
induces a `dimensional gain' in the dimensional estimate of the values of the diagrams, which makes them vanish in the removed cut-off limit,
at a speed at least $2^{-\theta N}$, for a suitable $\theta\in (0,1)$. The reader can easily check this fact for the lowest order diagrams. 
For a general, non-perturbative, proof, see \cite{BFM,M07}.

In conclusion,
\be \hat A_{\o',\o}(k,p)=-\frac{\l_\infty}{8\p}D_{-\o'}(p)\hat v_0(p)\hat G^{(2,1)}_{-\o',\o}(k,p).\label{c.21}\ee
Plugging this back into \eqref{eq:C.14} we get
\bea && 
\frac{Z}{Z^{(1)}}\Big[D_{\o}(p)\hat G^{(2,1)}_{R,\omega,\o}(k,p)+\frac{\l_\infty}{8\p}\hat v_0(p)D_{-\o}(p)\hat G^{(2,1)}_{-\o,\o}(k,p)\Big]=\nn\\
&&\hskip2.truecm =\hat G^{(2)}_{R,\o}(k) - \hat G^{(2)}_{R,\o}(k+p),\label{C.22}\\
&&\frac{Z}{Z^{(1)}}\Big[D_{-\o}(p)\hat G^{(2,1)}_{R,-\omega,\o}(k,p)+\frac{\l_\infty}{8\p}\hat v_0(p)D_{\o}(p)\hat G^{(2,1)}_{\o,\o}(k,p)\Big]=0.\nn\eea
and summing the two equations we get \eqref{h11}, as desired.

Similarly, we find that the anomaly term $\hat B_{\o',\o}(p)$ in \eqref{eq:C.18} is
\be \hat B_{\o',\o}(p)=\hat H_{\o'}(p)\Big[\frac{(Z^{(1)})^2}{Z^2}\d_{\o',\o}+\l_\infty\hat v_0(p)\hat S^{(1,1)}_{R,-\o',\o}(p)\Big],
\label{eq:C.23bis}\ee
with $\hat H_\o(p)$ as in \eqref{eq:C.20}. By plugging \eqref{eq:C.23bis} into \eqref{eq:C.17}, we immediately get \eqref{tt}.

\subsection{The critical exponent $\nu$.}

In this subsection, we give a sketch of the proof of the identity \eqref{eq:nutau}. We proceed as in \cite[Sect.4.2]{BFM}. The explicit form of the critical exponent 
follows from an exact computation of the `mass-mass' correlation function $S^{(2,2)}_{R,\o,\o'}(x,y)$, which goes as follows. We let 
\be G^{(4)}_{R,\o}(x,y,u,v):=Z^4\lim_{L,N,a^{-1}\to\infty}\media{\psi^-_{x,\o}\psi^-_{y,-\o}\psi^+_{u,-\o}\psi^+_{v,\o}}_{L,N,a}.\ee
Note that $G^{(4)}_{R,\o}$ is related to $S^{(2,2)}_{R,\o,\o'}$ via the identity 
\be G^{(4)}_{R,\o}(x,y,x,y)=\frac{Z^4}{(Z^{(2)})^2}S^{(2,2)}_{R,-\o,\o}(x,y).\label{G4S22}\ee
The key observation is that $G^{(4)}_\o$ and the dressed propagator $G^{(2)}_\o$ satisfy a system of two equations, obtained by combining the Ward Identities with the Schwinger-Dyson 
equation (see below for details):
\bea && D^x_\o G^{(2)}_{R,\o}(x,0)=Z\d(x)-\l_{\infty}F_{\o,-}(x)G^{(2)}_{R,\o}(x,0), \label{eq:C.25}\\
&& D^x_\o G^{(4)}_{R,\o}(x,y,u,v)=Z\d(x-v)G^{(2)}_{R,-\o}(y,u)+\label{eq:C.26}\\
&&\qquad +\l_\infty\big[F_{\o,+}(x-y)-F_{\o,+}(x-u)-F_{\o,-}(x-v)\big]G^{(4)}_{R,\o}(x,y,u,v), \nn\eea
where $D^x_\o$ was defined in \eqref{Dxo}, and $F_{\o,\e}(x)=\int_{\mathbb R^2}\frac{dp}{(2\p)^2}e^{ipx}\hat F_{\o,\e}(p)$, with 
\be \hat F_{\o,\e}(p):=\frac{\hat v_0(p)A_\e(p)}{D_{-\o}(p)},\qquad A_\e(p):=\frac12\Big(\frac1{1-\t\hat v_0(p)}+\e\frac1{1+\t\hat v_0(p)}\Big).\ee
The solution of \eqref{eq:C.25}-\eqref{eq:C.26} decaying to zero at infinity, as one can easily check by substitution, is 
\bea && \hskip-.3truecm  G^{(2)}_{R,\o}(x,y)=Z e^{-\l_\infty\D_{-}(x-y,0)} g_{R,\o}(x-y),\label{eq:C.28}\\
&& \hskip-.3truecm G^{(4)}_{R,\o}(x,y,u,v)=e^{\l_\infty[\D_{+}(x-y,v-y)-\D_{+}(x-u,v-u)]} 
G^{(2)}_{R,\o}(x,v)G^{(2)}_{R,-\o}(y,u),\nn
\eea
where $$g_{R,\o}(x)=\int_{\mathbb R^2} \frac{dk}{(2\p)^2}\frac{e^{-i k(x-y)}}{(-i-\o)k_1+(-i+\o)k_2},$$ and 
\be \D_{\e}(x,z):=\int_{\mathbb R^2}\frac{dp}{(2\p)^2}\frac{e^{ipz}-e^{ipx}}{D_\o(p)}\hat F_{\o,\e}(p).\ee
By using the fact that, asymptotically for large $|x|$,
\be \D_\e(x,0)\underset{|x|\to\infty}{\simeq}-\frac{A_\e(0)}{4\p}\log|x|, \ee
we find from \eqref{G4S22} and \eqref{eq:C.28} that 
\be S^{(2,2)}_{R,-\o,\o}(x,y) \underset{|x|\to\infty}{\simeq} ({\rm const.})|x-y|^{-2(1-\t)/(1+\t)},\ee
where we recall that $\t=-\l_\infty/(8\p)$. This proves \eqref{eq:nutau}.

\bigskip

We are left with giving a sketch of the proof of \eqref{eq:C.25}-\eqref{eq:C.26}. Let us start with \eqref{eq:C.25}. The idea is to combine the Ward Identities 
\eqref{C.22} with the so-called Schwinger-Dyson equation for the propagator, which reads
\be \hat G_{R,\o}^{(2)}(k)=\frac{Z}{D_\o(k)}\Big[1+\frac{\l_\infty}{Z^{(1)}}\int_{\mathbb R^2}\frac{dp}{(2\p)^2}\hat v_0(p)\hat G^{(2,1)}_{R,-\o,\o}(k,p)\Big].\label{C.33}\ee
For an elementary proof of this identity, see \cite[Appendix A.1]{BFM1}.

On the other hand, the two Ward Identities \eqref{C.22} imply a second, independent, relation between $\hat G^{(2,1)}_{R,-\o,\o}(k,p)$ and the dressed propagator: 
in fact, by solving \eqref{C.22} for $\hat G^{(2,1)}_{R,-\o,\o}(k,p)$ we find
\be \frac{Z}{Z^{(1)}}\hat D_{-\o}(p)\hat G^{(2,1)}_{R,-\o,\o}(k,p)=A_-(p)\big[\hat G^{(2)}_{R,\o}(k)-\hat G^{(2)}_{R,\o}(k+p)\big].\label{c.34}\ee
By plugging this equation into \eqref{C.33}, taking Fourier transform, and using the fact that $\int \hat F_{\o,\e}(p)\,dp=0$, we get \eqref{eq:C.25}, as desired. 

In order to prove \eqref{eq:C.26} we proceed analogously. The required analogue of the Ward Identity \eqref{eq:C.14} is obtained by deriving \eqref{eq:C.12}
w.r.t. $\phi^-_{z,\o}$, $\phi^-_{v,-\o}$, $\phi^+_{u,-\o}$ and $\phi^+_{y,\o}$, then setting the external sources to zero, and finally taking the removed cut-offs limit. The result is
\bea &&\big[ \d_{\o,\bar\o}\big(\d(x-z)-\d({x-y})\big)+\d_{\o,-\bar\o}\big(\d(x-v)-\d(x-u)\big)\big]G^{(4)}_\o(y,u,v,z)\nn\\
&&-\frac{Z}{Z^{(1)}}D_{\bar\o}^xG^{(4,1)}_{R,\bar\o,\o}(x,y,u,v,z)-\frac{Z}{Z^{(1)}}I_{\bar\o,\o}(x,y,u,v,z)=0,\label{Cc.13}
\eea
where
\bea && \hskip-.3truecm G^{(4,1)}_{R,\o',\o}(x,y,u,v,z):=Z^{(1)}Z^4\lim_{L,N,a^{-1}\to\infty}\media{\r^{(1)}_{x,\o'};\psi^-_{y,\o}\psi^-_{u,-\o}\psi^+_{v,-\o}\psi^+_{z,\o}}_{L,N,a},\nn\\
&& \hskip-.3truecm I_{\bar\o,\o}(x,y,u,v,z):=Z^{(1)}Z^4\lim_{L,N,a^{-1}\to\infty}\media{\d T_{\bar\o}(x);\psi^-_{y,\o}\psi^-_{u,-\o}\psi^+_{v,-\o}\psi^+_{z,\o}}_{L,N,a}.\nn\eea
We now take Fourier transform, with the conventions
\bea
&&G^{(4,1)}_{R,\o',\o}(x,y,u,v,z)=\int_{\mathbb R^2}\frac{dp}{(2\pi)^2} \int_{\mathbb R^2} \frac{dk_1}{(2\pi)^2}\int_{\mathbb R^2} \frac{dk_2}{(2\pi)^2}\int_{\mathbb R^2} \frac{dq}{(2\pi)^2}\times\nn\\
&&\qquad \qquad\times e^{ipx -i(k_1+p)y
-ik_2u+i(k_2-q)v+i(k_1+q)z} \hat G^{(4,1)}_{R,\o',\o}(p,k_1,k_2,q)\;,\nn\\
&&G^{(4)}_{R,\o}(y,u,v,z) = \int_{\mathbb R^2} \frac{dk_1}{(2\pi)^2}\int_{\mathbb R^2} \frac{dk_2}{(2\pi)^2}\int_{\mathbb R^2} \frac{dq}{(2\pi)^2}\times\\
&&\qquad\qquad  \times
e^{-ik_1y-ik_2u+i(k_2-q)v+i(k_1+q)z}\hat G^{(4)}_{R,\o}(k_1,k_2,q)\;,\nn\eea
and similarly for $I_{\o',\o}$. If we combine the two equations obtained from \eqref{Cc.13} by setting $\bar\o=\o$ and $\bar\o=-\o$,
and we use the fact that the anomaly term $\hat I_{\o',\o}$ has an expression completely analogous to \eqref{c.21}, namely
\be \hat I_{\o',\o}(p,k_1,k_2,q)=-\frac{\l_\infty}{8\p}D_{-\o'}(p)\hat v_0(p)\hat G^{(4,1)}_{-\o',\o}(p,k_1,k_2,q),\ee
we get
\bea && \frac{Z}{Z^{(1)}}D_{-\o}(p)\hat G^{(4,1)}_{R,-\o,\o}(p,k_1,k_2,q)\label{c.40}\\
&&\qquad =\, A_-(p)\big[\hat G^{(4)}_{R,\o}(k_1,k_2,q)-\hat G^{(4)}_{R,\o}(k_1+p,k_2,q)\big]\nn\\
&&\qquad +\ A_+(p)\big[\hat G^{(4)}_{R,\o}(k_1+p,k_2-p,q-p)-\hat G^{(4)}_{R,\o}(k_1+p,k_2,q-p)\big],\nn\eea
which is the analogue of \eqref{c.34}.

In addition to this Ward Identity, we need the analogue of \eqref{C.33}, that is the Schwinger-Dyson equation for the four-point function, which reads
\bea &&  \hat G_{R,\o}^{(4)}(k_1,k_2,q)=\frac{Z}{D_\o(k_1)}\Big[(2\p)^2\d(q)\hat G^{(2)}_{R,-\o}(k_2)\nn\\
&&\qquad\qquad +\frac{\l_\infty}{Z^{(1)}}\int_{\mathbb R^2}\frac{dp}{(2\p)^2}\hat v_0(p)\hat G^{(4,1)}_{R,-\o,\o}(p,k_1,k_2,q)\Big].\label{eq:C.41}\eea
If we now plug \eqref{c.40} into \eqref{eq:C.41} we finally obtain \eqref{eq:C.26}, as desired.

\bigskip

{\bf Acknowledgements}. We thank K. Gawedzki for several enlightening discussions on the connections between 
the two-dimensional massless Gaussian field and SLE processes. The main part of this work was completed during 
a sabbatical visit of A.G. to the Mathematics Department of the University Lyon 1, supported by the
A*MIDEX project Hypathie (n. ANR-11-IDEX-0001-02) funded by the ``Investissements d'Avenir''
25 French Government program, managed by the French National Research Agency (ANR), and by a CNRS visiting research grant, which are gratefully acknowledged. F.T. was partially supported by the CNRS PICS grant ``Interfaces al\'eatoires discs\`etes et dynamiques de Glauber''.


\begin{thebibliography}{999999}

\bibitem{A} I. Affleck: {\it Universal term in the Free Energy at a Critical Point and the Conformal anomaly}, Phys. Rev.
Lett. {\bf 56}, 746-748 (1986).

\bibitem{Alet}
F. Alet et al: {\it Classical dimers with aligning interactions on the square lattice},
Phys. Rev. E {\bf 74}, 041124 (2006).

\bibitem{Ba_8V} R. J. Baxter: {\it Partition Function of the Eight-Vertex Lattice Model},
Ann. Phys. {\bf 70}, 193-228 (1972).

\bibitem{Ba_book} R. J. Baxter: {\it Exactly solved models in statistical mechanics}, Academic Press, Inc.
London (1989).

\bibitem{BFM1} G. Benfatto, P. Falco, V. Mastropietro: {\it  
Massless Sine-Gordon and Massive Thirring Models:  proof of Coleman's equivalence}, Comm. Math. Phys. {\bf 285}, 713-762 (2009).

\bibitem{BFM} G. Benfatto, P. Falco, V. Mastropietro: {\it Extended Scaling Relations for Planar Lattice Models},
Comm. Math. Phys. {\bf 292}, 569-605 (2009).

\bibitem{BM02} G. Benfatto, V. Mastropietro: {\it On the density-density critical indices in interacting 
Fermi systems}, Comm. Math. Phys. {\bf 231}, 97-134 (2002).

\bibitem{BM05} G. Benfatto, V. Mastropietro: {\it Ward identities and chiral anomaly in the Luttinger liquid}, Comm. Math. Phys. {\bf 258}, 609-655 (2005).

\bibitem{BMun} G. Benfatto, V. Mastropietro: {\it 
Universality Relations in Non-solvable Quantum Spin
Chains}, J. Stat. Phys. {\bf 138}, 1084-1108 (2010).

\bibitem{BMdr} G. Benfatto, V. Mastropietro: {\it Drude weight in non solvable quantum spin chains}, J. Stat. Phys. {\bf 143}, 251-260 (2011).

\bibitem{BCN} H. W. J. Bl\"ote, J. L. Cardy, M. P. Nightingale: {\it Conformal invariance, the Central Charge and Universal
finite size Amplitudes at criticality}, Phys. Rev. Lett. {\bf 56}, 742-745 (1986).

\bibitem{Ca} J. L. Cardy: {\it Discrete Holomorphicity at Two-Dimensional Critical Points}, J. Stat. Phys. {\bf 137}, 814-824 (2009).

\bibitem{CaSLE} J. L. Cardy: {\it SLE($\k,\r$) and conformal field theory}, arXiv:0412033.

\bibitem{CHI} D. Chelkak, C. Hongler, K. Izyurov: {\it Conformal invariance of spin correlations in the planar
Ising model}, Ann. Math. {\bf 181}, 1087-1138 (2015).

\bibitem{C} S. Coleman: {\it Quantum Sine-Gordon equation as the massive Thirring model},
Phys. Rev. D {\bf 11}, 2088-2097 (1975).

\bibitem{dN81} M. P. M. den Nijs: {\it Derivation of extended scaling relations between critical ex-
ponents in two dimensional models from the one dimensional Luttinger model}, 
Phys. Rev. B {\bf 23}, 6111-6125 (1981).

\bibitem{D} J. Dimock: {\it Bosonization of massive fermions},
Comm. Math. Phys. {\bf 198}, 247-281 (1998).

\bibitem{Du} J. Dubedat: {\it SLE and the free field: Partition functions and couplings}, J. Amer. Math. Soc. {\bf 22}, 995-1054  (2009). 

\bibitem{Du_dimers} J. Dubedat: {\it Dimers and families of Cauchy-Riemann operators I},
J. Amer. Math. Soc. {\bf 28}, 1063-1167  (2015).

\bibitem{Du_bo} J. Dubedat: {\it  Exact bosonization of the Ising model}, arXiv:1112.4399

\bibitem{DuG} J. Dubedat, R. Gheissari: {\it Asymptotics of height change on toroidal Temperleyan dimer models}, J. Stat. Phys. {\bf 159}, 75-100 (2015).

\bibitem{Fa6V} P. Falco: {\it Arrow-arrow correlations for the six-vertex model}, Phys. Rev. E {\bf 88},
030103(R) (2013).

\bibitem{GeM} G. Gentile, V. Mastropietro: {\it Renormalization group for one-dimensional fermions. A review on mathematical results},
Phys. Rep. {\bf 352}, 273-438 (2001).

\bibitem{GMcc} A. Giuliani, V. Mastropietro: {\it Universal Finite Size Corrections and the Central Charge
in Non-solvable Ising Models}, Commun. Math. Phys. {\bf 324}, 179-214 (2013).

\bibitem{GMTlungo} A. Giuliani, V. Mastropietro, F. Toninelli: {\it  Height fluctuations in interacting dimers},
Ann. Inst. H. Poincar\'e Probab. Statist., in press (arXiv:1406.7710)

\bibitem{GMTcorto} A. Giuliani, V. Mastropietro, F. Toninelli: {\it  Height fluctuations in non-integrable classical dimers},
Europhys. Lett. {\bf 109}, 60004 (2015). 

\bibitem{GR} I. S. Gradshteyn, I. M. Ryzhik: {\it Table of integrals, series and products}, 7th edition, Academic Press 2007.

\bibitem{H} F. D. M. Haldane: {\it General Relation of Correlation Exponents and Spectral Properties of One-Dimensional Fermi Systems: Application to the Anisotropic $S=1/2$ Heisenberg Chain}, 
Phys. Rev. Lett. {\bf 45}, 1358-1362 (1980).

\bibitem{Ha2} F. D. M. Haldane: {\it `Luttinger liquid theory' of one-dimensional quantum fluids. I. Properties of the Luttinger model and their extension to the general 1D interacting spinless Fermi gas}, F.D.M.: J. Phys. C {\bf 14}, 2585-2609 (1981).

\bibitem{HP} O. J. Heilmann, E. Praestgaard: {\it Crystalline ordering in lattice models
of hard rods with nearest neighbor attraction}, Chem. Phys. {\bf 24}, 119-123
(1977).

\bibitem{HS} C. Hongler, S. Smirnov: {\it The energy density in the planar Ising model}, Acta Math {\bf 211}, 191-225 (2013).

\bibitem{HVK} C. Hongler, F. J. Viklund, K. Kyt\"ol\"a: {\it Lattice Representations of the Virasoro Algebra I: Discrete Gaussian Free Field}, arXiv:1307.4104.

\bibitem{J} K. Johnson: {\it Solution of the Equations for the GreenÕs Functions of a two Dimensional Relativistic
Field Theory}, Nuovo Cimento {\bf 20}, 773-790 (1961).

\bibitem{K77} L. P. Kadanoff: {\it Connections between the Critical Behavior of the Planar Model
and that of the Eight-Vertex Model}, Phys. Rev. Lett. {\bf 39}, 903-905 (1977).

\bibitem{KB79} L. P. Kadanoff, A. C.  Brown: {\it Correlation functions on the critical lines of the
Baxter and Ashkin-Teller models}, Ann. Phys. {\bf 121}, 318-345 (1979).

\bibitem{KW71} L. P. Kadanoff, F. J. Wegner: {\it Some Critical Properties of the Eight-Vertex Model},
Phys. Rev. B {\bf 4}, 3989-3993 (1971).

\bibitem{K1} P. W. Kasteleyn: {\it The statistics of dimers on a lattice: I. The number of dimer arrangements on a quadratic lattice}, Physica {\bf 27}, 1209-1225 (1961); and J. Math. Phys. {\bf 4}, 287-293 (1963).

\bibitem{Ke1} R. Kenyon: {\it Conformal Invariance of Domino Tiling}, Ann. Probab. {\bf 28}, 759-795 (2000).

\bibitem{Ke2} R. Kenyon: {\it Dominos and the Gaussian Free Field}, Ann. Probab. {\bf 29}, 1128-1137 (2001).

\bibitem{Ke} R. Kenyon: {\it Lectures on dimers},  IAS/Park City Mathematical Series {\bf 16}: Statistical Mechanics, AMS, 2009, available at arXiv:0910.3129.

\bibitem{KMSW} R.  Kenyon et al.: {\it The six-vertex model and Schramm-Loewner evolution},  arXiv:1605.06471.

\bibitem{KPW} R. W. Kenyon, J. G. Propp, and D. B. Wilson: {\it Trees and matchings}, Electron. J. Combin. {\bf 7}, Research Paper 25, 34 pp. (2000).

\bibitem{Kl} B. Klaiber: {\it The Thirring model}, in: Quantum theory and statistical physics, Vol X, A, A. O. Barut and W. F. Brittin, editors. London Gordon and Breach, 1968.

\bibitem{Law} G. F. Lawler: {\it Schramm-Loewner Evolution (SLE)}, IAS/Park City Mathematical Series {\bf 16}: Statistical Mechanics, AMS, 2009, available at 
arXiv:0712.3256

\bibitem{LSW} G. F. Lawler , O. Schramm, W. Werner: {\it Conformal Invariance Of Planar Loop-Erased Random Walks and Uniform Spanning Trees},
Ann. Prob. {\bf 32}, 939-995 (2004).
  
\bibitem{LP} A. Luther, I. Peschel: {\it Calculation of critical exponents in two dimensions from quantum field theory in one dimension}, Phys. Rev. B {\bf 12}, 3908-3917 (1975).

\bibitem{M07} V.Mastropietro: {\it Non perturbative Adler-Bardeen Theorem},
J. Math. Phys. {\bf 48} 22302-22334 (2007).

\bibitem{M} V. Mastropietro: {\it Ising models with four spin interaction at criticality}, Comm. Math. Phys. {\bf 244}, 595-642 (2004).

\bibitem{ML_lut} D. C. Mattis, E. H. Lieb: {\it Exact Solution of a Many-Fermion System and Its Associated Boson Field}, 
J. Math. Phys. {\bf 6}, 304-312 (1965).

\bibitem{MS1} J. Miller, S. Sheffield: {\it Imaginary Geometry I:  Interacting SLEs}, arXiv:1201.1496.

\bibitem{MS2} J. Miller, S. Sheffield: {\it Imaginary Geometry II: reversibility of SLE$_\kappa(\rho_1;\rho_2$) for $\kappa \in (0,4)$},
arXiv:1201.1497.

\bibitem{MS3} J. Miller, S. Sheffield: {\it Imaginary Geometry III: reversibility of SLE$_\kappa$ for $\kappa \in (4,8)$},
arXiv:1201.1498.

\bibitem{MS4} J. Miller, S. Sheffield: {\it Imaginary Geometry IV: interior rays, whole-plane reversibility, and space-filling trees}, arXiv:1302.4738.

\bibitem{Nei} B. Nienhuis: {\it Critical behavior of two-dimensional spin models and charge asymmetry in the Coulomb gas},
J. Stat. Phys. {\bf 34}, 731-761 (1984).

\bibitem{PLF} S. Papanikolaou, E. Luijten,  E. Fradkin: {\it Quantum criticality, lines of fixed points, and phase separation in doped two-dimensional quantum dimer models}, Phys.
Rev. B {\bf 76}, 134514 (2007).

\bibitem{PB81} A. M. M. Pruisken, A. C. Brown: {\it Universality for the critical lines of the eight
vertex, Ashkin-Teller and Gaussian models}, Phys. Rev. B {\bf 23}, 1459-1468 (1981).
 
\bibitem{ScSh} O. Schramm, S. Sheffield: {\it Contour lines of the two-dimensional discrete Gaussian free field},
Acta Math {\bf 202}, 21-137 (2009). 

\bibitem{Sm10}  S. Smirnov: {\it Conformal invariance in random cluster models. I. Holomorphic fermions in the Ising model},
Ann. Math. {\bf 172}, 1435-1467 (2010).

\bibitem{Sm_p} S. Smirnov: {\it Critical percolation in the plane: conformal invariance, Cardy's formula, scaling limits percolation}, C. R. Acad. Sci. I - Math. {\bf 333}, 239-244 (2001).

\bibitem{Th} W. Thirring: {\it A soluble relativistic field theory}, Ann. Phys. {\bf 3}, 91-112 (1958).

\bibitem{vB} H. van Beijeren: {\it Exactly Solvable Model for the Roughening Transition of a Crystal Surface}, Phys. Rev. Lett. {\bf 38}, 993-996 (1977).

\bibitem{We} W. Werner: {\it Lectures on two-dimensional critical percolation},  IAS/Park City Mathematical Series {\bf 16}: Statistical Mechanics, AMS, 2009, available at arXiv:0710.0856.

\end{thebibliography}
\end{document}